\theoremstyle{plain}
\newtheorem{thm}{Theorem}[section]
\newtheorem{cor}[thm]{Corollary}
\newtheorem{lem}[thm]{Lemma}
\newtheorem{obs}[thm]{Observation}
\newtheorem{clm}{Claim}[thm]
\theoremstyle{definition}
\newtheorem{defn}[thm]{Definition}
\newenvironment{claimproof}[1][\unskip]{\noindent {\emph{Proof of Claim #1.\space}}}{\hfill$\triangleleft$\smallskip}
\newcommand{\wei}{\mathfrak{w}}
\newcommand{\irr}{\operatorname{irr}}
\newcommand{\pfour}[1]{\mathsf{Adj}_{P_4}(#1)}
\newcommand{\bolda}{\mathbf{a}}
\newcommand{\boldF}{\mathbf{F}}
\newcommand{\boldR}{\mathbf{R}}
\newcommand{\boldS}{\mathbf{S}}
\newcommand{\boldb}{\mathbf{b}}
\newcommand{\boldu}{\mathbf{u}}
\newcommand{\boldv}{\mathbf{v}}
\newcommand{\boldx}{\mathbf{x}}
\newcommand{\boldy}{\mathbf{y}}
\newcommand{\boldf}{\mathbf{f}}
\newcommand{\boldg}{\mathbf{g}}
\newcommand{\N}{\mathbb{N}}
\newcommand{\bigO}{\mathcal{O}}
\newcommand{\calS}{\mathcal{S}}
\newcommand{\calT}{\mathcal{T}}
\newcommand{\calX}{\mathcal{X}}
\newcommand{\calP}{\mathcal{P}}
\newcommand{\calG}{\mathcal{G}}
\newcommand{\calH}{\mathcal{H}}
\newcommand{\calJ}{\mathcal{J}}
\newcommand{\calI}{\mathcal{I}}
\newcommand{\calR}{\mathcal{R}}
\newcommand{\calHplus}{\mathcal{H}^+}
\newcommand{\tw}{\mathrm{tw}}
\newcommand{\pw}{\mathrm{pw}}
\newcommand{\rom}[1]{%
	\textup{\uppercase\expandafter{\romannumeral#1}}%
}
\newcommand{\NEQ}{\mathrm{NEQ}}
\newcommand{\OR}{\mathrm{OR}}
\newcommand{\dist}{\mathrm{dist}}
\newcommand{\arity}{\mathrm{ar}}
\newcommand{\con}{;\!}
\newcommand{\sharpP}{\#$\mathrm{P}$}
\newcommand{\CSP}{\textsc{CSP}}
\newcommand{\Hom}[1]{\#\mathrm{Hom}(#1)}
\newcommand{\LHom}[1]{\#\mathrm{LHom}(#1)}
\renewcommand{\hom}[2]{\mathrm{hom}\bigl(#1 \to #2\bigr)}
\mathchardef\hyph="2D
\newcommand{\abs}[1]{|#1|}
\renewcommand{\epsilon}{\varepsilon}
\newcommand{\from}{\colon}
\newenvironment{myitemize}
{ \begin{itemize}
		\setlength{\itemsep}{0pt}
		\setlength{\parskip}{0pt}
		\setlength{\parsep}{0pt}     }
	{ \end{itemize}                  } 
\newenvironment{myenumerate}
{ \begin{enumerate}
		\setlength{\itemsep}{0pt}
		\setlength{\parskip}{0pt}
		\setlength{\parsep}{0pt}     }
	{ \end{enumerate} 				 }
\newcommand{\executeiffilenewer}[3]{%
\ifnum\pdfstrcmp{\pdffilemoddate{#1}}%
{\pdffilemoddate{#2}}>0%
{\immediate\write18{#3}}\fi%
} 
\newcommand{%
\executeiffilenewer{.svg}{.pdf}%
{inkscape -z -D --file=.svg %
--export-pdf=.pdf --export-latex}%
{\input{.pdf_tex}}}[1]{%
\executeiffilenewer{#1.svg}{#1.pdf}%
{inkscape -z -D --file=#1.svg %
--export-pdf=#1.pdf --export-latex}%
{\input{#1.pdf_tex}}}%
\newcommand{\svg}[2]{\def\svgwidth{#1}%
\executeiffilenewer{#2.svg}{#2.pdf}%
{inkscape -z -D --file=#2.svg %
--export-pdf=#2.pdf --export-latex}%
{\input{#2.pdf_tex}}}
\begin{document}
\title{Counting list homomorphisms from graphs of bounded treewidth: tight complexity bounds}

\author{
Jacob Focke\thanks{CISPA Helmholtz Center for Information Security. } \and 
D\'{a}niel Marx\thanks{CISPA Helmholtz Center for Information Security. Research supported by the European Research Council (ERC) consolidator grant No.~725978 SYSTEMATICGRAPH.} \and
Paweł Rzążewski\thanks{Warsaw University of Technology, Faculty of Mathematics and Information Science and University of Warsaw, Institute of Informatics, \texttt{p.rzazewski@mini.pw.edu.pl}. Supported by the Polish National Science Centre grant no. 2018/31/D/ST6/00062.}
}
\date{}

\begin{titlepage}
\def\thepage{}
\thispagestyle{empty}
\maketitle
\begin{abstract}
The goal of this work is to give precise bounds on the counting complexity of a family of generalized coloring problems (list homomorphisms) on bounded-treewidth graphs.
Given graphs $G$, $H$, and lists $L(v)\subseteq V(H)$ for every $v\in V(G)$, a {\em list homomorphism} is a function $f:V(G)\to V(H)$ that preserves the edges (i.e., $uv\in E(G)$ implies $f(u)f(v)\in E(H)$) and respects the lists (i.e., $f(v)\in L(v))$. 
Standard techniques show that if $G$ is given with a tree decomposition of width $t$, then the number of list homomorphisms can be counted in time $|V(H)|^t\cdot n^{\mathcal{O}(1)}$. Our main result is determining, for every fixed graph $H$, how much the base $|V(H)|$ in the running time can be improved. For a connected graph $H$ we define $\operatorname{irr}(H)$ in the following way: if $H$ has a loop or is nonbipartite, then $\operatorname{irr}(H)$ is the maximum size of a set $S\subseteq V(H)$ where any two vertices have different neighborhoods; if $H$ is bipartite, then $\operatorname{irr}(H)$ is the maximum size of such a set that is fully in one of the bipartition classes. For disconnected $H$, we define $\operatorname{irr}(H)$ as the maximum of $\operatorname{irr}(C)$ over every connected component $C$ of $H$.
It follows from earlier results that if $\operatorname{irr}(H)=1$, then the problem of counting list homomorphisms to $H$ is polynomial-time solvable, and otherwise it is \#P-hard.
We show that, for every fixed graph $H$,  the number of list homomorphisms from $(G,L)$ to $H$
\begin{itemize}
\item  can be counted in time $\operatorname{irr}(H)^t\cdot n^{\mathcal{O}(1)}$ if a tree decomposition of $G$ having width at most $t$ is given in the input, and
\item  given that $\operatorname{irr}(H)\ge 2$, cannot be counted in time $(\operatorname{irr}(H)-\epsilon)^t\cdot n^{\mathcal{O}(1)}$ for any $\epsilon>0$, even if a tree decomposition of $G$ having width at most $t$ is given in the input, unless the Counting Strong Exponential-Time Hypothesis (\#SETH) fails.
\end{itemize}
Thereby we give a precise and complete complexity classification featuring matching upper and lower bounds for all target graphs with or without loops.
\end{abstract}

\end{titlepage}
	
	\section{Introduction}
	Many of the NP-hard problems studied in the literature are known to be
polynomial-time solvable when restricted to graphs of bounded
treewidth. In fact, the majority of these problems can be solved in
time $f(t)\cdot n^{\bigO(1)}$ if a tree decomposition of width $t$ is
given in the input, that is, they are fixed-parameter tractable (FPT)
parameterized by treewidth. As algorithms working on tree
decompositions are useful building blocks in many types of FPT results
and approximation schemes, determining the optimal 
dependence $f(t)$ on the width of the decomposition received significant attention.

On the upper bound side, new algorithmic techniques (such as Fast Subset Convolution, Cut \& Count, representative sets) were developed to obtain improved algorithms. For lower bounds, conditional complexity results were given ruling out certain forms of running times. Lokshtanov, Marx, and Saurabh~\cite{DBLP:journals/talg/LokshtanovMS18} considered problems that are known to be solvable in time $c^t\cdot n^{\bigO(1)}$ if a tree decomposition of width $t$ is given in the input, and showed that these algorithms are essentially optimal, as no algorithm with running time $(c-\epsilon)^t\cdot n^{\bigO(1)}$ can exist for any $\epsilon>0$, assuming the Strong Exponential-Time Hypothesis (SETH). In particular, for \textsc{Vertex Coloring} with $c$ colors, the textbook $c^t\cdot n^{\bigO(1)}$ algorithm based on dynamic programming cannot be improved to $(c-\epsilon)^t\cdot n^{\bigO(1)}$. By now, there is a growing collection of tight lower bounds of this form in the literature \cite{DBLP:journals/siamcomp/OkrasaR21,DBLP:conf/esa/OkrasaPR20,DBLP:conf/stacs/EgriMR18,DBLP:conf/soda/CurticapeanLN18,
DBLP:conf/iwpec/BorradaileL16,DBLP:journals/dam/KatsikarelisLP19, DBLP:conf/icalp/MarxSS21, DBLP:conf/soda/CurticapeanM16}. 

Vertex coloring with $c$ colors can be seen as a homomorphism problem. Given graphs $G$ and $H$, a {\em homomorphism} is a mapping $f:V(G)\to V(H)$ that preserves the edges of $G$, that is, $uv\in E(G)$ implies $f(u)f(v)\in E(H)$. Let us observe that a $c$-coloring of $G$ can be seen equivalently as a homomorphism from $G$ to $K_c$, the complete graph with $c$ vertices. Thus generalized coloring problems defined by homomorphisms to a fixed graph $H$ were intensively studied in the combinatorics literature and subsequently from the viewpoint of computational complexity \cite{DBLP:journals/jct/HellN90,DBLP:journals/csr/HellN21,DBLP:journals/csr/HellN08,DBLP:books/daglib/0013017,DBLP:journals/dam/MaurerSW81,DBLP:journals/iandc/MaurerSW81b,Hahn1997}. The homomorphism problem can be generalized from graphs to arbitrary relational structures, giving a very direct connection to constraint satisfaction problems (CSPs), which are often described using the terminology of homomorphisms \cite{DBLP:journals/csr/HellN21,DBLP:books/daglib/0013017,DBLP:journals/csr/HellN08,DBLP:journals/siamcomp/FederV98}.

Introducing lists of allowed images gives a more robust variant for homomorphism problems: formally, given graphs $G$ and $H$, a {\em list assignment} is a function $L:V(G)\to 2^{V(H)}$. Then a list homomorphism from $(G,L)$ to $H$ is a homomorphism $f:V(G)\to V(H)$ that additionally respects the lists, that is, $f(v)\in L(v)$ for every $v\in V(G)$. Studying the list version of a homomorphism problem can be seen as analogous to studying the conservative version of CSP, where every unary constraint is allowed \cite{DBLP:journals/jcss/Bulatov16,DBLP:journals/tocl/Bulatov11,DBLP:conf/lics/Barto11}. The study of conservative CSP often served as a starting point before more general investigations, motivating the exploration of the list version of homomorphism problems.

The polynomial-time solvable cases of (list) homomorphism is well understood. Ne\v{s}et\v{r}il and Hell \cite{DBLP:journals/jct/HellN90} showed that the non-list problem is polynomial-time solvable if $H$ is bipartite or has a loop, and NP-hard when restricted to any other $H$. The complexity of the list version was characterized by Feder, Hell, and Huang \cite{DBLP:journals/jgt/FederHH03}: it is polynomial-time solvable if $H$ is a so-called bi-arc graph, and NP-hard for every other fixed $H$. But are there nontrivial algorithmic ideas that can help us obtain improved running times for the NP-hard cases? 
Similarly to coloring, the problem of finding a (list) homomorphism to $H$ can be solved in time $|V(H)|^t\cdot n^{\bigO(1)}$ if $G$ is given with a tree decomposition of width $t$. Can this straightforward dynamic programming algorithm be improved for a given fixed $H$, and if so, by how much?

This question was resolved first by Egri, Marx, and  Rz\k{a}\.zewski~\cite{DBLP:conf/stacs/EgriMR18} for the list homomorphism problem in the special case when $H$ is reflexive (that is, every vertex has a loop), which was extended by Okrasa, Piecyk, and  Rz\k{a}\.zewski~\cite{DBLP:conf/esa/OkrasaPR20} to every $H$ (where each vertex may or may not have a loop). It turns out that there are a couple of algorithmic ideas that can be used to obtain $c^t\cdot n^{\bigO(1)}$ time algorithms with $c<|V(H)|$. In particular, there are delicate notions of decompositions of $H$, such that if they are present, then the problem can be reduced in a nontrivial way to multiple instances with some $H'$ having fewer vertices than $H$. To formalize the optimality of these ideas, a combinatorial parameter $i^*(H)$ was introduced, and it was shown that the problem can be solved in time $i^*(H)^t\cdot n^{\bigO(1)}$, but there is no $(i^*(H)-\epsilon)^t\cdot n^{\bigO(1)}$ time algorithm for any $\epsilon>0$, assuming the SETH. This means that the technical ideas behind the $i^*(H)^t\cdot n^{\bigO(1)}$ time algorithm already capture all the possible algorithmic insights that can be exploited when solving the problem on a given tree decomposition. A similar tight result was obtained by Okrasa and  Rz\k{a}\.zewski~\cite{DBLP:journals/siamcomp/OkrasaR21} for the non-list version of the problem. In that version, a different set of algorithmic ideas become relevant (reduction to a homomorphic core and factorizations), and the optimality of these ideas were proved assuming not only the SETH, but also two long-standing conjectures from algebraic graph theory.

A well-known phenomenon in computational complexity is that even if it is possible to find a solution efficiently in a combinatorial problem, counting the number of solutions can be hard. The most notable example is the perfect matching problem in bipartite graphs: finding a perfect matching is polynomial-time solvable, but counting the number of perfect matchings is \#P-hard by the seminal result of Valiant~\cite{DBLP:journals/siamcomp/Valiant79}. 
There are algorithmic ideas that can be generalized from decision to counting (for example, simple forms of dynamic programming), but others may not be. In particular, arguments of the form ``if there is a solution, then there is a solution with property $P$, hence we only need to look for solutions with property $P$'' do not immediately generalize to counting, as they would fail to count the potential solutions that do not have property $P$.
Unfortunately, the $i^*(H)^t\cdot n^{\bigO(1)}$ time algorithm of Okrasa, Piecyk,  and  Rz\k{a}\.zewski~\cite{DBLP:conf/esa/OkrasaPR20} for list homomorphism heavily relies on such arguments. Careful observation shows that only a very limited set of algorithmic ideas remain relevant for the counting problem:
\begin{itemize}
\item \textbf{Connected components.} We may assume that $G$ is connected: otherwise the number of solutions is the product of the number of solutions for each component of $G$. Furthermore, if $G$ is connected, then every vertex of $G$ is mapped into the same connected component of $H$. Thus we may assume that $H$ is connected as well: the number of solutions is the sum of the number of solutions when restricted to each component of $H$.
\item \textbf{Identical neighborhoods.} If two vertices $u$ and $v$ have identical neighborhoods, then we may pretend that only one of them, say $u$, is present in $H$. Then we need to solve a weighted problem that takes into account that whenever a vertex of $G$ is mapped to $u$, then we can actually choose from two different copies of $u$. The extension to this weighted problem can be easily implemented in the dynamic programming algorithm working over a tree decomposition. Therefore, the number of possible images of a vertex can be restricted to the maximum size of a set $S\subseteq V(G)$ where the neighborhoods are pairwise different. We call such a set $S$ an {\em irredundant} set.
  \item \textbf{Bipartite classes.} If $H$ is bipartite (and has no loops), then $G$ has to be bipartite as well. This means that each bipartite class of $G$ is mapped to one of the two bipartite classes of $H$, resulting in two cases that we can treat separately. In each case, the possible images of a vertex $v$ of $G$ are restricted to a bipartite class of $H$. Now the maximum number of possible images of $v$ can be bounded by the maximum size of an irredundant set $S$ that is  contained in one bipartite class of $H$. 
  \end{itemize}
  Our main negative result shows that these are all the algorithmic ideas that can be exploited in the list homomorphism problem for a fixed $H$. 
  To state this formally, we define the following graph parameter.
  \begin{defn}\label{def:irr}
    Given a graph $H$, we say that $S\subseteq V(H)$ is {\em irredundant} if any two vertices have different neighborhoods. We say that graph $H$ is irredundant if $V(H)$ is irredundant. For a connected bipartite graph $H$, set $S$ is {\em one-sided} if it is fully contained in one of the bipartite classes.
    
For a connected graph $H$, we define $\irr(H)$ the following way. If $H$ has a loop or is nonbipartite, then $\irr(H)$ is the maximum size of an irredundant set; if $H$ is bipartite, then $\irr(H)$ is the maximum size of a one-sided irredundant set. For disconnected $H$, we define $\irr(H)$ as the maximum of $\irr(C)$ over every connected component $C$ of $H$.
\end{defn}
Note that if $v$ has a loop, then the neighborhood of $v$ includes $v$ itself. This has to be carefully taken into account when interpreting different neighborhoods. For example, in a reflexive clique (where every vertex has a loop) every vertex has the same neighborhood.

For a fixed graph $H$ (possibly with loops), we denote by $\LHom{H}$ the problem of counting the number of list homomorphisms from the given $(G,L)$ to $H$. 
Our main result shows for $\LHom{H}$ that $\irr(H)$ is indeed the best possible base of the exponent that can appear in the running time. We obtain the lower bound under the \#SETH, the counting version of the SETH, which is the natural variant of the assumption for lower bounds on counting problems (see \cite{DellHusfeldtMarxTaslamanWahlen}). Note that the SETH implies the \#SETH, thus proving a result under the latter assumption make it somewhat stronger.
%

\begin{thm} \label{thm:mainthm}
	Let $H$ be a graph that has a component which is not a biclique nor a reflexive clique.
	On $n$-vertex instances with treewidth $\tw$, the $\LHom{H}$ problem
	\begin{myenumerate}
		\item can be solved in time $\irr(H)^\tw \cdot n^{\bigO(1)}$, provided that a tree decomposition of width $\tw$ is given as part of the input.
		\item cannot be solved in time $(\irr(H)-\epsilon)^\tw \cdot n^{\bigO(1)}$, for any $\epsilon >0$, even if a tree decomposition of width $\tw$ is given as part of the input, unless the \#SETH fails.
	\end{myenumerate}
\end{thm}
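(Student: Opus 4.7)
The plan is to separate the two claims: part (1) follows from a careful but essentially standard refinement of the textbook dynamic programming on tree decompositions, while part (2) requires a gadget-based reduction from a \#SAT variant whose hardness under \#SETH is well-established.

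For part (1), I would first use the connected-components reduction: if $G$ has components $G_1,\dots,G_r$, the count factors over the $G_i$, so we may assume $G$ is connected. Then, since a connected $G$ maps into a single component of $H$, we may sum over the components of $H$ and reduce to the case where both $G$ and $H$ are connected. Next I would collapse vertices of $H$ with identical (open) neighborhoods: if $H' = H/{\sim}$ is the quotient by the twin relation and $w(u)$ denotes the size of the twin class represented by $u$, then the number of list homomorphisms from $(G,L)$ to $H$ equals a weighted count of list homomorphisms into $H'$, with the weight of a map $f$ being $\prod_{v\in V(G)} w(f(v))$. This weighted count plugs directly into the standard bag-by-bag DP over the tree decomposition, giving running time $|V(H')|^{\tw}\cdot n^{\bigO(1)}$; when $H$ is nonbipartite or looped, $|V(H')| = \irr(H)$ and we are done. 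For bipartite $H$, a valid map exists only if $G$ is bipartite, and since $G$ is connected there are only two ways (up to swapping sides) to assign components of $G$ to the bipartition of $H$; we enumerate both and in each case the candidate images at every vertex lie in a single bipartition class, whose twin-quotient has at most $\irr(H)$ vertices.

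For part (2), the approach I would follow is the pathwidth-based framework used for prior tight bounds of this type: reduce from a \#SAT-style benchmark whose $(c-\eps)^n$ refutation is the natural \#SETH consequence (for example, a weighted $c$-ary \#CSP), where $c=\irr(H)$. Given a \#SAT instance with $n$ variables, I would partition the variables into blocks of $k$ and construct an instance of $\LHom{H}$ whose pathwidth is about $n/\log_2 c$ times $\log_2 c = n$, so that each bag encodes an assignment to one block by mapping $k$ designated vertices to elements of a fixed irredundant set $I\subseteq V(H)$ of size $c$. The construction needs three families of gadgets: (i) \emph{wire} gadgets that propagate the state of such a $k$-tuple along a path, implemented by placing an appropriate edge structure between consecutive bag-vertices and leveraging that distinct elements of $I$ have distinct neighborhoods (so the images of pairs are distinguishable); (ii) \emph{clause} gadgets that realize the SAT clauses as list-homomorphism constraints on single bits of the encoded state; and (iii) a bipartite-case adaptation where every constructed vertex is equipped with a list forcing it into one side of $H$, so that the one-sided $\irr(H)$ is the relevant quantity. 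Keeping track of multiplicities, each satisfying assignment of the \#SAT instance should contribute a fixed nonzero rational, known in advance, so the hom count determines the satisfying-assignment count.

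The main obstacle, as is typical for these counting lower bounds, is constructing the clause and equality gadgets for an arbitrary $H$: for \emph{decision} one only needs realizability of certain relations, but for \emph{counting} each accepted assignment must be realized with a prescribed, invertible multiplicity. I expect to build a $c\times c$ matrix $M_H$ whose rows/columns are indexed by the irredundant set $I$ and whose entries count list homomorphisms of a small auxiliary graph between the corresponding pair of vertices; the key lemma is that by choosing $I$ to be irredundant in the sense of Definition~\ref{def:irr} one can force $M_H$ to have full rank, which then allows linear combinations of gadgets to realize arbitrary relations (in particular equality and the NEQ/OR relations needed to propagate and evaluate CNF clauses). The loopless-bipartite case will likely require an extra layer because only one-sided irredundant sets are available, and the absence of loops forces an alternating parity constraint along any propagation path; handling this carefully, plus verifying that the constructed tree decomposition really achieves width essentially $k$ with only $n^{\bigO(1)}$ overhead, is where most of the technical work is expected to lie.
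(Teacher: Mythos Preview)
Your treatment of part~(1) is correct and essentially identical to what the paper does (Theorems~\ref{thm:bipartitealgo} and~\ref{thm:algogeneral}): reduce to connected $G$ and $H$, collapse twin classes into weights, handle the bipartite case by fixing the side of each bipartition class, and run the standard DP.

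For part~(2), your overall architecture (reduce from a $q$-ary \#CSP with $q=\irr(H)$, build gadgets that simulate arbitrary constraints on a fixed irredundant set $S$, control pathwidth) is the right one and matches the paper. The genuine gap is the step you flag yourself as ``the key lemma'': you assert that one can choose a small auxiliary gadget so that the resulting $|S|\times |S|$ matrix $M_H$ of homomorphism counts has full rank, and that this suffices to realize arbitrary relations. Neither half of this is substantiated. First, irredundancy of $S$ only says that distinct elements have distinct neighborhoods; it gives no obvious handle on the rank of any particular count matrix, and for general $H$ there is no single ``small auxiliary graph'' that works uniformly. Second, even granting a full-rank binary matrix, what you obtain via interpolation is the ability to separate certain weighted sums, not the direct realization of an arbitrary $r$-ary relation as a gadget with controlled multiplicities; bridging that gap is nontrivial. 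In short, the sentence ``by choosing $I$ irredundant one can force $M_H$ to have full rank, which then allows linear combinations of gadgets to realize arbitrary relations'' is exactly the theorem that needs proving, and your proposal contains no mechanism for it.

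The paper's machinery for this step is quite different from a single rank argument. It first reduces to bipartite $H$ via the associated bipartite graph. It then introduces \emph{$(x,y,S)$-distinguishers} and \emph{$(x,y,S)$-forcers} (binary relations that separate $x$ from $y$ while covering all of $S$), and proves by a double induction on $|S|$ and on $\dist(\{x,y\},S\setminus\{x,y\})$ that a forcer is realizable for every pair $x,y\in S$ (Lemma~\ref{lem:Distinguisher}). Turning a distinguisher into a forcer uses a Dyer--Greenhill interpolation lemma with a \emph{coprimality} condition on the extension counts (Lemma~\ref{lem:primes}), not a rank condition. A separate structural lemma (Lemma~\ref{lem:p4connected}) shows that all induced $P_4$'s in a connected irredundant bipartite $H$ are ``connected'' in a suitable adjacency graph, which lets one transport forcers to a single fixed $P_4$. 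Once forcers exist for all pairs, they are combined into an ``indicator'' that encodes each element of $S$ by a string over $\{a,c\}$, and then the fact that every relation over $\{a,c\}$ is realizable on a $P_4$ (Lemma~\ref{lem:basicP4}) yields all relations over $S$ (Lemma~\ref{lem:allrelationsSac}). None of this is captured by positing a full-rank matrix; the inductive construction of forcers and the $P_4$-connectivity argument are the substantive ideas you are missing.
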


While the counting complexity of homomorphism problems is well-studied, we may argue that our lower bound brings a new level of understanding on the real hardness of the problem. It is easy to deduce from earlier results that if $H$ has a component that is neither a biclique nor a reflexive clique, then $H$ contains an induced subgraph $H'$ on at most 4 vertices such that $\LHom{H'}$ is \#P-hard. As we can use the lists to restrict the problem to $V(H')$, it follows that $\LHom{H}$ is \#P-hard as well. Dyer and Greenhill \cite{DyerG00:RSA} showed that for such an $H$ the homomorphism counting problem is \#P-hard even if no lists are allowed. Since we cannot use the lists to restrict the problem to $V(H')$, this proof is significantly more complicated. What all these proofs have in common is that they identify only one level of complexity: \#P-hardness. Intuitively, counting 10-colorings feels harder than counting 3-colorings, as we need to consider a larger number of possible values at each vertex. Saying that both problems are \#P-hard ignores this perceived difference of hardness. On the other hand, Theorem~\ref{thm:mainthm} establishes, in a formal way, a difference in complexity between $\LHom{K_{10}}$ and $\LHom{K_3}$: the best possible base is 10 in the former and 3 in the latter.

Typical \#P-hardness proofs use as the basis of the reduction a few known \#P-hard problems, such as counting independent sets or counting 3-colorings. Once it is established that the problem can express one of these hard problems, then the proof can stop, as it is irrelevant whether some even harder problem can also be expressed. On the other hand, the lower bound proof of Theorem~\ref{thm:mainthm} has to delve deeper into the complexity of the problem, as it should extract as much hardness from the problem as possible. So instead of trying to show that the problem can express a simple relation corresponding to e.g., independent set or 3-coloring, our main goal is to show that, in a formal sense, $\LHom{H}$ can express every relation over a domain of size $\irr(H)$. Going beyond the context of bounded-treewidth graphs, the main message of this paper is that the $\LHom{H}$ problem is sufficiently rich to express all such relations.

\subsection{Proof overview}
As described above, the algorithmic part of the main result is fairly simple: we need to do only a few changes to the standard dynamic programming algorithm to take into account connected components, vertices with identical neighborhoods, and bipartiteness. Therefore, we focus on the main steps of the lower bound proof in this section.

\paragraph{From SAT to CSP.} The \#SETH states that for every $\epsilon>0$, there is a $d\ge 1$ such that $n$-variable \#\textsc{$d$-SAT} (i.e., counting satisfying assignments of a Boolean formula where every clause has $d$ literals) cannot be solved in time $\bigO((2-\epsilon)^n)$. Therefore, we need to show that the hypothetical fast algorithm for $\LHom{H}$ would give an algorithm for \#\textsc{SAT} violating this lower bound. It will be convenient to introduce an intermediate problem. By \#\CSP($q$, $r$), we denote the problem of counting solutions in a CSP instance  with domain $[q]$ and each constraint having arity at most $r$. It follows from the work of Lampis~\cite{DBLP:journals/siamdm/Lampis20} that if for some $q\ge 2$ there is an $\epsilon>0$ such that \#\CSP($q$, $r$) can be solved in time $(q-\epsilon)^n\cdot (n+m)^{\bigO(1)}$ for every $r$, then the \#SETH fails. Therefore, to prove our main lower bound for $\LHom{H}$, we take a  \#\CSP($q$, $r$) instance with $q=\irr(H)$ and show that it can be reduced to $\LHom{H}$.

\paragraph{\boldmath Focusing on bipartite $H$.} Similarly to \cite{DBLP:conf/esa/OkrasaPR20}, first we prove the main result for bipartite $H$. Then it is a surprisingly simple task to raise the result to general $H$. Given a (not necessarily bipartite) graph $H$, we define the {\em associated} bipartite graph $H^*$ the following way (see Figure~\ref{fig:assoc}): for every vertex $v$ of $H$, there are are two vertices $v',v''$ in $H^*$ and we introduce edges such that, if $uv\in E(H)$, then $u'v'', u''v'\in E(H^*)$ (if $v$ has a loop in $H$, then $v'v''\in E(H^*)$). It can be shown that $\irr(H^*)=\irr(H)$ and the lower bound for $\LHom{H}$ can be obtained by a simple reduction from $\LHom{H^*}$. Thus in the following we can assume that $H$ is bipartite. Furthermore, we may assume that $H$ is irredundant, that is, any two vertices have distinct neighborhoods. It is easy to see that the lower bound for irredundant $H$ can be extended to the general case. 
\begin{figure}[t]
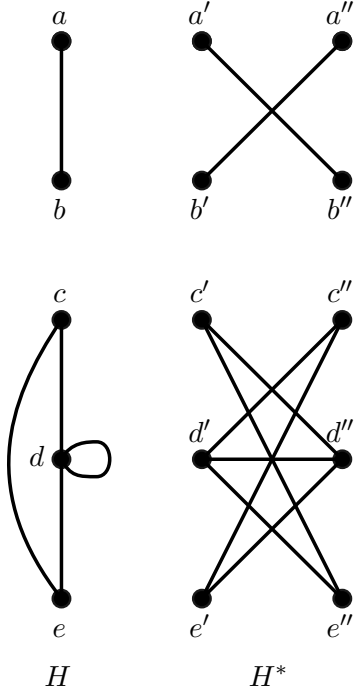

\begin{center}
\svg{0.3\linewidth}{assoc}
\end{center}
\caption{A graph $H$ and its associated bipartite graph $H^*$.}
\label{fig:assoc}
\end{figure}
  
\paragraph{New relations.} Input to $\LHom{H}$ can be seen as a CSP instance where the vertices are variables with domain $V(H)$ and each edge is a binary constraint restricting the combination of values appearing on two vertices. We can augment the problem by allowing a fixed finite set $\mathcal{R}=\{R_1,\dots, R_c\}$ of other relations in the instance. Of course, introducing new relations may make the problem harder. If for every $1 \le i \le c$, we can provide an appropriate reduction from the problem augmented with $\{R_1,\dots, R_i\}$ to the problem augmented only with $\{R_1,\dots, R_{i-1}\}$, then this chain of reductions shows that the problem augmented with $\mathcal{R}$ is not harder than the original $\LHom{H}$ problem, hence any lower bound proved for the augmented problem holds for $\LHom{H}$ as well. Therefore, we start with $\LHom{H}$ and try to introduce new relations $R_i$ one by one, in a way that does not make the problem any harder. We do this until we can introduce every $r$-ary relation over a domain $S$ of size $q=\irr(H)$. At this point, the lower bound for \#\CSP($q$, $r$) applies to our augmented problem, and hence to the original $\LHom{H}$ problem as well.

\paragraph{Gadgets and interpolation.} How can we show that introducing a new relation $R_i$ does not make the problem harder? This can be done by showing that each occurrence of the $R_i$ relation can be replaced by an appropriate gadget that, in a certain sense, simulates this relation. A {\em gadget} is a small instance where a subset $(x_1,\dots, x_r)$ of the vertices are distinguished as the {\em interface.} Now if we fix the images $f(x_1)=v_1$, $\dots$, $f(x_r)=v_r$, then, for each vector $(v_1,\dots,v_r)$, there is some number of ways the mapping $f$ can be extended to a list homomorphism from the gadget to $H$. If it so happens that for every $(v_1,\dots, v_r)\in R_i$ there is exactly one extension, and for every $(v_1,\dots, v_r)\not\in R_i$ there are zero extensions, then this gadget effectively expresses a constraint with the relation $R_i$ on the vertices $(x_1,\dots, x_r)$. Now we can replace every occurrence of $R_i$ with a copy of this gadget, showing that introducing $R_i$ does not make the problem harder.

However, it is rare that a gadget can express the new relation $R_i$ in such a clean way. Fortunately, the nature of counting problems allows us to use gadgets in a much more general setting. Suppose the number of possible extensions for every $(v_1,\dots, v_r)\in R_i$ is always an integer from a set $A$ (for example, $A=\{4,5,8,11\}$), while the number of extensions for $(v_1,\dots, v_r)\not\in R_i$ is always an integer from a set $B$ (for example, $B=\{3,7,9\}$). Suppose that every $a\in A$ and $b\in B$ are coprime\footnote{More precisely, we allow 0 to appear only in $B$, and allow 1 to appear only in $A$.} (that is, there is no prime $p$ dividing both an element of $A$ and an element of $B$); this condition holds in our example. Then a polynomial interpolation technique of Dyer and Greenhill~\cite{DyerG00:RSA} allows us to count those assignments where $(v_1,\dots, v_r)\in R_i$ is satisfied, effectively introducing a constraint with relation $R_i$.

A simple application of this technique allows us to express compositions of relations. Given two binary relations $R_1$ and $R_2$, their {\em composition}
is defined as $R_1\con R_2=\{(a,b) \mid \exists\,c\colon (a,c)\in R_1, (c,b)\in R_2\}$. Given two interface vertices $v_1$ and $v_2$, we can construct a gadget by introducing a new vertex $z$ and adding the constraint $R_1$ on $v_1$ and $z$, and adding the constraint $R_2$ on $z$ and $v_2$. Now if the interface vertices are assigned a pair of values from $R_1\con R_2$, then there is at least one extension to $z$; if they are assigned a pair of values not from $R_1\con R_2$, then there is no extension. Thus by the technique mentioned in the previous paragraph, we can express the relation $R_1\con R_2$.

\paragraph{Relations on a $P_4$.} The path on four vertices (a $P_4$) is the smallest example of a connected bipartite graph $H$ that is not a complete bipartite graph, and hence the  $\LHom{H}$ problem is \#P-hard. Intuitively, \#P-hardness should mean that whenever a $P_4$ appears in $H$, then we should be able to express arbitrarily complicated constraints when restricted to these vertices; however, this has not been stated explicitely in earlier work. We make this expectation formal by showing that if $(a-b-c-d)$ is a $P_4$ appearing in $H$, then any relation $R\subseteq \{a,c\}^r$ can be simulated with appropriate gadgets.  For this statement it is essential that we consider the counting problem, and an analogous statement should not hold for the decision version. To see this, observe that in the path $(a-b-c-d)$, the neighborhood of $c$ is a superset of the neighborhood of $a$. Thus if the list of a vertex contains both $a$ and $c$, then $a$ in the solution can always be replaced by $c$. This makes it impossible to simulate even very simple relations such as $R=\{aa,cc\}$ in the decision version since allowing $aa$ and $cc$ would automatically allow $ac$  and $ca$ as well.

\paragraph{Forcers.}
Forcers are crucial objects in our proof. Let $S\subseteq V(H)$, $x,y\in S$, and $\alpha,\beta\in V(H)$, where $S$, $\{x,y\}$, $\{\alpha,\beta\}$ are all one-sided sets. An $(x,y,S)$-forcer with respect to $(\alpha,\beta)$ is a binary relation $R\subseteq S\times \{\alpha,\beta\}$ that
\begin{itemize}
\item contains $(x,\alpha)$, but not $(x,\beta)$,
\item contains $(y,\beta)$, but not $(y,\alpha)$, and
\item for every $z\in S\setminus \{x,y\}$, contains at least one of $(z,\alpha)$ or $(z,\beta)$.
\end{itemize}
Intuitively, the relation moves $x$ and $y$ to $\alpha$ and $\beta$, respectively, while other values of $S$ can move to somewhere in $\{\alpha,\beta\}$.
For example, both of the following two relations are $(x,y,\{x,y,z,q,w\})$-forcers with respect to $(\alpha,\beta)$:
\begin{align*}
  R_1&=\{x\alpha,y\beta, z\alpha,z\beta,q\alpha,w\beta\},\\
  R_2&=\{x\alpha,y\beta, z\alpha,q\alpha,w\alpha\}.
\end{align*}

We will be particularly interested in forcers where $\alpha$ and $\beta$ are vertices $a$ and $c$ of a $P_4$ $a-b-c-d$. The following claim will be crucial for our proof: if we can show that, for every $x,y\in S$, there are good $(x,y,S)$-forcers with respect to $(a,c)$ of the same $P_4$, then we can realize any $r$-ary relation $R$ over $S$. Therefore, if we choose an $S$ with $|S|=\irr(H)=q$, then we can
reduce \#\CSP($q$, $r$) to our problem, which was our main goal. Let us sketch how to prove the claim by realizing an $r$-ary relation $R$ with the help of the forcers.
Let $s=|S|$. Intuitively, by applying all the $s(s-1)$ possible $(x,y,S)$-forcers on a vertex $v$, we can extract $s(s-1)$ bits of information about the value of $v$. These bits are sufficient to identify the value of $v$, as for any two different values $x\neq y$, one of the bits will be different. Therefore, we can translate the values of $r$ variables into $rs(s-1)$ bits and then we can implement $R$ by an $rs(s-1)$-ary relation over $\{a,c\}$. As $a$ and $c$ are on a $P_4$, we have already seen that such a relation can be realized.

\paragraph{The connected structure of $P_4$'s.} 
As we have seen in the previous paragraph, a crucial step in the proof is to realize an $(x,y,S)$-forcer for every $x,y\in S$. An important detail is that we need all these $(x,y,S)$-forcers to be with respect to the same $(a,c)$. But what happens if we have two $P_4$s $(a-b-c-d)$ and $(a'-b'-c'-d')$, can an $(x,y,S)$-forcer with respect to $(a',c')$ be turned into one with respect to $(a,c)$? We can show that if the two $P_4$s are ``adjacent'' in the sense that they share two vertices in the same bipartite class of $H$, then this is indeed possible. Even if the two $P_4$s are not adjacent, but are in the ``same connected component'' under this definition of adjacency, then multiple applications of this argument show that a forcer with respect to $(a',c')$ can be turned into a forcer with respect to $(a,c)$.  A graph-theoretic analysis shows that if $H$ is connected and irredundant, then the $P_4$s have only a single connected component. Therefore, it does not matter in the definition of good forcers exactly where $a$ and $c$ are, as long as they are in a $P_4$. We can show that even this last condition is unnecessary: a forcer with respect to any two distinct $(\alpha,\beta)$ that are on the same side can be turned into a forcer with respect to $(a,c)$. 

\paragraph{The inductive proof.}
We prove by induction that there is a good $(x,y,S)$-forcer for every $S\subseteq V(H)$ and $x,y\in S$: when proving the statement for $(x,y,S)$, we assume that it is true for every $(x',y',S')$ where either
\begin{itemize}
\item $|S'|<|S|$ or
\item $|S'|=|S|$ and the distance condition $\dist(\{x',y'\},S'\setminus \{x',y'\})<\dist(\{x,y\},S\setminus \{x,y\})$ holds.
\end{itemize}

To explain the main ideas of the induction step, we need two new concepts. 
A weaker version of the forcer is the distinguisher, which additionally may allow $(y,\alpha)\in R$ as well. For example, the following relation is an $(x,y,\{x,y,z,q,w\})$-distinguisher with respect to $(\alpha,\beta)$:
\begin{align*}
  R&=\{x\alpha,y\alpha,y\beta, z\alpha,z\beta,q\alpha,w\beta\}.
\end{align*}
As another application of the interpolation technique, we show that if we can realize a distinguisher, then we can realize a forcer as well, that is, we can eliminate the unwanted possibility $(y,\alpha)$.

For a partition $(X,Y)$ of $S$, the $(X,Y)$-partitioner with respect to $(\alpha, \beta)$ is the relation
\[
R=\{(x,\alpha)\mid x\in X\} \cup \{(y,\beta)\mid y\in Y\}.
\]
It can be seen as a stricter version of the $(x,y,S)$-forcer: it is now specified for every element of $S$ whether the relation should move it to $\alpha$ or $\beta$. But we can show that if we realize an $(x,y,S)$-forcer  for every $x,y\in S$, then we can realize an $(X,Y)$-partitioner for every partition $(X,Y)$ of $S$: the argument is similar to how forcers were used to realize arbitrary relations over $S$.

Now let us overview the main cases in the inductive proof.

\begin{itemize}
\item \textbf{Base case:} $S=\{x,y\}$. If $x$ and $y$ have a common neighbor, then  $x$ and $y$ are part of a $P_4$ (as we assumed that $H$ is irredundant) and we essentially have an $(x,y,S)$-distinguisher, which we can turn into an $(x,y,S)$-forcer. If $x$ and $y$ have distance at least 3, then we can use the edge relation $E$ of $H$ to move $x$ and $y$ closer to each other, until they have a common neighbor.

\begin{figure}
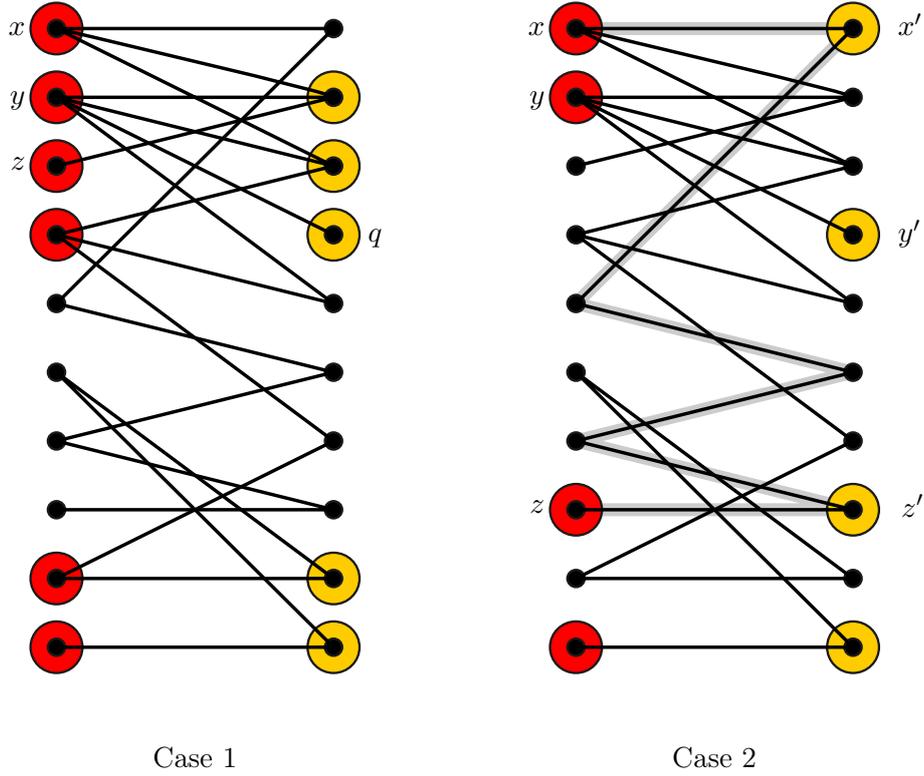

  \begin{center}
    \svg{0.8\linewidth}{cases}
    \caption{The two cases in the inductive proof. The set $S$ is highlighted in red, the set $S'$ is highlighted in yellow. Case 1: As $z$ and $y$ have a common neighbor, we can define $S'$ with $|S'|<|S|$. Vertex $q\in S'$ is in $N(y)\setminus N(x)$. Case 2: In $S$, the distance of $\{x,y\}$ from the rest of the vertices is $6$ (a shortest path $P$ is highlighted in gray), while in $(x',y',S')$ this distance is only $4$.}
    \label{fig:cases}
  \end{center}
\end{figure}

\item \textbf{Case 1:} $S$ contains a vertex $z$ at distance two from $\{x,y\}$. Using that $z$ has a neighbor that is also adjacent to one of $x$ and $y$, we can define a set $S'$ with $S\subseteq N(S')$ and $|S'|<|S|$ (see Figure~\ref{fig:cases}). Furthermore, we can define $S'$ in a way that, without loss of generality, the set $N(y)\setminus N(x)$ is nonempty. As $|S'|<|S|$, the induction hypothesis implies that there are $(x',y',S')$-forcers for every $x',y'\in S'$. Therefore, we may also assume that there are $(X',Y')$-partitioners for every partition $(X',Y')$ of $S'$.
Let $R$ be an $(N(x)\cap S',S'\setminus N(x))$-partitioner with respect to some $(\alpha,\beta)$. If $E$ is the edge relation of $H$, then we claim that the composition $E\con R$ is an $(x,y,S)$-distinguisher. Indeed, it moves $x$ to $\alpha$ (as $E$ has to move $x$ to $N(x)\cap S'$ first, which is moved to $\alpha$ by $R$), $y$ can move to $\beta$ (as $E$ can move $y$ to the nonempty $(N(y)\setminus N(x))\cap S'$, which is moved to $\beta$ by $R$), and every element of $S$ can move to at least one of $\alpha$ and $\beta$ (as $S\subseteq N(S')$). Then this $(x,y,S)$-distinguisher can be turned into an $(x,y,S)$-forcer, as required.

\item \textbf{Case 2:} Every vertex of $S\setminus \{x,y\}$ is at distance at least 4 from $\{x,y\}$. Let $P$ be a shortest path between $\{x,y\}$ and $S\setminus \{x,y\}$. There are a few similar cases to consider, but let us assume, as a representative case, that $P$ goes from $x$ to some vertex $z$, and there is a vertex $y'\in N(y)\setminus N(x)$  (see Figure~\ref{fig:cases}). Let $x'$ be the neighbor of $x$ on $P$, let $z$ be the other endpoint of $P$, and let $z'$ be the neighbor of $z$ on $P$. We define $S'$ starting from $\{x',y',z'\}$ and adding a neighbor of each $S\setminus \{x,y,z\}$; clearly, we have $|S'|\le |S|$. Furthermore, $\dist(\{x',y'\},S'\setminus \{x',y'\})$ is strictly less than $\dist(\{x,y\}, S\setminus \{x,y\})$: vertices $x'$ and $z'$ are closer to each other than $x$ and $z$. Thus by the induction hypothesis, an $(x',y',S')$-forcer $R$ exists with respect to some $(\alpha,\beta)$. Then we claim that the composition $E\con R$ is an $(x,y,S)$-distinguisher. Indeed, $x$ is moved to $\alpha$ (as $x'$ is the only neighbor of $x$ in $S'$, which is moved to $\alpha$ by $R$), $y$ can move to $\beta$ (as $y'$ is a neighbor of $y$ and $R$ can move $y'$ to $\beta$), and every element of $S$ can move to at least one of $\alpha$ and $\beta$ (as $S\subseteq N(S')$). Then this $(x,y,S)$-distinguisher can be turned into an $(x,y,S)$-forcer, as required.
\end{itemize}


\section{Preliminaries} \label{sec:preliminaries}
For an integer $n$, we define $[n] := \{1,2,\ldots,n\}$.
Given a set $S$, by $2^S$ we denote the power set of $S$.

\paragraph{Graph theory}
By $(p_1-p_2-\ldots -p_q)$ we denote the path of length $q$ whose consecutive vertices are $p_1, p_2,\ldots, p_q$. 
Let $H$ be a graph.
For vertices $u, v\in V(H)$, $\dist(u,v)$ is the length of a shortest path between $u$ and $v$ in $H$. $N(u)=\{v\in V(H) \mid uv \in E(H)\}$ is the (open) neighborhood of $u$ in $H$.  For a set $U\subseteq V(H)$, $N(U)$ is the set of vertices of $H$ with at least one neighbor in $U$.

\paragraph{Structures}
	A signature $\sigma$ consists of a finite set of relation symbols with specified arities. For a relation (symbol) $R$, $\arity(R)$ denotes the arity of $R$. For $i\in [\arity(R)]$, $\Pi_i(R)$ is the projection onto the $i$-th entry of the tuples in $R$.
	A \emph{structure} $\calH$ with signature $\sigma(\calH)$ consists of a universe $V(\calH)$ together with a set of relations $\boldR(\calH)=\{R^{\calH} \mid R\in \sigma(\calH)\}$ over the universe $V(\calH)$. We write $\calH=(V(\calH), \boldR(\calH))$.
	By $||\calH||$ we denote $|\sigma(\calH)| + |V(\calH)| + \sum_{R\in \sigma(\calH)} |R^{\calH}|\cdot\arity(R)$, which is the size of a ``reasonable'' encoding of $\calH$.
	Note that if $\sigma(\calH)$ consists of a constant number of relations, each with constant arity, then $||\calH|| = |V(\calH)|^{\bigO(1)}$. Throughout the paper we work in such a setting.

	A structure $\calH$ is a graph if $\boldR(\calH)=\{E(\calH)\}$, where $E(\calH)$ is a symmetric binary relation over $V(\calH)$. In this case we write $\calH=(V(\calH),E(\calH))$.
	
	The \emph{Gaifman graph} of a structure $\calH$ has vertex set $V(\calH)$ and distinct $u,v \in V(\calH)$ are adjacent if and only if there is a  relation $R \in \boldR(\calH)$ that involves $u$ and $v$.
	We define the treewidth and a tree decomposition (resp., the  pathwidth and a path decomposition) of $\calH$ as the treewidth and a tree decomposition (resp., the pathwidth and a path decomposition) of the Gaifman graph of $\calH$.
	
	Given two structures $\calG$ and $\calH$ with the same signature $\sigma$, a function $f\from V(\calG) \to V(\calH)$ \emph{respects} $R\in \sigma$ if, for each $\boldx\in R^{\calG}$, $f(\boldx)\in R^{\calH}$, where $f$ is evaluated elementwise.  A \emph{homomorphism} from $\calG$ to $\calH$ is a function $h\from V(\calG) \to V(\calH)$ that respects every $R\in \sigma$. By $\hom{\calG}{\calH}$ we denote the number of homomorphisms from $\calG$ to $\calH$.
	If $\arity(R)=1$ then $R^\calH$ is a subset of $V(\calH)$ and we also use the usual \emph{list} notation: For an element $x\in V(\calG)$ we write $L(x)= R^\calH$ if $x\in R^\calG$. More generally, if $L$ is a partial function from $V(\calG)$ to $2^{V(\calH)}$, then $h$ is a homomorphism from $(\calG,L)$ to $\calH$ if $h$ is a homomorphism from $\calG$ to $\calH$ with $h(x)\in L(x)$ for each $x$ in the domain of $L$. Note that $x$ being outside the domain of $L$ is as restrictive as having $L(x)=V(\calH)$. By $\hom{(\calG, L)}{\calH}$ we denote the number of homomorphisms from $(\calG,L)$ to $\calH$.

	Let  $\mathcal{H} = (U, \boldS)$ be a fixed structure
	and let $\mathcal{J} = (Z, \boldF)$ be a structure with the same signature as $\mathcal{H}$.
	For some integer $p$, let $\boldx = (x_1,x_2,\ldots,x_p) \in Z^p$ and $\boldy = (y_1,y_2,\ldots,y_p) \in U^p$.
	Then $\hom{(\calJ,\boldx)}{(\calH,\boldy)}$ is the number of homomorphisms from $\calJ$ to $\calH$ that map $x_i$ to $y_i$ for each $i\in[p]$.
	A $p$-tuple of distinguished (not necessarily pairwise distinct) elements of $Z$ is also called an \emph{interface} (of size $p$) of $\calJ$.
	
	\paragraph{Counting Problems}
	Let $\calH=(V,\boldR)$ be a structure. Then $\Hom{\calH}$ is the problem that takes as input a structure $\calG$ with the same signature as $\calH$ and asks to determine $\hom{\calG}{\calH}$.
	
	Note that the power set $2^V$ is the set of all unary relations over $V$. Then $\LHom{\calH}$ is the problem $\Hom{\calHplus}$, where $\calHplus=(V,\boldR\cup (2^V\setminus \{\emptyset\}))$. Intuitively, $\calHplus$ is the structure obtained from $\calH$ by adding all non-empty unary relations over the universe of $\calH$.
	In the special case, where $\calH$ is a graph $H$ we simplify the notation by writing instances of $\LHom{H}$ as $(G,L)$, where $G$ is a graph and $L$ is a function from the vertices of $G$ to subsets of $V(H)$ specifying the unary relations of the input (lists). For convenience we assume that for such instances every vertex of $G$ has a list. Then $\LHom{H}$ on input $(G,L)$ asks to determine $\hom{(G,L)}{H}$.

\paragraph{Pathwidth and pathwidth-preserving reductions}	
Let $A$ and $B$ be computational problems that take as input some structure given along with a path decomposition.
A \emph{pathwidth-preserving reduction} from $A$ to $B$ is a polynomial-time Turing reduction from $A$ to $B$ such that the corresponding algorithm, if executed on an instance $\calG$ of $A$ given with a path decomposition with width at most $t$,
makes $B$-oracle calls only on structures of size polynomial in $||\calG||$, given with a path decomposition with width at most $t + \bigO(1)$.%

Let $\calH=(V, \boldR)$ be a structure. A relation $R\subseteq V^{\arity(R)}$ is \emph{realizable} (by $\calH$) if there is a pathwidth-preserving reduction from $\LHom{(V, \boldR\cup \{R\})}$ to $\LHom{\calH}$.

Note that pathwidth-preserving reductions are transitive in the sense that a pathwidth-preserving reduction from $A$ to $B$ together with a pathwidth-preserving reduction from $B$ to $C$ implies that there is a pathwidth-preserving reduction from $A$ to $C$.
The concept of realizable relations is transitive as well: If a relation $R$ is realizable by a structure $\calH=(V,\boldR)$, and a relation $R'$ is realizable by $\calH'=(V, \boldR\cup\{R\})$, then $R'$ is realizable by $\calH$.

However, there is a caveat here: we are only allowed to combine a constant number of pathwidth-preserving reductions to make sure that the total increase of the pathwidth bound is constant as well.

We say that $\calH = (V, \boldR)$ is an \emph{induced substructure} of $\calH'=(V',\boldR')$ if $V \subseteq V'$
and for every $R$ we have $\boldx \in R^{\calH}$ if and only if $\boldx \in R^{\calH'} \cap V^{\arity(R)}$.
Note that, since we can use arbitrary lists, we can see each instance of $\LHom{\calH}$ as an instance of $\LHom{\calH'}$, where each element of the input universe has a list that is a subset of $V$. This gives us a simple pathwidth-preserving reduction from $\LHom{\calH}$ to $\LHom{\calH'}$.
Therefore, if a relation $R$ is realizable by $\calH$, it is also realizable by $\calH'$.
We will use this fact implicitly throughout this work.

\paragraph{Relations and Operators}
	
Let us now define some notation concerning relations.
\begin{itemize}
\item Let $R_1 \subseteq X_1 \times X_2$ and $R_2\subseteq X_2 \times X_3$. The \emph{composition} of $R_1$ and $R_2$ is the relation $R_1\con R_2 \subseteq X_1 \times X_3$ defined as $\{(u,v) \mid \exists\,w\colon (u,w)\in R_1, (w,v)\in R_2\}$.
\item Given a graph $H$ and two sets $U,W\subseteq V(H)$, we define $R_{U\to W} =\{(u,v) \mid u\in U, v\in W\cap N(u)\}$.  
\item For $q \geq 2$, a relation $R \subseteq X_1 \times \cdots \times X_q$, and $v_1 \in X_1$, we define $R(v_1)=\{ (v_2,\ldots,v_q) \in X_2 \times \cdots \times X_{q} \mid (v_1,v_2,\ldots,v_q) \in R\}$.
\end{itemize}
Note that for a graph $H$, sets $U,W \subseteq V(H)$, and $u \in U$, we have $R_{U \to W}(u) = \{v\in W \mid (u,v) \in R_{U\to W}\}=V\cap N(u)$.

The following lemma shows three simple ways to obtain realizable relations.

\begin{restatable}{lem}{lemsimple}
\label{lem:simplecombinations}
Let $\calH=(V, \boldR)$. The following relations $R$ are realizable by $\calH$:
\begin{myenumerate}
\item $R_1 \cap R_2$, where $R_1, R_2 \in \boldR$ are of the same arity,
\item $R_{U \to U'}$, where $\boldR$ contains $E(H)$ for some graph $H$ with vertex set $V$, and $U,U' \subseteq V$,
\item $R_1 \con R_2$, where $R_1, R_2 \in \boldR$ are of arity 2.
\end{myenumerate}
\end{restatable}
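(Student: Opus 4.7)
All three realizability claims are proved by exhibiting gadgets that simulate $R$ using only constraints already in $\boldR$ together with the unary list relations available in $\LHom$. Parts~(1) and~(2) admit exact gadgets that preserve both the Gaifman graph and the homomorphism count, so the reduction consists of a single oracle call. Part~(3) is the one case in which the natural gadget introduces witness-counting weights, and these must be removed using the polynomial interpolation technique of Dyer and Greenhill sketched in the proof overview.

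For part~(1), given an input $\calG'$ for $\LHom{(V,\boldR\cup\{R_1\cap R_2\})}$, I replace every $(R_1\cap R_2)(\boldx)$ constraint by the pair $R_1(\boldx),R_2(\boldx)$ on the same tuple. The Gaifman graph, and hence any path decomposition, is unchanged, and a tuple is counted in the new instance iff it satisfied both $R_1$ and $R_2$, which is exactly $R_1\cap R_2$. For part~(2), I exploit the fact that $\LHom{\calH}$ provides every nonempty unary relation as a list: each $R_{U\to U'}(x_1,x_2)$ constraint is replaced by the edge constraint $E(H)(x_1,x_2)$ together with list restrictions $U$ on $x_1$ and $U'$ on $x_2$ (intersected with whatever list was already present). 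Again the Gaifman graph is preserved, and a pair $(u,v)$ survives iff $u\in U$, $v\in U'$, and $uv\in E(H)$, which matches the definition of $R_{U\to U'}$.

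For part~(3), each constraint $(R_1\con R_2)(x_1,x_2)$ is handled by the existential gadget: introduce a fresh vertex $z$ with the two constraints $R_1(x_1,z)$ and $R_2(z,x_2)$. The number of extensions to $z$, given a partial assignment to $\{x_1,x_2\}$, equals $w(f)=|\{v\in V : (f(x_1),v)\in R_1 \text{ and } (v,f(x_2))\in R_2\}|$, which is positive iff $(f(x_1),f(x_2))\in R_1\con R_2$ but in general differs from~$1$. This matches the ``positive vs.\ zero'' pattern required by Dyer--Greenhill interpolation: by replacing each $z$ with $k$ independent parallel copies for polynomially many values of $k\in\{1,\ldots,|V|\}$ and querying the resulting instances $\calG_k$, one obtains counts $\hom{\calG_k}{\calH}$ that form a Vandermonde system in the per-tuple weights, which inverts to give the unweighted count $\hom{\calG'}{(V,\boldR\cup\{R_1\con R_2\})}$.

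The main technical obstacle in part~(3) is preserving pathwidth in the interpolation instances $\calG_k$. Naively packing $k$ parallel copies of $z$ into a single bag would inflate its size linearly in $k$, violating the pathwidth-preserving requirement. However, the copies are pairwise nonadjacent and each is adjacent only to the interface pair $\{x_1,x_2\}$, which already co-occurs in some bag of the path decomposition of $\calG'$ because of the original binary composition constraint. I therefore stream the copies one at a time through a single extra bag slot: starting from a bag containing $\{x_1,x_2\}$, introduce $z_1$, then replace $z_1$ by $z_2$, and so on; this keeps every bag at width at most $t+2$ regardless of $k$ and uniformly across all composition constraints, yielding the pathwidth-preserving reduction required by the definition of realizability.
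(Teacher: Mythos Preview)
Your approach mirrors the paper's: for each item one exhibits a constant-size gadget $\calJ$ with interface $\boldx$ such that $\hom{(\calJ,\boldx)}{(\calH,\boldf)}>0$ iff $\boldf\in R$, and then appeals to interpolation. The paper packages the interpolation and the pathwidth bookkeeping once and for all in Lemma~\ref{lem:newRelation} and its Corollary~\ref{cor:simpleGadget}, so its proof of the present lemma is just three one-line gadget descriptions followed by a reference to that corollary; your inline reductions for~(1) and~(2) and your ``stream the copies through one extra bag slot'' argument for~(3) are exactly the content of those general results specialised to these gadgets.

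There is one genuine quantitative slip in your version of~(3). You write that it suffices to take $k\in\{1,\ldots,|V|\}$. Each \emph{individual} witness count $w_i(f)=|\{v: (f(x_1),v)\in R_1,\ (v,f(x_2))\in R_2\}|$ indeed lies in $\{0,\ldots,|V|\}$, but you are replacing \emph{all} $m$ occurrences of the composition constraint simultaneously with $k$ parallel copies each, so the quantity that actually appears in the Vandermonde system is the product $W(f)=\prod_{i=1}^m w_i(f)$ over all occurrences. This product can take up to $(m+1)^{|V|^2}$ distinct nonzero values---polynomial in the input size since $|V|$ is a constant, but certainly not bounded by $|V|$. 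With only $|V|$ values of $k$ the linear system is underdetermined as soon as the input contains more than a handful of composition constraints. The fix, and this is precisely what Lemma~\ref{lem:newRelation} does, is to first compute the set $\mathcal W$ of possible nonzero product values in polynomial time and then let $k$ range over $\{1,\ldots,|\mathcal W|\}$.
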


The proof of Lemma~\ref{lem:simplecombinations} is postponed to the next subsection, where we introduce necessary tools to prove the third item.

\subsection{Pathwidth-Preserving Reductions and Interpolation}
One well-known tool for  proving hardness for exact counting problems is interpolation. 

\begin{lem}[{\cite[Lemma~3.2]{DyerG00:RSA}}]\label{lem:interpolation}
	For distinct non-zero integers $\bolda=(a_1, \dots, a_k)$ and integers $\boldb=(b_1, \dots b_k)$, suppose that, for each $j\in [k]$, we have $\sum_{i=1}^k a_i^j\cdot x_i=b_j$ for some unknown $\boldx=(x_1, \dots x_k)$. Then the values of $\boldx$ can be computed in time polynomial in $k$.
\end{lem}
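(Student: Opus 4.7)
The plan is to rewrite the $k$ hypotheses as a single linear system, recognize the coefficient matrix as a nonsingular ``scaled Vandermonde'' matrix, and solve it by standard polynomial-time linear algebra.

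First I would introduce the matrix form: let $M \in \mathbb{Z}^{k \times k}$ be the matrix with $M_{j,i} = a_i^j$, so that the $k$ hypotheses read $M \boldx = \boldb$. Using $a_i^j = a_i \cdot a_i^{j-1}$, we obtain the factorization $M = V \cdot D$, where $D = \mathrm{diag}(a_1, \ldots, a_k)$ and $V$ is the Vandermonde matrix with entries $V_{j,i} = a_i^{j-1}$.

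Next I would establish nonsingularity via determinants. We have $\det D = \prod_{i=1}^k a_i \neq 0$ by the assumption that every $a_i$ is nonzero, and $\det V = \prod_{1 \le i < l \le k}(a_l - a_i) \neq 0$ because the $a_i$ are pairwise distinct. Hence $\det M = \det V \cdot \det D \neq 0$, so $M$ is invertible and $\boldx = M^{-1}\boldb$ is the unique (necessarily rational) solution.

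Finally, one solves $M\boldx = \boldb$ by Gaussian elimination in $O(k^3)$ arithmetic operations. The main subtlety — and the step I would expect to require the most care — is bit-complexity control: since $a_i^j$ may have bit length $\Theta(j \log \max_i |a_i|)$, one must verify that intermediate fractions produced by elimination remain of polynomially-bounded size. This is handled by Edmonds' classical analysis, which bounds every intermediate entry by a subdeterminant of $M$, itself of bit length polynomial in $k$ and $\log\max_i(|a_i|,|b_j|)$. Alternatively, one can sidestep the issue entirely by invoking the explicit Lagrange-interpolation formula for the inverse of a Vandermonde matrix, which expresses each $x_i$ as a closed-form rational combination of the $a_i$ and $b_j$ with $O(k)$ terms.
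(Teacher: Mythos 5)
Your argument is correct: the factorization $M = VD$ with $\det V = \prod_{i<l}(a_l-a_i)\neq 0$ and $\det D=\prod_i a_i\neq 0$ is exactly the standard reason this system is uniquely solvable, and your attention to bit-complexity (via Edmonds' bound on intermediate entries in Gaussian elimination, or the explicit Vandermonde inverse) is the right way to justify ``polynomial time.'' The paper does not prove this lemma itself but cites it from Dyer and Greenhill, whose proof is the same Vandermonde-determinant argument, so your proposal matches the intended one.
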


The following lemma is a modification of~\cite[Corollary 3.3]{DyerG00:RSA} that generalizes to structures and makes the reduction pathwidth-preserving. Note that the conditions on the numbers are slightly different than the conditions on the weights in~\cite[Corollary 3.3]{DyerG00:RSA}.

\begin{lem}\label{lem:newRelation}\label{lem:primes}
Let $\calH =(V,\boldR)$ be a structure with signature $\sigma$ and, for some positive integer $p$, let $R\subseteq V^{p}$ be a relation that is not in $\boldR$.
Suppose that there is a structure $\calJ$ with signature $\sigma$ and an interface $\boldx$ of size $p$ such that, for each $\boldf\in R$ and $\boldg \in V^p \setminus R$, we have the following:
\begin{myenumerate}
	\item $\hom{(\calJ,\boldx)}{(\calH,\boldf)} \neq 0$.\label{prop:positive}
	\item $\hom{(\calJ,\boldx)}{(\calH,\boldg)}\neq 1$.\label{prop:no1}
	\item If $\hom{(\calJ,\boldx)}{(\calH,\boldg)}\neq0$, then there is no prime that divides both $\hom{(\calJ,\boldx)}{(\calH,\boldf)}$ and $\hom{(\calJ,\boldx)}{(\calH,\boldg)}$. \label{prop:primefactor}
\end{myenumerate}
Then $R$ is realizable by $\calH$. 
\end{lem}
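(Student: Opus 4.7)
The plan is a polynomial interpolation argument in the spirit of Dyer and Greenhill: given an input $\calG$ of $\LHom{(V, \boldR \cup \{R\})}$ with $R$-constraints imposed on tuples $\boldx_1, \dots, \boldx_m \in V(\calG)^p$, I would build, for each $j \in \{1, \dots, s\}$ (with $s$ polynomial in $||\calG||$, to be chosen below), a $\sigma$-structure $\calG_j$ by deleting every $R$-constraint and, for each such constraint on $\boldx_i$, gluing $j$ pairwise disjoint fresh copies of $\calJ$ to $\calG$ via the interface. Since $\calJ$ is of constant size, each copy can be processed in a separate stretch of a path decomposition while only the $p$ interface vertices together with the interior of a single gadget copy sit in the bag; hence $\pw(\calG_j) \le \pw(\calG) + p + |V(\calJ)| = \pw(\calG) + \bigO(1)$, and the oracle calls we will make are pathwidth-preserving.

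Let $w(\boldy) := \hom{(\calJ,\boldx)}{(\calH,\boldy)}$ for $\boldy \in V^p$ and let $a_1, \dots, a_k$ be the distinct nonzero values of $w$; note that $k \le |V|^{|V(\calJ)|}$ is a constant depending only on $\calH$. For a map $h\from V(\calG) \to V$ that respects every relation of $\calG$ except possibly $R$ and respects the lists, define its \emph{profile} $\nu(h) = (\nu_1, \dots, \nu_k)$ by $\nu_i := |\{l \in [m] \con w(h(\boldx_l)) = a_i\}|$. Maps with $w(h(\boldx_l)) = 0$ for some $l$ contribute $0$ to $\hom{\calG_j}{\calH}$, and for the remaining ones a direct expansion yields
\[
\hom{\calG_j}{\calH} = \sum_{\nu} N_\nu \prod_{i=1}^k a_i^{j \nu_i},
\]
where $N_\nu$ is the number of such $h$ with profile $\nu$. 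Writing $b_\nu := \prod_i a_i^{\nu_i}$ and grouping profiles with equal $b_\nu$ into distinct values $c_1, \dots, c_s$ with aggregated counts $M_l := \sum_{\nu \con b_\nu = c_l} N_\nu$, this becomes $\hom{\calG_j}{\calH} = \sum_{l=1}^{s} M_l c_l^j$. The number of profiles is at most $\binom{m+k-1}{k-1} = \bigO(m^{k-1})$, so $s$ is polynomial in $||\calG||$; the $c_l$'s are distinct positive integers; hence querying the $\LHom{\calH}$-oracle on $\calG_1, \dots, \calG_s$ and applying Lemma~\ref{lem:interpolation} recovers every $M_l$ in polynomial time.

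It remains to read off the quantity we actually want, namely the number of $h$'s that additionally satisfy $h(\boldx_l) \in R$ for every $l$. Call a profile $\nu$ \emph{good} if $\nu_i = 0$ whenever $a_i$ equals $w(\boldg)$ for some $\boldg \notin R$; good profiles are exactly those realised by the homomorphisms we want to count (this uses condition~\ref{prop:positive}, so that good $h$'s are not silently discarded via $w = 0$). Let $\mathcal{P}_R$ be the set of primes dividing some $w(\boldf)$ with $\boldf \in R$. If $\nu$ is good, $b_\nu$ is a product of such $w(\boldf)$'s and every prime factor of $b_\nu$ lies in $\mathcal{P}_R$. If $\nu$ is bad, some $a_i = w(\boldg)$ with $\boldg \notin R$ appears with $\nu_i > 0$; condition~\ref{prop:no1} gives $a_i \neq 1$, so $a_i$ has a prime factor, and condition~\ref{prop:primefactor} forces that prime to lie outside $\mathcal{P}_R$. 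Thus the sets of $b_\nu$-values arising from good and from bad profiles are disjoint, and for each $c_l$ we can test in constant time (the relevant primes come from the fixed structures $\calH, \calJ$) whether it is supported on $\mathcal{P}_R$; summing $M_l$ over the indices $l$ for which $c_l$ is so supported produces exactly $\sum_{\nu \text{ good}} N_\nu$, which is the desired count.

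The main obstacle is precisely the separation in the last paragraph: a priori a single value $c_l$ could be produced both by a good and by a bad profile, in which case the Vandermonde inversion would not disentangle the two contributions. This is exactly where conditions \ref{prop:no1} and \ref{prop:primefactor} are used, and together they let the coprimality structure of $\mathcal{P}_R$ do the separating. The rest — the pathwidth bookkeeping, the polynomial bound on the number of profiles, and the Vandermonde inversion — is routine given Lemma~\ref{lem:interpolation}.
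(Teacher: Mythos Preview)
Your proposal is correct and follows essentially the same interpolation approach as the paper: glue $j$ copies of $\calJ$ at each $R$-tuple, collect the resulting homomorphism counts as $\sum_l M_l c_l^{\,j}$, invert the Vandermonde system via Lemma~\ref{lem:interpolation}, and use conditions~\ref{prop:no1}--\ref{prop:primefactor} to separate the ``good'' weight values from the ``bad'' ones by their prime support. The only slip is the claim that membership of $c_l$ in the $\mathcal{P}_R$-smooth numbers can be tested in \emph{constant} time; since $c_l$ may have $\Theta(m)$ many prime factors its bit-length is polynomial in $||\calG||$, so this test is polynomial-time rather than constant-time --- but that is all the reduction needs.
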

	
\begin{proof}
In order to show that $R$ is realizable by $\calH$, we need to show, for  $\calH'=(V,\boldR\cup\{R\})$, that there is a pathwidth-preserving reduction from $\LHom{\calH'}$ to $\LHom{\calH}$. 

Consider an input $\calG = (U, \boldS)$ of the $\LHom{\calH'}$ problem, given along with a path decomposition with width $t$.
Slightly overloading notation, we will use $R$ to denote both the relation and its symbol so the corresponding relation in $\calG$ is $R^{\calG}$.
We set $n=||\calG||$ and $m = |R^\calG|$.
Note that if $m=0$, then we are done, as $\calG$ can be cast as a structure with signature $\sigma$ and it therefore can be interpreted as an instance of the  $\LHom{\calH}$ problem.

Let $k$ be a positive integer.	
For each $(v_1,v_2,\ldots,v_p) \in R^\calG$,
we introduce $k$ copies of the structure ${\calJ}$ with interface $\boldx=(x_1, \ldots, x_p)$.
For each copy we identify $x_1,\ldots, x_p$ with $v_1,v_2,\ldots,v_p$, respectively.
Then we remove the relation $R^\calG$ from $\calG$.
Let us call the obtained structure $\calG_{\calJ^k}$.
Note that the signature of $\calG_{\calJ^k}$ is $\sigma$.

Let $\Phi$ be the class of functions from $U$ to $V$ that respect all relations from $\sigma$ (but not necessarily $R$).
Let $\Phi^+$ be the class of functions in $\Phi$ that respect $R$.

For each function $f \in \Phi$, define 
\[w(f) = \prod_{\boldv\in R^\calG} \hom{(\calJ,\boldx)}{(\calH,f(\boldv))}.\]
We also say that $f$ is a $w(f)$-function. For each integer $w$, by $\Phi(w)$ we denote the set of $w$-functions in $\Phi$.

Define
\begin{align*}
\mathcal{W} := & \{ w(f) \mid f \in \Phi \text{ and } w(f) \neq 0\}\\
\mathcal{W}^+ := & \{ w(f) \mid f \in \Phi^+  \text{ and } w(f) \neq 0\}\\
\mathcal{W}^- := & \{ w(f) \mid f \in \Phi\setminus \Phi^+  \text{ and } w(f) \neq 0\}.
\end{align*}

First note that for $f\in \Phi^+$, we have $w(f)\neq 0$ according to property~\ref{prop:positive} as given in the statement of the lemma. Thus, $\mathcal{W}^+ = \{ w(f) \mid f \in \Phi^+\}$.

We now argue that $\mathcal{W}^+$ and $\mathcal{W}^-$ are disjoint.
Let $w^+\in \mathcal{W}^+$ and $w^-\in \mathcal{W}^-$.
We have $w^-\neq 0$, and by properties~\ref{prop:no1} and~\ref{prop:primefactor} as given in the statement of the lemma, we observe that $w^-$ has a prime factor that does not divide $w^+$. Therefore, $w^+\neq w^-$. 
We can conclude that
\begin{equation}\label{equ:LHomAnswer}
	\hom{\mathcal{G}}{\mathcal{H'}} = \sum_{w\in \mathcal{W}^+} |\Phi(w)|.
\end{equation}
Furthermore, we have
\begin{align}
	\hom{\mathcal{G}_{\mathcal{J}^k}}{\calH}
	&= \sum_{f \in \Phi} w(f)^k \nonumber\\
	&= \sum_{w\in \mathcal{W}} \sum_{f\in \Phi(w)} w^k  \nonumber\\
	&= \sum_{w\in \mathcal{W}} w^k\cdot |\Phi(w)|. \label{equ:wFcts}
\end{align}
	
Let $\bolda=(w)_{w\in \mathcal{W}}$ and $\boldb=(\hom{\mathcal{G}_{\mathcal{J}^k}}{\calH})_{k\in [\abs{\mathcal{W}}]}$.
Note that a factor of the form $\hom{(\calJ,\boldx)}{(\calH,f(\boldv))}$ can have at most $|V|^p$ different values, each of which can have multiplicity between $0$ and $m$ in the product $w(f)$. Thus, there are at most $(m+1)^{(\abs{V}^p)}$ distinct values for $w(f)$ and the set $\mathcal{W}$ can be computed in time polynomial in $n$.
		
So, the tuple $\bolda$ can be computed in time polynomial in $n$. The tuple $\boldb$ can be computed using the algorithm that solves $\LHom{\calH}$ for each $k\in [\abs{\mathcal{W}}]$ on input $\mathcal{G}_{\mathcal{J}^k}$
We have $\abs{V(\mathcal{G}_{\mathcal{J}^k})}\le \abs{U} + \abs{R^\calG}\cdot \abs{V(\calJ)}\cdot k$.

Consider a tuple $(v_1,v_2,\ldots,v_p) \in R^\calG$ and note that it corresponds to a clique in the Gaifman graph of $\calG$,
so in a path decomposition $\calX$ of this graph there is a bag that contains $\{v_1,v_2,\ldots,v_p\}$.
Let $X'$ be the first such a bag. We modify $\calX$ by inserting right after $X'$ bags $X_1',X_2',\ldots,X_k'$, where $X_i'$ is the union of $X'$ and the vertex set of the $i$-th copy of $\calJ$ with interface $(v_1,v_2,\ldots,v_p)$.
Clearly the obtained sequence $\calX'$ is a path decomposition of $\calG_{\calJ^k}$. As the size of $\calJ$ depends only on $R$ and $\calH$, and thus is a constant, we obtain that the width of $\calX'$ is $t + \bigO(1)$.
	
Thus, $\boldb$ can be computed using $|\mathcal{W}|$ calls to the $\LHom{\calH}$ oracle, and we have shown that $|\mathcal{W}|\in n^{\bigO(1)}$ and each of the oracle calls is on an instance given with a path decomposition of width at most $t + \bigO(1)$.
		 	
Note that the values of $\bolda$ are non-zero by definition of $\mathcal{W}$.
Using~\eqref{equ:wFcts} and the fact that $\bolda$ contains non-zero pairwise distinct values, we can apply Lemma~\ref{lem:interpolation}. Thus, we obtain $\boldx=(|\Phi(w)|)_{w\in \mathcal{W}}$ from $\bolda$ and $\boldb$ in time polynomial in $n$. By computing, for each $\boldf \in R$, the value of $\hom{(\calJ,\boldx)}{(\calH,\boldf)}$, one can decide in polynomial time which values of $\mathcal{W}$ belong to $\mathcal{W}^+$ (recall that each $w^-\in \mathcal{W}^-$ has a prime factor that does not appear in any value of $\hom{(\calJ,\boldx)}{(\calH,\boldf)}$ with $\boldf \in R$). Thus, using~\eqref{equ:LHomAnswer}, the sought-for number of (list) homomorphisms can be computed from $\boldx=(|\Phi(w)|)_{w\in \mathcal{W}}$ in time polynomial in $n$.
\end{proof}%

The following simple corollary formalizes the modeling of relations by gadgets.
\begin{cor}\label{cor:simpleGadget}
	Let $\calH =(V,\boldR)$ be a structure with signature $\sigma$ and, for some positive integer $p$, let $R\in V^{p}$ be a relation whose relation symbol is not in $\sigma$.
	Suppose that there is a structure $\calJ$ with signature $\sigma$ and an interface $\boldx$ of size $p$ such that, for each $\boldf\in R$ and $\boldg \in V^p \setminus R$, we have $\hom{(\calJ,\boldx)}{(\calH,\boldf)} \neq 0$ and $\hom{(\calJ,\boldx)}{(\calH,\boldg)}=0$, then $R$ is realizable by $\calH$. 
\end{cor}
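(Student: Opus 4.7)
The plan is to derive Corollary~\ref{cor:simpleGadget} as an immediate specialization of Lemma~\ref{lem:newRelation}. The corollary's hypotheses are strictly stronger than the three hypotheses required by that lemma, so the proof should amount to a short verification rather than any new construction.

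Concretely, I would first point out that the gadget $\calJ$ with interface $\boldx$ supplied by the corollary is of exactly the form required by Lemma~\ref{lem:newRelation}, and then check its three properties in turn. Property~\ref{prop:positive}, which demands $\hom{(\calJ,\boldx)}{(\calH,\boldf)}\neq 0$ for every $\boldf\in R$, is literally part of the hypothesis of the corollary. For $\boldg\in V^p\setminus R$, the corollary guarantees $\hom{(\calJ,\boldx)}{(\calH,\boldg)} = 0$, so Property~\ref{prop:no1} holds (since $0\neq 1$), and Property~\ref{prop:primefactor} is vacuous because its premise $\hom{(\calJ,\boldx)}{(\calH,\boldg)}\neq 0$ is never met. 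Hence all three preconditions of Lemma~\ref{lem:newRelation} are satisfied.

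Applying Lemma~\ref{lem:newRelation} then yields exactly the conclusion that $R$ is realizable by $\calH$. I would also remark, for clarity, that in this zero/nonzero regime one does not even need the polynomial interpolation machinery inside the proof of Lemma~\ref{lem:newRelation}: each tuple $(v_1,\dots,v_p)\in R^{\calG}$ can be directly replaced by a copy of $\calJ$ (identifying the interface with $(v_1,\dots,v_p)$), and by the multiplicativity of homomorphism counts over shared-vertex combinations, the resulting instance $\calG'$ satisfies $\hom{\calG}{\calH'} = \hom{\calG'}{\calH}/\prod_{\boldv\in R^\calG} c(\boldv)$ in the restricted case when the nonzero value $c(\boldv)$ does not depend on $\boldv$; but this direct shortcut is unnecessary since invoking Lemma~\ref{lem:newRelation} suffices.

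There is essentially no obstacle in this proof; the only point requiring a moment of care is the vacuous satisfaction of Property~\ref{prop:primefactor} and the fact that the pathwidth-preserving nature of the reduction is inherited directly from Lemma~\ref{lem:newRelation}, since we make no further modifications to the construction.
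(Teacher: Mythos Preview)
Your proposal is correct and matches the paper's approach: the corollary is stated without proof precisely because it is an immediate specialization of Lemma~\ref{lem:newRelation}, obtained by verifying properties~\ref{prop:positive}--\ref{prop:primefactor} exactly as you describe. Your additional remark about bypassing interpolation is fine but unnecessary.
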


Now we are ready to prove Lemma~\ref{lem:simplecombinations} that we recall here.
\lemsimple*
\begin{proof}
For each case we will build a structure $\calJ=(U, \boldS)$ with the same signature as $\calH$ and interface $\boldx$, which satisfies the assumptions of Corollary~\ref{cor:simpleGadget}.
\begin{enumerate}
\item Let $R = R_1 \cap R_2$ for some $R_1,R_2 \in \boldR$ with $p=\arity(R_1)=\arity(R_2)$. We set $U = \{x_1, \ldots, x_p\}$, $\boldx=(x_1, \ldots, x_p)$, and $R_1^{\calJ} = R_2^{\calJ}=\{(x_1, \ldots, x_p)\}$.
\item Suppose $\boldR$ contains $E(H)$ for some graph $H$ with vertex set $V$, and $R = R_{U \to U'}$, where $U,U' \subseteq V$. We set $U = \{x,y\}$, $\boldx=(x,y)$, and $E(H)^{\calJ}=\{(x,y), (y,x)\}$. We also introduce lists $L(x) = U$, $L(y)=U'$.
\item Let $R = R_1 \con R_2$ for some  $R_1, R_2 \in \boldR$ or arity 2.
We set $U = \{x,y,z\}$, $\boldx=(x,z)$, $R_1^{\calJ} = \{(x,y)\}$, and $R_2^{\calJ}=\{(y,z)\}$.
\end{enumerate}
Note that in the first two cases all vertices of $\calJ$ are in $\boldx$, so it it straightforward to observe that 
$\hom{(\calJ,\boldx)}{(\calH,\boldf)} = 1$ if $\boldx \in R$ and $\hom{(\calJ,\boldx)}{(\calH,\boldf)} = 0$ if $\boldf \notin R$.
In the last case, we observe that $\hom{(\calJ,\boldx)}{(\calH,\boldf)} > 0$ if $\boldf \in R$ and $\hom{(\calJ,\boldx)}{(\calH,\boldf)} = 0$ otherwise.
Thus the lemma follows from Corollary~\ref{cor:simpleGadget}.
\end{proof}

\subsection{Hardness of counting satisfying assignments to  \CSP($q$, $r$)}	
Recall that the \#SETH states that for every $\epsilon>0$, there is a $d\ge 1$ such that $n$-variable \#\textsc{$d$-SAT}  cannot be solved in time $\bigO((2-\epsilon)^n)$.
Note that since $d$ is a constant, this is equivalent to saying that $n$-variable $m$-clause \#\textsc{$d$-SAT}  cannot be solved in time $(2-\epsilon)^n \cdot (n+m)^{\bigO(1)}$.

For integers $r,q \geq 2$, by \CSP($q$, $r$) we denote the CSP problem with domain $[q]$ and each constraint of arity at most $r$.
By \#\CSP($q$, $r$) we denote the problem of counting satisfying valuations of a given instance of \CSP($q$, $r$).
In this section we show the hardness of computing \#\CSP($q$, $r$), conditioned on the \#SETH.

The decision version of this result, conditioned on the SETH, was proven by Lampis~\cite{DBLP:journals/siamdm/Lampis20}. Our proof is just an adaptation of the original one to the counting world, so we will not elaborate on the details and refer the reader to the paper of Lampis~\cite{DBLP:journals/siamdm/Lampis20}.

\begin{thm}\label{thm:qcsp}
For every integer $q \geq 2$ and $\epsilon >0$ there is an integer $r$, such that the following holds.
Unless the \#SETH fails, \#\CSP($q$, $r$) with $n$ variables and $m$ constraints cannot be solved in time $(q-\epsilon)^n \cdot (n+m)^{\bigO(1)}$.
\end{thm}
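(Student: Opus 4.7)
The plan is to adapt Lampis's decision-world reduction~\cite{DBLP:journals/siamdm/Lampis20} to the counting world, proving the contrapositive. Assume that for some $q \ge 2$ and $\epsilon > 0$ we have, for \emph{every} $r$, an algorithm solving $\#\CSP(q,r)$ in time $(q-\epsilon)^n \cdot (n+m)^{\bigO(1)}$; I will exhibit, for every $d$, an algorithm solving \#\textsc{$d$-SAT} in time $(2-\epsilon')^n \cdot (n+m)^{\bigO(1)}$ with a single $\epsilon' > 0$ depending only on $q$ and $\epsilon$, contradicting the \#SETH.

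Fix once and for all a block size $s = s(q,\epsilon)$ large enough that $\alpha := s\log_2(q-\epsilon)/\lfloor s \log_2 q\rfloor$ is strictly less than $1$; this is possible since $\alpha \to \log_2(q-\epsilon)/\log_2 q < 1$ as $s \to \infty$. Put $t := \lfloor s\log_2 q\rfloor$ and $r := ds$, and fix an injection $\phi\colon \{0,1\}^t \hookrightarrow [q]^s$ (which exists because $q^s \ge 2^t$). Given a \#\textsc{$d$-SAT} formula $\varphi$ on $n$ variables (padded to a multiple of $t$) and $m$ clauses, I partition the Boolean variables into $n/t$ blocks, introduce $s$ fresh CSP variables over $[q]$ per block, and for each block add a single arity-$s$ constraint restricting the joint state of its $s$ variables to $\mathrm{image}(\phi)$. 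Each clause $C$ of $\varphi$ touches at most $d$ Boolean variables lying in at most $d$ blocks, hence in at most $ds = r$ CSP variables; I replace $C$ by a constraint of arity at most $r$ whose satisfying tuples are exactly those joint CSP states whose $\phi^{-1}$-decoded Boolean assignment satisfies $C$. The injectivity of $\phi$ together with the image-restriction constraints gives a bijection between satisfying Boolean assignments of $\varphi$ and satisfying assignments of the CSP, so the reduction is count-preserving.

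The constructed CSP has $n' = sn/t$ variables and $m + n/t$ constraints, so the hypothesized algorithm solves it in time
\[
(q-\epsilon)^{n'} \cdot (n+m)^{\bigO(1)} \;=\; 2^{\alpha n} \cdot (n+m)^{\bigO(1)} \;=\; (2-\epsilon')^{n} \cdot (n+m)^{\bigO(1)},
\]
where $\epsilon' := 2 - 2^{\alpha} > 0$ is independent of $d$, contradicting the \#SETH. Compared with Lampis's decision-world proof, the only delicate point is count-preservation: in the decision setting any encoding of Boolean blocks that covers every Boolean assignment suffices, whereas for counting one has to use an injective $\phi$ and explicitly forbid the ``spare'' joint CSP states outside $\mathrm{image}(\phi)$ so that each satisfying Boolean assignment has exactly one CSP preimage. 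The rest of the argument, including the choice of $s$ and the handling of the padding and of non-power-of-$2$ values of $q$, is a routine calibration of Lampis's construction.
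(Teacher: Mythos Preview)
Your proposal is correct and follows essentially the same approach as the paper: both prove the contrapositive by blocking the Boolean variables into groups of size $t$, encoding each block by $s$ (the paper writes $p$) CSP variables via an injection $\{0,1\}^t \hookrightarrow [q]^s$, and translating each clause into a bounded-arity constraint, with the injectivity of the encoding guaranteeing a bijection between satisfying assignments. The only cosmetic difference is that you add explicit arity-$s$ image-restriction constraints per block, whereas the paper folds this restriction into the clause constraints themselves (relying on the harmless assumption that every Boolean variable occurs in some clause); both devices secure count-preservation equally well.
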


\begin{proof}
Fix $q \geq 2$ and $\epsilon > 0$ and suppose that for every $r$ there is an algorithm solving $n$-variable $m$-constraint \#\CSP($q$, $r$) in time $(q-\epsilon)^n \cdot (n + m)^{\bigO(1)}$.
Suppose we are given a \#$d$-\textsc{SAT} instance $\Phi$ with $N$ variables and $M$ clauses, where $d$ is a constant.
Without loss of generality we may assume that each variable is involved in some clause.

In the reduction we will carefully choose $r = r(d,q,\epsilon)$ and build a \#\CSP($q$, $r$) instance $\Psi$.
To avoid confusion, we will refer to an assignment of values to the variables of $\Phi$ as \emph{truth assignment},
while an assignment of values to the variables of $\Psi$ will be called \emph{valuation}.

First, select an integer $p$ and a real $\delta > 0$, such that there exists an integer $t$ satisfying the following (see~\cite{DBLP:journals/siamdm/Lampis20} for the argument that such a choice is possible):
\[
(q-\epsilon)^p \leq (2-\delta)^t \leq 2^t \leq q^p.
\]
Note that $\delta$ does not depend on $d$.
Now group the variables of $\Phi$ into $\gamma := \lceil N/t \rceil$ groups, each with at most $t$ variables.
Call these groups $V_1,V_2\ldots,V_\gamma$.
For each $i \in [\gamma]$, we create a group $X_i$ of $p$ variables of $\Psi$.
Thus the total number of variables in $\Psi$ is $n:= \gamma \cdot p \leq \frac{pN}{t} + p$.

Note that the total number of truth assignments of variables in $V_i$ is at most $2^t \leq q^p$, so it does not exceed the number of possible valuations of $X_i$. Let us fix some injective function that maps each truth assignment on $V_i$ to a distinct valuation of $X_i$.

Now, let us define the constraints of $\Psi$.
Consider any clause $c$ of $\Phi$ and let $V_{i_1},V_{i_2},\ldots,V_{i_{s'}}$ for $d' \leq d$ be the groups that contain the variables of $c$.
Note that each truth assignment of variables in $V_{i_1},V_{i_2},\ldots,V_{i_{d'}}$ that satisfies $c$ corresponds to some valuation of variables in $X_{i_1},X_{i_2},\ldots,X_{i_{d'}}$.
We introduce a new constraint $C(c)$ on all variables from $X_{i_1} \cup X_{i_2} \cup \ldots \cup X_{i_{d'}}$ that accepts only the valuations that correspond to the truth assignments that satisfy $c$.
The arity of $C(c)$ is  $|X_{i_1} \cup X_{i_2} \cup \ldots \cup X_{i_{d'}}| \leq d \cdot p$.

Summing up, $\Psi$ has  $n= \gamma \cdot p \leq \frac{pN}{t} + p$ variables, $m = M$ constraints, its domain is $[q]$, and each constraint has arity at most $r := dp$.

It is straightforward to observe that $\Phi$ has a satisfying truth assignment if and only if $\Psi$ has a satisfying valuation.
However, a stronger property holds as well: there is a bijection between truth assignments that satisfy $\Phi$ and valuations that satisfy $\Psi$.
Indeed, recall that every variable of $\Phi$ appears in some clause, and thus, for each $i\in[\gamma]$, there is a constraint of $\Psi$ involving all variables of $X_i$. Since the constraints of $\Psi$ were defined in a way that the only accepted valuations are in one-to-one correspondence to the truth assignments of $\Phi$, we observe that the claimed bijection between truth assignments and valuations exists.
Hence, the solution to \#\CSP($q$, $r$) on the instance $\Psi$ is precisely the number of satisfying assignments of $\Phi$, i.e., the solution of our instance of \#$d$-\textsc{SAT}.
Let us call our hypothetical algorithm for $\Psi$. Its running time is:
\begin{align*}
(q - \epsilon)^n \cdot (n+m)^{\bigO(1)} \leq & \left((q - \epsilon)^p\right)^{N/t+1} \cdot (N + M)^{\bigO(1)} \\
 \leq & \left((2 - \delta)^t\right)^{N/t+1} \cdot (N+M)^{\bigO(1)} \\ \leq & 2^t \cdot (2-\delta)^{N} \cdot (N+M)^{\bigO(1)}.
\end{align*}
As $t$ depends only on $q$ and $\epsilon$, this running time is $(2-\delta)^{N} \cdot (N+M)^{\bigO(1)}$, which violates the \#SETH.
\end{proof}

\section{Counting list homomorphisms to bipartite graphs $H$}	\label{sec:bipartite}
Let $H$ be a bipartite graph. We say that a set $S \subseteq V(H)$ is \emph{irredundant} if for all distinct $u,v \in S$ it holds that $N(u) \neq N(v)$. We say that a graph $H$ is \emph{irredundant} if $V(H)$ is irredundant. If $H$ is connected, then $S \subseteq V(H)$ is \emph{one-sided} if it is contained in one bipartition class.
Then recall from Definition~\ref{def:irr} that $\irr(H)$ is the maximum size of a one-sided irredundant subset of $V(H)$.
We extend this to disconnected bipartite graphs by setting $\irr(H)$ to be the maximum value of $\irr(H')$ over all connected components $H'$ of $H$.
Observe that the following conditions are equivalent for every bipartite graph $H$:
\begin{enumerate}[(i)]
\item $\irr(H) \geq 2$,
\item $H$ has a connected component that is not a biclique,
\item $H$ contains an induced $P_4$.
\end{enumerate}

For a connected bipartite graph $H$, an instance $(G,L)$ of $\LHom{H}$ is \emph{consistent}, if:
\begin{itemize}
\item $G$ is connected and bipartite, let $X$ and  $Y$ be its bipartition classes,
\item $\bigcup_{x \in X} L(x)$ is contained in one bipartition class of $H$,
and $\bigcup_{y \in Y} L(y)$ is contained in the other bipartition class of $H$.
\end{itemize}

\subsection{Algorithm}
\begin{thm}\label{thm:bipartitealgo}
For each bipartite graph $H$, the $\LHom{H}$ problem on $n$-vertex instances given along with a tree decomposition of width at most $t$
can be solved in time $\irr(H)^t \cdot n^{\bigO(1)}$.
\end{thm}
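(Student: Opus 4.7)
The plan is to reduce the general problem to \emph{consistent} instances (in the sense defined immediately before the theorem) and then perform a standard tree-decomposition dynamic program whose state space is compressed by factoring out \emph{twin} vertices of $H$, i.e.\ vertices with identical neighborhoods.

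For the first step, if $G$ has connected components $G_1,\dots,G_k$ then $\hom{(G,L)}{H}=\prod_i \hom{(G_i,L|_{V(G_i)})}{H}$, and a width-$t$ tree decomposition of $G$ restricts to width-$t$ tree decompositions of each $G_i$. Since each $G_i$ is connected, its image lies in a single connected component of $H$, so $\hom{(G_i,L|_{V(G_i)})}{H}=\sum_{H'}\hom{(G_i,L^{H'}_i)}{H'}$, where $H'$ ranges over the components of $H$ and $L^{H'}_i(x)=L(x)\cap V(H')$. Finally, if both $G_i$ and $H'$ contain an edge, then the bipartition classes of each are uniquely determined up to a swap, so the count further splits into the (at most two) consistent subinstances obtained by choosing how the two sides of $G_i$ are mapped into the two sides of $H'$ (after intersecting each list with the appropriate side); the degenerate cases ($G_i$ edgeless, or $H'$ edgeless) are handled directly as products of list sizes or yield zero. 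Each resulting consistent subinstance inherits a width-$t$ tree decomposition, and there are only polynomially many of them.

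For the second step, fix a consistent subinstance with target a connected bipartite component $H'$. Define the equivalence relation $u\sim v$ on $V(H')$ by $N_{H'}(u)=N_{H'}(v)$, and let $\mathcal{C}$ be its set of equivalence classes. Picking one representative per class yields an irredundant set, and because each class lies within a single side of $H'$, the number of classes on either side is at most $\irr(H')\le\irr(H)$. Adjacency in $H'$ is a well-defined relation on $\mathcal{C}$, so the usual treewidth dynamic program can be carried out at the level of $\mathcal{C}$ instead of $V(H')$: for a bag $X$ and a function $\varphi\from X\to\mathcal{C}$ respecting the side on which each $x\in X$ lives, the table entry is the weighted count of partial list homomorphisms, where each $x\in X$ with $\varphi(x)=C$ contributes a multiplicative weight $|L(x)\cap C|$ equal to the number of concrete targets within that class. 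Introduce-vertex nodes multiply in this weight, forget nodes sum over the class of the forgotten vertex, edge-introduce nodes zero out state pairs whose classes are nonadjacent in $H'$, and join nodes multiply the two child tables pointwise with the usual correction dividing by the product of bag weights (to avoid double-counting). The answer for the subinstance is read off at the root.

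For the running time, each bag contains at most $t+1$ vertices, each assigned a class from at most $\irr(H)$ possibilities on its side, so every bag's table has at most $\irr(H)^{t+1}$ entries and each transition costs $n^{\bigO(1)}$; summing over the polynomially many consistent subinstances yields the claimed $\irr(H)^t\cdot n^{\bigO(1)}$ bound. I do not expect any step to be a real obstacle; the only mildly subtle point is verifying that the weights $|L(x)\cap C|$ correctly reproduce the exact list-homomorphism count after quotienting, which reduces to the observation that replacing $f(x)$ by any twin within its equivalence class preserves both the homomorphism property (same neighborhood of the class) and the list constraint (the weight counts precisely the allowed representatives).
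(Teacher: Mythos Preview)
Your proposal is correct and follows essentially the same approach as the paper: reduce to connected $G$ and connected $H$, split into (at most two) consistent subinstances by fixing how the bipartition classes align, then collapse twin vertices of $H$ into equivalence classes and run the standard weighted tree-decomposition dynamic program where each vertex carries the multiplicative weight $|L(x)\cap C|$. The paper phrases the twin-collapse slightly differently---it picks one representative per twin class inside each list $L_1(v)$ and records the class size as a weight $\wei(v,u)$---but this is the same idea, and both arrive at lists of size at most $\irr(H)$ and the claimed running time.
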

\begin{proof}
Let $(G,L)$ be an instance of $\LHom{H}$, where $G$ has $n$ vertices and is given along with a tree decomposition of width at most $t$.
First observe that if $G$ is not bipartite, then there is no homomorphism from $G$ to $H$, thus we return 0.
So from now on assume that $G$ is bipartite.

First, assume that $G$ and $H$ are both connected.
Let $X,Y$ be the bipartition of $G$ and $A,B$ be the bipartition of $H$.
We observe that in every homomorphism from $G$ to $H$,
either each vertex of $X$ is mapped to a vertex of $A$ and each vertex of $Y$ is mapped to a vertex of $B$, or
each vertex of $X$ is mapped to a vertex of $B$ and each vertex of $Y$ is mapped to a vertex of $A$.

Thus we can reduce the problem to solving two consistent instances  of $\LHom{H}$ as follows.
Let $L_1$ be the lists obtained from $L$ by setting $L_1(x) := L(x) \cap A$ for every $x \in X$ and  $L_1(y) := L(y) \cap B$ for every $y \in Y$.
Similarly, define $L_2$ as follows: $L_2(x) := L(x) \cap B$ for every $x \in X$ and  $L_2(y) := L(y) \cap A$ for every $y \in Y$.
By the above reasoning, we obtain that $\hom{(G,L)}{H} = \hom{(G,L_1)}{H} + \hom{(G,L_2)}{H}$.

So let us focus on computing $\hom{(G,L_1)}{H}$ as computing $\hom{(G,L_2)}{H}$ is analogous.
We define lists $L_1'$ and an auxiliary weight function $\wei \from V(G) \times V(H) \to \N$ as follows.
For each $v \in V(G)$, we partition the vertices of $L_1(v)$ into subsets consisting of vertices with exactly the same neighborhood.
From each such subset $W$ we include in $L'_1(v)$ exactly one vertex, say $u$, and set $\wei(v,u) = |W|$.
On the other hand, for every $u \notin L'_1(v)$ we set $\wei(v,u)=0$.

Observe that for each $v \in V(G)$ the set $L'_1(v)$ is irredundant and contained in one bipartition class of $H$.
Thus, for each $v \in V(G)$, we have $|L'_1(v)| \leq \irr(H)$.

Furthermore, denoting by $\Omega$ the set of all homomorphisms from $(G,L'_1)$ to $H$, we observe that:
\begin{equation}
\hom{(G,L_1)}{H} = \sum_{f \in \Omega} \prod_{v \in V(G)} \wei(v,f(v)).
\label{eq:count-irredundant}
\end{equation}

Since every list in $L_1'$ has size at most $\irr(H)$, a standard bottom-up dynamic programming approach can be used to compute $\hom{(G,L'_1)}{H}$ in time $\irr(H)^t \cdot n^{\bigO(1)}$~\cite{DBLP:journals/tcs/DiazST02}.
Furthermore one can easily modify the algorithm to actually determine the sum in~\eqref{eq:count-irredundant}:
whenever we decide to assign color $u$ to some vertex $v \in V(G)$, we multiply the number of solutions by $\wei(w,u)$.
This modification does not increase the running time.

Now consider the general case that graphs $G$ and $H$ are possibly disconnected.
Let $G_1,G_2,\ldots,G_p$ be the connected components of $G$ and let $H_1,H_2,\ldots,H_q$ be the connected components of $H$.
It is straightforward to observe that
\[
\hom{(G,L)}{H} = \prod_{i = 1}^p \left( \sum_{j=1}^q \hom{(G_i,L)}{H_j}\right).
\]
Thus, the total running time of such an algorithm is $\irr(H)^t \cdot n^{\bigO(1)}$.
\end{proof}

\subsection{Hardness for bipartite target graphs}
The following lemma is the main building block of our hardness reduction.

\label{lem:allrelations}
\begin{restatable}{lem}{allrelations} 
		Let $H=(V,E)$ be a connected bipartite graph with $\irr(H)\ge 2$, and let $S\subseteq V$ be a one-sided irredundant set.
		For every fixed $p \geq 1$, every relation $R \subseteq S^p$ is realizable by $H$.
\end{restatable}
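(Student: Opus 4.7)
The plan is to follow the roadmap sketched in the proof overview: rather than constructing a gadget for each specific $R \subseteq S^p$ directly, I would build a hierarchy of intermediate realizable objects (forcers, distinguishers, partitioners) and then assemble $R$ from them. The workhorse that turns ``gadget with the right multiplicities'' into realizability at each stage is Lemma~\ref{lem:newRelation}, combined with the composition operation supplied by Lemma~\ref{lem:simplecombinations}.

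First I would reduce to the case that $H$ itself is irredundant, by passing to an induced subgraph containing one representative from each neighborhood-class; since $S$ is already irredundant, this loses nothing and, by the discussion on induced substructures in the preliminaries, realizability transfers back to $H$. Next I would realize every ``$P_4$-relation'': fixing any induced $P_4$ $(a\text{-}b\text{-}c\text{-}d)$ in $H$, I would show that every $R\subseteq\{a,c\}^r$ is realizable by a gadget built from the edge relation of $H$ plus lists restricting vertices to subsets of $\{a,b,c,d\}$, invoking Lemma~\ref{lem:newRelation} with a careful choice of multiplicities. The key is that the pendant $d$ (adjacent only to $c$) breaks the symmetry $N(a)\subsetneq N(c)$ that would otherwise prevent distinguishing tuples containing $a$ from the corresponding tuples with $c$ in the decision setting.

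The heart of the proof is then an induction on pairs $(|S|,\dist(\{x,y\},S\setminus\{x,y\}))$ ordered lexicographically, showing that for every triple $(x,y,S)$ with $x,y\in S$ and $S$ one-sided there is an $(x,y,S)$-forcer with respect to some pair $(a,c)$ lying on a $P_4$ of $H$. For the base case $|S|=2$: if $x,y$ share a neighbor then irredundancy of $H$ produces a $P_4$ through them and the $P_4$-relation step gives a distinguisher directly; otherwise repeatedly composing the edge relation $E$ with itself reduces to the common-neighbor case. For the inductive step, if some $z\in S$ lies at distance $2$ from $\{x,y\}$, pick a set $S'$ of appropriate common neighbors so that $|S'|<|S|$ and $S\subseteq N(S')$, arranged so that $N(y)\setminus N(x)\neq\emptyset$; by induction an $(N(x)\cap S',\,S'\setminus N(x))$-partitioner $R'$ exists, and $E\con R'$ is an $(x,y,S)$-distinguisher. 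Otherwise every vertex of $S\setminus\{x,y\}$ lies at distance at least $4$ from $\{x,y\}$; walking along a shortest path constructs $S'$ with $|S'|\le|S|$ but strictly smaller distance parameter, and composing $E$ with an $(x',y',S')$-forcer from the induction hypothesis again yields an $(x,y,S)$-distinguisher. Each distinguisher is upgraded to a forcer by a further application of the interpolation technique that erases the single undesired pair $(y,a)$. A separate graph-theoretic argument, exploiting connectivity and irredundancy of $H$, ensures that the $P_4$s of $H$ form a single connected ``cluster'' under the adjacency of sharing two vertices in one bipartition class, so forcers with respect to any one-sided pair $(\alpha,\beta)$ can be converted into forcers with respect to a fixed $(a,c)$.

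Once $(x,y,S)$-forcers exist for all ordered pairs in $S$, I would realize every $(X,Y)$-partitioner by the same forcer-plus-interpolation recipe, and finally realize an arbitrary $R\subseteq S^p$ as follows: for each of the $p$ variables ranging over $S$, attach all $|S|(|S|-1)$ forcers to encode its value as $|S|(|S|-1)$ auxiliary $\{a,c\}$-valued variables (distinct values of $S$ produce distinct bit patterns, since any $x\neq y$ are separated by the $(x,y,S)$-forcer); express $R$ as a $p|S|(|S|-1)$-ary relation over $\{a,c\}$, which is realizable by the $P_4$ step; and glue. The main obstacle is unambiguously the inductive argument: one has to simultaneously control the lexicographic induction parameters, construct $S'$ in each case with the right combinatorial properties (strict drop in $|S'|$ in Case~1, strict drop in distance with $|S'|\le|S|$ in Case~2, and the asymmetry $N(y)\setminus N(x)\neq\emptyset$), and maintain enough uniformity in the choice of $(a,c)$ across the recursion — this is precisely what the $P_4$-connectivity analysis provides. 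Once these pieces are in place, the passage from forcers to arbitrary relations is relatively mechanical.
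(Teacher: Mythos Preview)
Your proposal is correct and follows essentially the same approach as the paper: reduce to irredundant $H$, realize all $\{a,c\}$-relations on a fixed $P_4$, prove $P_4$-connectivity to normalize the target pair $(\alpha,\beta)$, run the lexicographic induction on $(|S|,\dist(\{x,y\},S\setminus\{x,y\}))$ to obtain forcers (using partitioners on the smaller $S'$ in Case~1 and a single forcer on $S'$ in Case~2), and finally encode values of $S$ via the $|S|(|S|-1)$ forcers to reduce arbitrary $R\subseteq S^p$ to a relation over $\{a,c\}$. The only cosmetic difference is that the paper packages ``forcers $\Rightarrow$ partitioners'' and ``forcers $\Rightarrow$ arbitrary relations'' into a single lemma realizing all relations in $S^p\times\{a,c\}^q$, of which partitioners are the special case $p=1$, $q=1$; you list them as separate steps, but the underlying indicator construction is identical.
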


We postpone the proof of Lemma~\ref{lem:allrelations} to Section~\ref{sec:allrelations},
and now we show how it implies the lower bounds for bipartite graphs $H$.

\begin{thm}
\label{thm:bipartitehardness}
Let $H$ be a bipartite graph with $\irr(H)\ge 2$.
Assuming the \#SETH, there is no $\epsilon >0$, such that $\LHom{H}$ on consistent $n$-vertex instances given with a path decomposition of width $t$ can be solved in time $(\irr(H)-\epsilon)^t \cdot n^{\bigO(1)}$.
\end{thm}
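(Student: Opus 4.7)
The plan is to give a pathwidth-preserving Turing reduction from the \#SETH-hard problem $\#\CSP(q,r)$ with $q=\irr(H)$ (Theorem~\ref{thm:qcsp}) to $\LHom{H}$ on consistent instances, mapping an $n$-variable CSP instance to polynomially many $\LHom{H}$ instances each of pathwidth $t = n + \bigO(1)$. Any $(\irr(H)-\epsilon)^t \cdot n^{\bigO(1)}$ algorithm for $\LHom{H}$ would then solve $\#\CSP(q,r)$ in time $(q-\epsilon)^n \cdot (n+m)^{\bigO(1)}$, contradicting Theorem~\ref{thm:qcsp}.

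First I reduce to the case that $H$ is connected and irredundant. Let $H_0$ be a connected component of $H$ with $\irr(H_0) = q$. By restricting all lists to $V(H_0)$, any instance of $\LHom{H_0}$ can be viewed as an instance of $\LHom{H}$ with the same count, and consistency is preserved since the bipartition of $H_0$ is inherited from $H$. If $H_0$ is not irredundant, selecting one representative per twin class gives an irredundant induced subgraph $H_0'$ with $\irr(H_0') = q$, and lists can again be restricted to $V(H_0')$. Thus it suffices to prove hardness under the assumption that $H$ is connected, bipartite, and irredundant, and one may fix a one-sided irredundant set $S \subseteq V(H)$ of size $q$ throughout.

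The core fact is Lemma~\ref{lem:allrelations}: for every $p \ge 1$, every relation $R \subseteq S^p$ is realizable by $H$. Since the number of relations over $S$ of arity at most $r$ is bounded by $\sum_{p=1}^{r} 2^{q^p}$, a constant depending only on $q$ and $r$, I chain constantly many realizations into a single pathwidth-preserving reduction from $\LHom{\calH'}$ to $\LHom{H}$, where $\calH' := (V(H), \{E(H)\} \cup \calR)$ and $\calR$ is the family of all such relations; the total pathwidth increase is a constant $\Delta = \Delta(H, r)$. Now, given an instance of $\#\CSP(q, r)$ with variables $x_1, \dots, x_n$ and constraints $C_1, \dots, C_m$, identify $[q]$ with $S$ and build a structure $\calG$ over $\{x_1, \dots, x_n\}$ with empty edge relation, list $L(x_j) = S$ for every $j$, and, for each constraint $C_i$ using relation $R$ on scope $(x_{i_1}, \dots, x_{i_{r_i}})$, the tuple $(x_{i_1}, \dots, x_{i_{r_i}})$ placed in $R^{\calG}$. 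Homomorphisms from $(\calG, L)$ to $\calH'$ then biject with satisfying valuations of the CSP.

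A trivial single-bag path decomposition of $\calG$ has width $n - 1$; composing the reduction above produces polynomially many equivalent $\LHom{H}$ instances of size $\mathrm{poly}(n+m)$ each with path decomposition of width $t \le n - 1 + \Delta = n + \bigO(1)$, as required. The main remaining obstacle is enforcing consistency, i.e., that the final graph is connected and bipartite with lists respecting $H$'s bipartition. Bipartiteness is automatic because each gadget is a structure over the bipartite $H$ and all variables share the one-sided list $S$, so the gadgets glue coherently on one side of the bipartition. Connectivity can be forced by attaching a constant-size auxiliary bipartite gadget that links all variable vertices through a single anchor vertex, contributing a fixed multiplicative factor to the count that is easily divided out; since this auxiliary structure occupies only a single additional bag, the $n + \bigO(1)$ pathwidth bound is preserved.
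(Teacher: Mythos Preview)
Your proposal follows essentially the same route as the paper: reduce from $\#\CSP(q,r)$ with $q=\irr(H)$ (Theorem~\ref{thm:qcsp}), view the CSP instance as an input to $\LHom{(S,\calR)}$ with trivial path decomposition of width at most $n$, invoke Lemma~\ref{lem:allrelations} to realize each of the constantly many relations in $\calR$, and chain the resulting pathwidth-preserving reductions to land in $\LHom{H}$ with width $n+\bigO(1)$. The paper's proof is exactly this, and it does not even bother with the preliminary reduction to an irredundant $H_0'$: Lemma~\ref{lem:allrelations} already applies to any connected bipartite $H$ with $\irr(H)\ge 2$ and one-sided irredundant $S$, so working inside a connected component $H'$ suffices.

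Your paragraph on consistency goes beyond what the paper writes (the paper's proof is silent on why the produced oracle instances are consistent), and the bipartiteness argument is fine. The connectivity fix, however, is stated imprecisely: a ``constant-size'' gadget cannot by itself link all $n$ variable vertices, and attaching an anchor directly to every $x_j$ by an $H$-edge would require a vertex of $H$ adjacent to all of $S$, which need not exist, so the ``fixed multiplicative factor'' is not justified as written. A clean repair that stays entirely within your framework is to add, at the CSP level, the vacuous constraints $(x_i,x_{i+1})\in S\times S$ for $i=1,\dots,n-1$; these do not change the count, make the Gaifman graph connected, and $S\times S$ is just one more relation of arity $\le r$ to realize via Lemma~\ref{lem:allrelations}, so the constant bound on the length of the reduction chain and the $n+\bigO(1)$ pathwidth bound are unaffected.
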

\begin{proof}
Let $H$ be as in the assumptions and let $S$ be a maximum-size irredundant set contained in a bipartition class of some connected component $H'$ of $H$. Let $q := |S| = \irr(H)$. Note that $H'$ and $S$ satisfy the assumptions of Lemma~\ref{lem:allrelations}.
Suppose that there is $\epsilon>0$ and an algorithm that solves every consistent $n$-vertex $m$-edge instance $(G,L)$ of $\LHom{H}$ that is given along with a path decomposition of $G$ with width $t$ in time $(q-\epsilon)^t \cdot (n+m)^{\bigO(1)}$.

We reduce from \#\CSP($q$, $r$), where $r$ is the constant given for $q$ and $\epsilon$ by Theorem~\ref{thm:qcsp}.
Consider an instance $\Psi$ with variables $U$, domain $D = [q]$, and let $\calR$ be the set of relations used in the constraints of $\Psi$. Note that $|\calR|$ depends only on $q$ and $r$, i.e., $|\calR|$ is a constant. Furthermore, the number of constraints in $\Psi$ is polynomial in $|U|$.

Recall that $|S|=q$, so by fixing an arbitrary bijection between $S$ and $[q]$, we can equivalently see $\Psi$ as an instance with domain $S$.
In other words, the instance $\Psi$ can be equivalently seen as a structure, which is an instance of $\LHom{(S, \calR)}$.
Note that the pathwidth of $\Psi$ is clearly at most $|U|$.

As the arity of each relation in $\calR$ is at most $r$, by Lemma~\ref{lem:allrelations}, all relations in $\calR$ are realizable by $H'$ and thus by $H$. This means that there is a pathwidth-preserving reduction from $\LHom{(S, \calR)}$ to $\LHom{H}$.

Let us point out that our reduction is in fact a chain of pathwidth-preserving reductions. However, the total number of steps in this chain is $\bigO(|\calR|)$, which is a constant. Thus the total increase of the pathwidth is $\bigO(1)$.

So, using our hypothetical algorithm for $\LHom{H}$, we can count satisfying assignments for $\Psi$ in time
\[
(q-\epsilon)^{|U| + \bigO(1)} \cdot |U|^{\bigO(1)} = (q-\epsilon)^{|U|} \cdot |U|^{\bigO(1)},
\]
where we use the fact that $q$ is a constant.
By Theorem~\ref{thm:qcsp} the existence of such an algorithm for \#\CSP($q$, $r$) contradicts the \#SETH.
\end{proof}

\subsection{Special case: $H = P_4$}
As a warm-up, let us start with the special case that $H= P_4$.
The following lemma can be seen as a restriction of Lemma~\ref{lem:allrelations} to this case.
The proof will illustrate our approach.

\begin{lem}\label{lem:basicP4}
Consider $P_4 =(a-b-c-d)$ and fix a positive integer $q$. Then any $R \subseteq \{a,c\}^q$ is realizable by $P_4$.
\end{lem}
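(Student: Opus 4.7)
The plan is to realize $R$ through a short chain of pathwidth-preserving reductions that combine the interpolation machinery of Lemma~\ref{lem:newRelation} with the exact-count Corollary~\ref{cor:simpleGadget}. I first realize two auxiliary relations over $\{a,c\}$ by interpolation, and then use them as primitives in a single exact-count gadget that realizes an arbitrary $R \subseteq \{a,c\}^q$.

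The first auxiliary is the binary disequality $D = \{(a,c),(c,a)\}$, realized by the gadget $x_1 - u_1 - v - u_2 - x_2$ with lists $L(x_i)=\{a,c\}$, $L(u_i)=\{b,d\}$, and $v$ unrestricted. A routine case analysis on the possible assignments to $u_1,u_2$ and then $v$ yields homomorphism counts $(2,3,3,5)$ at the interface assignments $(a,a),(a,c),(c,a),(c,c)$ respectively (and count $0$ whenever $x_1$ or $x_2$ is mapped outside $\{a,c\}$). Since $3$ is coprime to $2$ and to $5$, and every off-$R$ count either exceeds $1$ or equals $0$, Lemma~\ref{lem:newRelation} applies. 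The second auxiliary is the ``not-all-$c$'' relation $T_q := \{a,c\}^q \setminus \{(c,\ldots,c)\}$, realized by the star gadget with a single unrestricted vertex $u$ adjacent to $q$ interface vertices of list $\{a,c\}$. If the interface is $(c,\ldots,c)$, then $u$ may be either element of $N(c)=\{b,d\}$ and contributes $2$ extensions; any other interface assignment in $\{a,c\}^q$ has some $x_i = a$, forcing $u \in N(a) = \{b\}$; interface assignments outside $\{a,c\}^q$ are ruled out by the lists. Coprimality of $1$ and $2$ again lets Lemma~\ref{lem:newRelation} realize $T_q$.

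With $D$ and $T_q$ available I realize $R$ through one more reduction using Corollary~\ref{cor:simpleGadget}. For each forbidden tuple $\mathbf{h}^* \in \{a,c\}^q \setminus R$ I attach a clause-gadget enforcing $\mathbf{h} \neq \mathbf{h}^*$: for each coordinate $i$ with $h^*_i = a$ I introduce a fresh auxiliary vertex $y^*_i$ joined to $x_i$ by a $D$-constraint (this functionally pins $y^*_i$ to the ``flip'' of $x_i$), while for coordinates with $h^*_i = c$ I set $\tilde y^*_i := x_i$. I then impose a single $T_q$-constraint on $(\tilde y^*_1,\ldots,\tilde y^*_q)$, with $\tilde y^*_i := y^*_i$ for the $h^*_i = a$ coordinates. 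By construction $\tilde y^*_i = c$ iff $x_i = h^*_i$, so the clause is violated precisely on $\mathbf{h} = \mathbf{h}^*$. Stacking these clauses over all $\mathbf{h}^* \in \{a,c\}^q \setminus R$ yields a gadget with exactly one extension per $\mathbf{h} \in R$ (the $D$-constraints determine each $y^*_i$ uniquely) and zero extensions per $\mathbf{h} \notin R$; Corollary~\ref{cor:simpleGadget} then realizes $R$ from $\{D, T_q\}$. Transitivity composes the three reductions, and since $q$ is fixed every gadget involved has constant size, so the overall reduction is pathwidth-preserving.

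The main obstacle is that in $P_4$ we have the strict inclusion $N(a) \subsetneq N(c)$: any list-homomorphism gadget that set-theoretically admits the interface pair $(a,c)$ automatically admits $(c,c)$ as well, so relations such as $D$ have no exact set-theoretic realization. The only way to isolate them is to exploit the coprime-extension-count trick of Lemma~\ref{lem:newRelation}; once this interpolation step is done for $D$ (and the analogous simpler one for $T_q$), the rest of the proof is a direct circuit-style gadget construction over these two primitives.
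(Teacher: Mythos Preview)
Your proof is correct and follows essentially the same approach as the paper: both realize $\NEQ$ (your $D$) and $\OR_q$ (your $T_q$) using the very same five-vertex path and star gadgets with extension counts $(2,3,3,5)$ and $(1,2)$, then express an arbitrary $R$ as a conjunction of ``not equal to $\mathbf{h}^*$'' clauses by flipping the $a$-coordinates via $\NEQ$ and invoking $\OR_q$. The only cosmetic difference is in the final combining step: the paper realizes each single-tuple complement $R_i = \{a,c\}^q\setminus\{\boldf_i\}$ separately by an induction that flips one $a$-coordinate at a time and then intersects the $R_i$ via Lemma~\ref{lem:simplecombinations}, whereas you pack all clauses into a single exact-count gadget (flipping all $a$-coordinates simultaneously per clause) and invoke Corollary~\ref{cor:simpleGadget} once---a slightly more streamlined execution of the same idea, and one that keeps the reduction chain to a fixed length of three regardless of $q$.
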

\begin{proof}
We first show that the relations 
$\NEQ :=  \{(a,c), (c,a)\}$ and $\OR_q :=  \{a,c\}^q \setminus \{c^q\}$
are realizable by $P_4$. We will then use these relations to show the statement of the lemma.

First, let us focus on $\NEQ$. We aim to use Lemma~\ref{lem:newRelation}.
Let $\calJ^{\NEQ}$ be a five-vertex path $(u - w_1 - w_2 - w_3 - v)$.
The interface of $\calJ^{\NEQ}$ is $\boldx= (u,v)$.
The lists are $L(u) = L(v) = L(w_2) = \{a,c\}$, and $L(w_1) = L(w_3) = \{b,d\}$.

Let us analyze the values of $\hom{(\calJ^{\NEQ},\boldx)}{(P_4,\boldf)}$ for distinct $\boldf \in V(P_4)^2$:
\begin{align*}
&\hom{(\calJ^{\NEQ},\boldx)}{(P_4,(a,a))} =   2,\\
&\hom{(\calJ^{\NEQ},\boldx)}{(P_4,(a,c))}  =   3,\\
&\hom{(\calJ^{\NEQ},\boldx)}{(P_4,(c,a))}  =   3,\\
&\hom{(\calJ^{\NEQ},\boldx)}{(P_4,(c,c))}  =   5.\\
&\hom{(\calJ^{\NEQ},\boldx)}{(P_4,\boldf)}  =   0\text{, otherwise}.
\end{align*}
So, for each $\boldf\in \NEQ$ and $\boldg\in V^2\setminus \NEQ$,
we have $\hom{(\calJ, (s,t))}{(\calH',\boldf)}=3$ and $\hom{(\calJ, (s,t))}{(\calH',\boldg)}\in \{0,2,5\}$.
Thus, $\calJ^{\NEQ}$ satisfies the assumptions of Lemma~\ref{lem:newRelation} and we obtain that $\NEQ$ is realizable by $P_4$.

Now let us consider $\OR_q$.
Define $\calJ^{\OR_q}$ as follows. We introduce $q$ vertices $v_1,v_2,\ldots,v_q$ and one additional vertex $w$, adjacent to all $v_i$'s. We set $L(v_i) = \{a,c\}$ for all $i \in [q]$, and $L(w) = \{b,d\}$.
The interface of $\calJ^{\OR_q}$ is $\boldx= (v_1,v_2,\ldots,v_q)$.

For $\boldf\in V(P_4)^q$ we have
\begin{align*}
\hom{(\calJ^{\OR_q},\boldx)}{(P_4,\boldf)}  =   0 & \text{ if } \boldf \notin \{a,c\}^q,\\
\hom{(\calJ^{\OR_q},\boldx)}{(P_4,\boldf)}  =   1 & \text{ if } \boldf \in \{a,c\}^q\setminus c^q,\\
\hom{(\calJ^{\OR_q},\boldx)}{(P_4,\boldf)}  =   2 & \text{ if } \boldf = c^q.
\end{align*}
Again, $\calJ^{\OR_q}$ satisfies the assumptions of Lemma~\ref{lem:newRelation} and we obtain that $\OR_q$ is realizable by $P_4$.

Finally, consider an arbitrary relation $R \subseteq \{a,c\}^q$.
Let $\{\boldf_1,\boldf_2,\ldots,\boldf_p\} = \{a,c\}^q \setminus R$, and for each $i \in [p]$ let $R_i := \{a,c\}^q \setminus \{\boldf_i\}$.
Clearly $R = \bigcap_{i=1}^p R_i$.

Thus, by Lemma~\ref{lem:simplecombinations}, it is sufficient  to show that each $R_i$ is realizable.
Fix some $i$ and let $J$ be the set of the indices $j \in [q]$, such that the $j$-th coordinate of $\boldf_i$ is $a$ (and the other ones are 
$c$). If $|J|=\emptyset$, then $R_i = \OR_q$ and we are done. So suppose that $|J| \geq 1$ and we can realize all relations $R' = \{a,c\}^q \setminus \{\boldf\}$, where the number of coordinates of $\boldf$ that are equal to $a$ is smaller than $|J|$.

So let us choose some $j \in J$. For each tuple $\boldf\in \{a,c\}^q$ let $\boldf'$ be the tuple in $\{a,c\}^q$ obtained from $\boldf$ by changing the $j$-th coordinate from $a$ to $c$ or from $c$ to $a$, whichever applies.
Consider $R'_i = \{a,c\}^q \setminus \{\boldf_i'\}$. Since $j\in J$, $\boldf_i'$ is obtained from $\boldf_i$ by swapping the $j$-th coordinate from $a$ to $c$. By the inductive assumption, $R'_i$ is realizable.

Note that $R_i=\{\boldf \mid \boldf'\in R_i'\}$. We will use Corollary~\ref{cor:simpleGadget} to show that $R_i$ is realizable by the structure $\calH=(V(P_4), \{E(P_4), R_i', \NEQ\})$, which implies that $R_i$ is realizable by $P_4=(V(P_4), E(P_4))$. Slightly abusing notation, we use $E$, $R_i'$, and $\NEQ$ also as corresponding relation symbols in the signature of $\calH$.
We define a gadget $\calJ$ on $q+1$ vertices $\{v_1, \dots, v_q, u\}$ with interface $\boldx=(v_1, \dots, v_q)$. We apply $R_i'$ to the tuple $(v_1,\dots v_{j-1}, u, v_{j+1}, \dots v_q)$ and we apply $\NEQ$ to $(u,v_j)$, i.e.~$R_i'^{\calJ}=\{(v_1,\dots v_{j-1}, u, v_{j+1}, \dots v_q)\}$, $\NEQ^{\calJ}=\{(u,v_j)\}$, and $E^{\calJ}=\emptyset$.
Clearly, $\calJ$ has the same signature as $\calH$. Moreover, $\hom{(\calJ, \boldx)}{(\calH, \boldf)}\neq0$ iff $\boldf\in R_i$.
\end{proof}

Note that Lemma~\ref{lem:basicP4} together with the proof of Theorem~\ref{thm:bipartitehardness} already yield the following result.

\begin{cor}\label{cor:hardnessp4}
Assuming the \#SETH, there is no $\epsilon >0$, such that $\LHom{P_4}$ on $n$-vertex instances
given with a path decomposition of width $t$ can be solved in time $(2-\epsilon)^t \cdot n^{\bigO(1)}$.
\end{cor}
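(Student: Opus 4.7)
The plan is to reduce an arbitrary $R \subseteq \{a,c\}^q$ to two basic building blocks: an inequality relation $\NEQ := \{(a,c),(c,a)\}$ and a ``disjunction'' $\OR_q := \{a,c\}^q \setminus \{(c,c,\ldots,c)\}$. Once both are realizable, I can express any $R$ as an intersection of relations of the form $R_{\boldf} := \{a,c\}^q \setminus \{\boldf\}$, combine them with Lemma~\ref{lem:simplecombinations}(1), and build each $R_{\boldf}$ from $\OR_q$ by ``flipping'' the appropriate coordinates using $\NEQ$.

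To realize $\NEQ$, I would use a short path gadget such as $(u-w_1-w_2-w_3-v)$ with interface $(u,v)$, lists $\{a,c\}$ on $u,v,w_2$, and lists $\{b,d\}$ on $w_1,w_3$. A case analysis on the image of the interface yields a small integer for each of the four assignments in $\{a,c\}^2$; the point is to design the lists so that the two elements of $\NEQ$ get the same (nonzero) count, the two non-$\NEQ$ assignments get distinct counts that are neither $0$ followed by $1$ nor share a prime factor with the $\NEQ$-count, so that Lemma~\ref{lem:newRelation} applies. The symmetry of the path and the fact that $P_4$ contains exactly two vertices of each degree makes the counting mechanical.

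For $\OR_q$, the natural gadget is a star: introduce one new vertex $w$ with list $\{b,d\}$, and let the interface consist of $q$ leaves $v_1, \ldots, v_q$, each with list $\{a,c\}$ and each adjacent to $w$. If the interface is assigned a tuple containing at least one $a$, then $w$ is forced to $b$ (since $a$'s only neighbor in $P_4$ is $b$), giving exactly one extension; if the interface is all $c$, then $w$ can be either $b$ or $d$, giving two extensions. The values $1$ and $2$ satisfy the coprimality requirement of Lemma~\ref{lem:newRelation} and avoid the forbidden value of $1$ outside the relation, so $\OR_q$ is realizable.

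Finally, given any $R \subseteq \{a,c\}^q$, write $R = \bigcap_{i} R_{\boldf_i}$ over the forbidden tuples and apply Lemma~\ref{lem:simplecombinations}(1). To realize $R_{\boldf_i}$, I proceed by induction on the number of coordinates of $\boldf_i$ equal to $a$: the base case $\boldf_i = (c,\ldots,c)$ is $\OR_q$ itself, and in the inductive step I pick a coordinate $j$ where $\boldf_i$ is $a$, attach a $\NEQ$-gadget between coordinate $j$ and a fresh variable $u$, and feed $u$ together with the remaining interface into the already-realized relation that excludes the tuple with coordinate $j$ flipped to $c$. Formally this gives a gadget whose signature contains $\NEQ$ and the inductively realized relation, and Corollary~\ref{cor:simpleGadget} applies because an extension exists precisely when the interface lies in $R_{\boldf_i}$. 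The main obstacle I anticipate is purely arithmetic: choosing the internal lists of the $\NEQ$-gadget so that the four homomorphism counts satisfy the coprimality hypothesis of Lemma~\ref{lem:newRelation}; everything else is structural and should follow routinely once those counts are in hand.
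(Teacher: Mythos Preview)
What you have written is a faithful sketch of the paper's proof of Lemma~\ref{lem:basicP4} (realizability of every $R\subseteq\{a,c\}^q$ by $P_4$): the $\NEQ$ path gadget, the $\OR_q$ star gadget, and the inductive coordinate-flipping argument all match the paper almost verbatim. However, this is \emph{not} a proof of Corollary~\ref{cor:hardnessp4}. The corollary is a running-time lower bound, and nothing in your proposal connects ``all relations over $\{a,c\}$ are realizable'' to ``no $(2-\epsilon)^t\cdot n^{\bigO(1)}$ algorithm exists under \#SETH''.

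The missing step is precisely the content of the proof of Theorem~\ref{thm:bipartitehardness} specialized to $q=\irr(P_4)=2$: take an instance $\Psi$ of \#\CSP$(2,r)$ with variable set $U$, view it as an instance of $\LHom{(\{a,c\},\calR)}$ of pathwidth at most $|U|$, use the realizability of each relation in $\calR$ (your Lemma~\ref{lem:basicP4} argument) to obtain a pathwidth-preserving reduction to $\LHom{P_4}$, and then invoke Theorem~\ref{thm:qcsp} to conclude that a $(2-\epsilon)^t\cdot n^{\bigO(1)}$ algorithm would violate \#SETH. You need to say this; otherwise your proposal proves a lemma used in the corollary, but not the corollary itself.
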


\subsection{Structure of $P_4$s in $H$}
Let $H$ be a bipartite graph with bipartition $(X,Y)$.
We define
\begin{align*}
\calP_3(H) := & \{ S \subseteq V(H) \mid H[S] \simeq P_3 \},\\
\calP_4(H) := & \{ S \subseteq V(H) \mid H[S] \simeq P_4 \}.
\end{align*}

By $\pfour{H}$ we denote the following graph:
\begin{align*}
V(\pfour{H}) := & \calP_4(H),\\
E(\pfour{H}) := & \left\{ SS' \in \calP_4(H)^2 \mid (|(S \cap S') \cap X|=2 \lor |(S \cap S') \cap Y|=2)\right\}.
\end{align*}
Informally speaking, we think of two induced four-vertex paths as adjacent if they share two vertices from one bipartition class.
Let us point out that in the definition of $E(\pfour{H})$ we do not insist that $S \neq S'$.
Therefore the graph $\pfour{H}$ is reflexive, i.e., every vertex has a loop. This is a technical detail that allows us to simplify some arguments.

\begin{lem}\label{lem:p4connected}
Let $H$ be a connected irredundant bipartite graph.
Then $\pfour{H}$ is connected.
\end{lem}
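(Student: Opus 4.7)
The plan is to prove $\pfour{H}$ is connected by showing any two induced $P_4$s lie in the same connected component. The central primitive is that two induced $P_4$s sharing three vertices are automatically adjacent in $\pfour{H}$: among any three of the four vertices of a $P_4$, two must lie in the same bipartition class of $H$ (each class contains exactly two of the four). In particular, both the \emph{slide} $(a-b-c-d) \to (a'-b-c-d)$ and the \emph{rotate} $(a-b-c-d) \to (b-c-d-e)$ are safe transitions whenever the target is itself an induced $P_4$.

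The first structural step I plan is a \emph{same-middle-edge lemma}: any two induced $P_4$s $(a-b-c-d)$ and $(a'-b-c-d')$ with the same middle edge $bc$ lie in the same component. If $a' \not\sim d$ or $a \not\sim d'$, an intermediate $P_4$ obtained by a single slide suffices. Otherwise, the auxiliary tuples $(d-a'-b-a)$ and $(d'-a-b-a')$ are both induced $P_4$s---the required non-adjacencies $d\not\sim a$, $d'\not\sim a'$, $a\not\sim a'$, etc., all follow from bipartiteness of $H$ and from the inducedness of the original $P_4$s---and successive slides through them provide a three-hop path in $\pfour{H}$.

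Next I plan to prove an \emph{extension lemma}: if $P = (a-b-c-d)$ lies in a component $C$ of $\pfour{H}$ and $v \in N(V(P)) \setminus V(P)$, then some induced $P_4$ containing $v$ lies in $C$. By the reversal symmetry of $P$ it suffices to treat $v \sim a$ and $v \sim b$. For $v \sim a$: use $(v-a-b-c)$ when $v \not\sim c$, and $(a-v-c-d)$ when $v \sim c$. For $v \sim b$: use $(v-b-c-d)$ when $v \not\sim d$, and $(a-b-v-d)$ when $v \sim d$. In each case the target is an induced $P_4$ by bipartiteness together with the inducedness of $P$ (the ``missing'' non-adjacencies all come for free from same-side pairs or from $P$ being induced), and it shares three vertices with $P$, so it belongs to $C$.

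To conclude, fix $P_1 \in C$ and an arbitrary induced $P_4$ $P_2 = (a_2-b_2-c_2-d_2)$. An induction on $\dist(v, V(P_1))$, using connectivity of $H$ and the extension lemma, shows that every vertex of $H$ that appears in some induced $P_4$ also appears in some induced $P_4$ of $C$; in particular, some $Q \in C$ contains $b_2$. A further application of the extension lemma with $v = c_2$ produces a $P_4$ in $C$ containing the edge $b_2c_2$, and one more application with $v = a_2$ places a $P_4$ with middle edge exactly $b_2c_2$ into $C$; the same-middle-edge lemma then yields $P_2 \in C$. The main obstacle is this endgame: naive extensions do not always directly give the correct middle edge, and the fallback configurations (where $a_2$ is adjacent to an unwanted vertex of the current $P_4$) must be resolved by iterating the extension and same-middle-edge lemmas. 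Irredundancy of $H$ is used here to preclude twin-neighborhood obstructions that would otherwise block the endpoint substitutions.
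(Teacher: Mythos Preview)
Your two auxiliary lemmas are correct and cleanly argued: two induced $P_4$s sharing three vertices are always adjacent in $\pfour{H}$, the same-middle-edge argument (including the 4-cycle fallback via $(d-a'-b-a)$ and $(d'-a-b-a')$) is sound, and in every branch of your extension lemma the produced $P_4$ retains the chosen neighbour of $v$, so one extension indeed yields a $P_4$ in $C$ containing the \emph{edge} $b_2c_2$.

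The gap is the final step. The assertion ``one more application with $v=a_2$ places a $P_4$ with middle edge exactly $b_2c_2$ into $C$'' is false in general. Concretely, if the $P_4$ you reached is $Q'=(c_2-b_2-p-q)$ (so $b_2c_2$ is an end edge) and $a_2\not\sim q$, your extension at $b_2$ produces $(a_2-b_2-p-q)$, which drops $c_2$ entirely and has middle edge $b_2p$; iterating with $c_2$ just returns you to $Q'$. The fix is to bring in $d_2$: if $d_2\not\sim p$ then $(d_2-c_2-b_2-p)$ is an induced $P_4$ in $C$ with middle edge $b_2c_2$; if $d_2\sim p$ then $(a_2-b_2-p-d_2)$ is an induced $P_4$ in $C$ sharing $\{a_2,b_2,d_2\}$ with $P_2$. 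A symmetric split handles $Q'=(b_2-c_2-p-q)$ with $a_2\sim p$. So your plan can be completed, but not by ``iterating'' with $a_2$ alone, and not without this small case analysis you have omitted. Also, your remark that irredundancy ``precludes twin-neighbourhood obstructions'' in the endgame is off: none of your steps uses it. In the paper's proof irredundancy is used for a different purpose, namely to guarantee that every edge and every induced $P_3$ extends to an induced $P_4$, which is what makes their edge-labelling strategy work.

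For comparison, the paper organises the argument differently: it labels each induced $P_3$ (and then each edge) by the $\pfour{H}$-component of any $P_4$ containing it, proves via an explicit case analysis that all $P_4$s through a fixed edge lie in one component, and then uses connectivity of $H$ to conclude that all edges carry the same label. Your extension-lemma route is more modular and arguably more conceptual, but once the endgame is written out honestly, the amount of casework is comparable.
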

\begin{proof}
Note that if $|V(H)| \leq 3$, then $\pfour{H} = \emptyset$ and we are done.
Thus, suppose that $H$ has at least four vertices.
Let $C_1,C_2,\ldots,C_p$ be the connected components of $\pfour{H}$.
Let $c \from \calP_4(H) \to [p]$ be a function such that $c(S)=i$ if and only if $S$ belongs to $C_i$.

First, let us observe the following.
\begin{clm} \label{clm:inP4}
Every edge and every induced $P_3$ in $H$ is contained in some induced $P_4$.
\end{clm}
\begin{claimproof}
First, let us argue that every induced $P_3$ is contained in some induced $P_4$.
Let $(x-y-z)$ be an induced $P_3$. Since the vertices $x$ and $z$ do not have the same neighborhood, one of them, say $x$,
has a neighbor $w$ that is not adjacent to $z$. Then $(w-x-y-z)$ is an induced $P_4$.
Now consider an edge $xy$. Since $H$ is connected and has at least three vertices, one of the endvertices of $xy$, say $y$, has a neighbor $z$.
Since $H$ is bipartite, $(x-y-z)$ is an induced $P_3$. As we have already shown, it is contained in some induced $P_4$.
\end{claimproof}

Note that if $A \in \calP_3(H)$, then the set $\calS_A := \{S \in \calP_4(H) \mid A \subseteq S\}$ is non-empty (by Claim~\ref{clm:inP4}) and induces a clique in $\pfour{H}$.
In particular, for any $S,S' \in \calS_A$ it holds that $c(S)=c(S')$.
We introduce a labeling $\ell \from \calP_3(H) \to [p]$, where $\ell(A) = i$ if and only if $c(\calS_A) = \{i\}$.

\begin{clm}\label{clm:edgelabels}
For any edge $xy \in E(H)$ and any $A,B \in \calP_3(H)$ such that $\{x,y\} \subseteq A \cap B$,
it holds that $\ell(A) = \ell(B)$.
Consequently, all sets $S \in \calP_4(H)$ that contain $xy$ belong to the same connected component of $\pfour{H}$.
\end{clm}
\begin{claimproof}
Let $A = \{x,y,a\}$ and $B= \{x,y,b\}$, where $a \neq b$.
If $\{x,y,a,b\} \in \calP_4(H)$, then $\{x,y,a,b\} \in \calS_{A} \cap \calS_B$ and thus $\ell(A) = \ell(B) = c(\{x,y,a,b\})$.

So suppose that $\{x,y,a,b\} \notin \calP_4(H)$. This is possible in two cases: (1) if $a$ and $b$ are adjacent to the same vertex from $\{x,y\}$ (and thus $H[\{x,y,a,b\}]$ is a star with $3$ leaves), or (2) if $ab \in E(H)$ (and thus $H[\{x,y,a,b\}]$ is a $4$-cycle).

By Claim~\ref{clm:inP4}, each of the sets $A,B$ belongs to some induced $P_4$, 
i.e., there are vertices $c,d$, such that $\{x,y,a,c\}, \{x,y,b,d\} \in \calP_4(H)$.
Note that it is possible that $c=d$.

We will show that there is a walk from $\{x,y,a,c\}$ to $\{x,y,b,d\}$ in $\pfour{H}$,
which will prove that $c(\{x,y,a,c\})=c(\{x,y,b,d\})$ and consequently $\ell(A) = \ell(B)$.
We consider some cases.
\begin{enumerate}[{Case }1.]
\item Suppose that $H[\{x,y,a,b\}]$ is a star with $3$ leaves. By symmetry we assume that $a,b,x \in N(y)$.
We consider possible positions of $c$ and $d$. One of the following cases occurs, see also Figure~\ref{fig:p4case1}.
\begin{enumerate}[{Subcase 1.}1.]
\item $N(x) \cap \{c,d\} = 0$.
If none of the edges $cb, ad$ exists in $H$, then we have a walk $\bigl(\{x,y,a,c\}-\{c,a,y,b\}-\{a,y,b,d\}-\{x,y,b,d\}\bigr)$ in $\pfour{H}$.
If at least one of these edges, say $cb$, exists in $H$, then we have a walk $\bigl(\{x,y,a,c\}-\{x,y,b,c\}-\{x,y,b,d\}\bigr)$ in $\pfour{H}$.

\item $N(x) \cap \{c,d\} = 1$. By symmetry, assume that $dx \in E(H)$.
If $ad \notin E(H)$, then we have a walk $\bigl(\{x,y,a,c\}-\{d,x,y,a\}-\{d,x,y,b\}\bigr)$ in $\pfour{H}$.
If $cb \in E(H)$, then we have a walk $\bigl(\{x,y,a,c\}-\{x,y,b,c\}-\{d,x,y,b\}\bigr)$ in $\pfour{H}$.
If $cb \notin E(H)$ and $ad \in E(H)$, then we have a walk $\bigl(\{x,y,a,c\}-\{c,a,y,b\}-\{d,a,y,b\}-\{d,x,y,b\}\bigr)$ in $\pfour{H}$.

\item $N(x) \cap \{c,d\} = 2$. 
If one of the edges $cb, ad$, say $ad$, does not exist in $H$, then we have a walk $\bigl(\{c,x,y,a\}-\{d,x,y,a\}-\{d,x,y,b\}\bigr)$ in $\pfour{H}$.
If both $cb,ad \in E(H)$, then we have a walk $\bigl(\{c,x,y,a\} - \{c,x,d,a\}-\{d,x,c,b\}-\{d,x,y,b\}\bigr)$ in $\pfour{H}$.
\end{enumerate}

\begin{figure}
\begin{center}
\includegraphics[scale=0.78,page=1]{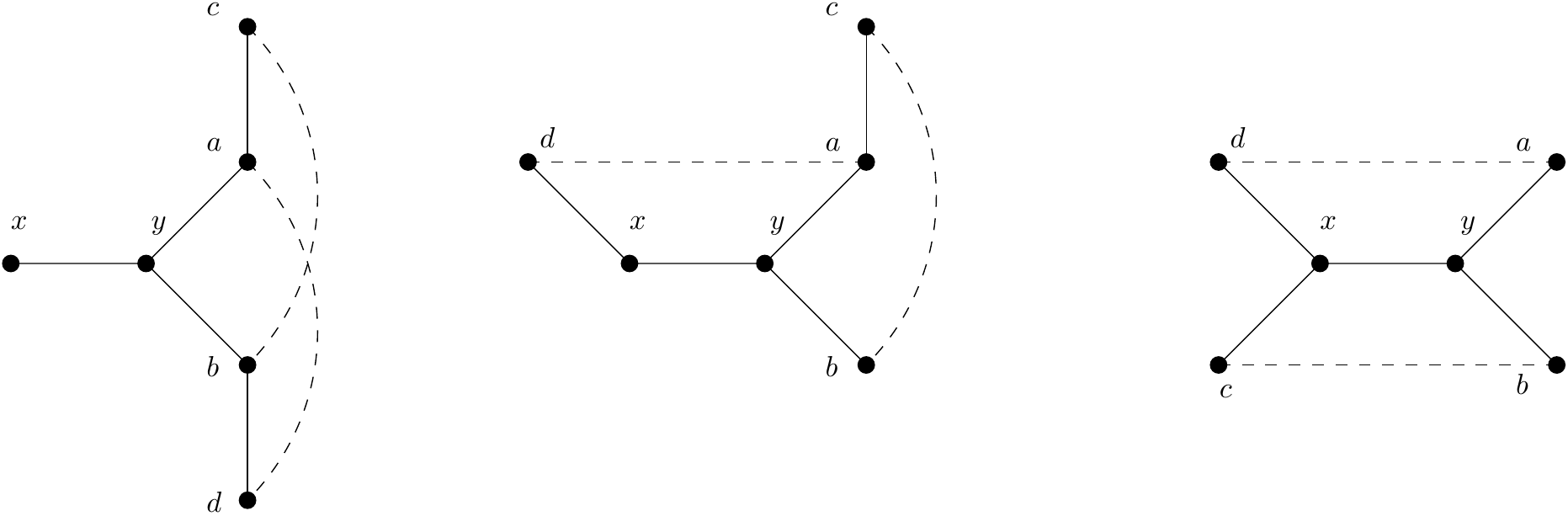}
\end{center}
\caption{Possible configurations in the proof of Case 1 in Claim~\ref{clm:edgelabels}. Dashed edges may, but do not have to exist.
Note that the first and last subcase vertices $c$ and $d$ may coincide.} \label{fig:p4case1}
\end{figure}

\item Suppose that $H[\{x,y,a,b\}]$ is a $4$-cycle. By symmetry we assume that consecutive vertices of the cycle are $a,x,y,b$.
Note that $c$ is adjacent to exactly one of $a, y$, and $d$ is adjacent to exactly one of $x, b$, see also Figure~\ref{fig:p4case2}.

\begin{enumerate}[{Subcase 2.}1.]
\item $dx, cy \in E(H)$.
If $cd \notin E(H)$, then we have a walk $\bigl(\{a,x,y,c\}-\{d,x,y,c\}-\{d,x,y,b\}\bigr)$ in $\pfour{H}$.
If $cd \in E(H)$, then we have a walk $\bigl(\{a,x,y,c\}-\{a,x,d,c\}-\{b,a,x,d\}-\{b,y,x,d\}\bigr)$ in $\pfour{H}$.

\item Exactly one of the edges $xd, yc$, say $xd$ exists in $H$.

If $cd \notin E(H)$, then we have a walk $\bigl(\{c,a,x,y\}-\{c,a,x,d\}-\{b,a,x,d\}\bigr)$ in $\pfour{H}$.
If $cd \in E(H)$, then we have a walk $\bigl(\{c,a,x,y\}-\{c,d,x,y\}-\{d,x,y,b\}\bigr)$ in $\pfour{H}$.

\item $ac, bd \in E(H)$. 
If $cd \notin E(H)$, then we have a walk $\bigl(\{c,a,x,y\}-\{c,a,b,y\}-\{c,a,b,d\}-\{x,a,b,d\}-\{x,y,b,d\}\bigr)$ in $\pfour{H}$.
If $cd \in E(H)$, then we have a walk $\bigl(\{c,a,x,y\}-\{x,a,c,d\}-\{x,a,b,d\}-\{x,y,b,d\}\bigr)$ in $\pfour{H}$.
\end{enumerate}
\end{enumerate}

\begin{figure}
\begin{center}
\includegraphics[scale=0.8,page=2]{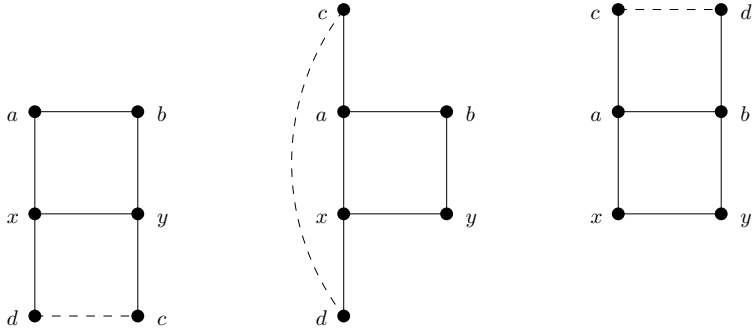}
\end{center}
\caption{Possible configurations in the proof of Case 2 in Claim~\ref{clm:edgelabels}. Dashed edges may, but do not have to exist.} \label{fig:p4case2}
\end{figure}
As these are all possible cases, the proof of the claim is complete.
\end{claimproof}

Claim~\ref{clm:edgelabels} allows us to define a labeling $\ell'$ of edges of $H$, analogous to $\ell$:
for $xy \in E(H)$, we set $\ell'(xy) = c(S)$, where $S$ is any element of $\calP_4(H)$ with $\{x,y\} \in S$.
Note that by Claim~\ref{clm:inP4}, such a set $S$ always exists and by Claim~\ref{clm:edgelabels}, the value of $\ell'(xy)$ does not depend on the choice of $S$. Note that for every induced path $(x-y-z)$ we have $\ell(\{x,y,z\}) = \ell'(xy)=\ell'(yz)$.

Now we claim that for any two edges $e,f \in E(H)$ we have $\ell'(e) = \ell'(f)$.
For contradiction, suppose this is not the case. Since $H$ is connected, this means that there are two edges $xy$ and $xz$, where $y \neq z$, such that $\ell'(xy) \neq \ell'(xz)$. However, this cannot happen as $\ell'(xy)=\ell(\{x,y,z\})=\ell'(yz)$, since $\{x,y,z\} \in \calP_3(H)$.
\end{proof}

\subsection{Constructing $(x,y)$-Forcers}\label{sec:Forcers}
\begin{defn}[$(x,y,S)$-distinguisher, $(x,y,S)$-forcer]\label{def:xyDistinguisher}
Let $H=(V,E)$ be a bipartite graph. Let $S$ be a one-sided subset of $V$ with distinct vertices $x,y \in S$.
Let $\{\alpha,\beta\}$ be a one-sided subset of $V$, and let $R$ be a relation in $S\times \{\alpha,\beta\}$.
\begin{enumerate}
\item $R$ is an \emph{$(x,y,S)$-distinguisher with respect to $(\alpha,\beta)$} if it has the following properties:
\begin{myitemize}
\item $(x,\alpha)\in R$, $(x,\beta) \notin R$.
\item $(y,\beta)\in R$.	
\item $\forall v \in S : R\cap\{(v,\alpha), (v,\beta)\}\neq \emptyset$.		
\end{myitemize}
\item $R$ is an \emph{$(x,y,S)$-forcer with respect to $(\alpha,\beta)$} if it is an $(x,y,S)$-distinguisher with respect to $(\alpha,\beta)$ with the additional property:
\begin{myitemize}
\item $(y,\alpha)\notin R$.
\end{myitemize}
\end{enumerate} 
We say that a relation $R$ is an \emph{$(x,y,S)$-distinguisher} (resp.~\emph{$(x,y,S)$-forcer}) if there are is a one-sided set $\{\alpha,\beta\}$,
such that $R$ is an $(x,y,S)$-distinguisher (resp.~$(x,y,S)$-forcer) with respect to $(\alpha,\beta)$.
\end{defn}

The following lemma is a crucial building block that we will use repeatedly in the results leading up to the proof of Lemma~\ref{lem:allrelations}. Here, given a realizable $(x,y,S)$-distinguisher with respect to some pair of vertices on a four-vertex path, we use Lemma~\ref{lem:primes} to turn this distinguisher into a forcer.

\begin{lem}\label{lem:DistinguisherToForcer}
Let $H=(V,E)$ be an irredundant connected bipartite graph.
Let $S$ be a one-sided subset of $V$ with distinct vertices $x,y \in S$.
Let $R$ be a relation that is realizable by $H$ such that, for some induced $4$-vertex path $(a-b-c-d)$ in $H$, $R$ is an $(x,y,S)$-distinguisher with respect to $(a,c)$ or with respect to $(c,a)$.
Then the following relations are realizable by $H$.
\begin{myenumerate}
\item an $(x,y,S)$-forcer with respect to $(a,c)$, \label{it:ac}
\item an $(x,y,S)$-forcer with respect to $(c,a)$, \label{it:ca}
\item an $(x,y,S)$-forcer with respect to $(b,d)$. \label{it:bd}
\item an $(x,y,S)$-forcer with respect to $(d,b)$, \label{it:db}
\end{myenumerate}
\end{lem}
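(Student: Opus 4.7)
I first realize two auxiliary \emph{neq-relations}: $\NEQ_{ac} := \{(a,c),(c,a)\}$ using the path gadget from the proof of Lemma~\ref{lem:basicP4}, which only depends on the induced $P_4$ and therefore works inside $H$, and $\NEQ_{bd} := \{(b,d),(d,b)\}$ using the analogous path gadget on $\{b,d\}$. I also realize the ``diagonal matching'' $M := \{(a,b),(c,d)\}$ by a gadget with interface $(v_1,v_2)$, lists $L(v_1)=\{a,c\}$ and $L(v_2)=\{b,d\}$, an edge constraint $v_1 v_2$, helpers $u_1, u_2$ with $L(u_1)=\{a,c\}$ and $L(u_2)=\{b,d\}$, the constraints $\NEQ_{ac}(v_1, u_1)$ and $\NEQ_{bd}(v_2, u_2)$, and an edge $u_1 u_2$. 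Since $(a,d) \notin E(H)$ (the $P_4$ is induced), a direct count shows this gadget yields $1$ on $\{(a,b),(c,d)\}$ and $0$ on $(c,b)$ (and outside $\{a,c\}\times\{b,d\}$), so Corollary~\ref{cor:simpleGadget} gives that $M$ is realizable.

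For item (1) I may assume, without loss of generality, that $R$ is a distinguisher with respect to $(a,c)$ (the case $(c,a)$ is entirely symmetric). If $(y,a) \notin R$ then $R$ itself already satisfies the forcer definition and we are done; otherwise I would realize $R' := R \setminus \{(y,a)\}$ via Lemma~\ref{lem:primes}. The gadget $\calJ$ would have interface $(v_1, v_2)$ and apply $R$ on $(v_1, v_2)$, augmented by short path-pendants attached to $v_2$ that exploit the $P_4$-asymmetry $|N(a)\cap\{b,d\}|=1$ versus $|N(c)\cap\{b,d\}|=2$ (yielding coprime multiplicative factors coming from counts of short walks into $\{a,c\}$ through $\{b,d\}$), together with sub-gadgets attached to $v_1$ that use the irredundancy of $H$ to separate $y$ from the other vertices of $S$ sharing its $R$-image. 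Verifying the nonzero, non-unit, and pairwise-coprime count conditions of Lemma~\ref{lem:primes} for $(y,a)$ against every $\boldf \in R'$ then realizes $R'$.

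Having item (1), items (2)--(4) follow by compositions that are realizable via Lemma~\ref{lem:simplecombinations}(3): item (2) is $R^*_{ac} \con \NEQ_{ac}$, item (3) is $R^*_{ac} \con M$, and item (4) is $R^*_{ca} \con M$, where $R^*_{ac}$ is the forcer from item (1) and $R^*_{ca}$ is the forcer from item (2). A straightforward inspection against the forcer definition confirms each composition: for instance $R^*_{ac} \con M$ sends $x \to a \to b$ and $y \to c \to d$, never $x \to d$ (since $(x,c) \notin R^*_{ac}$) and never $y \to b$ (since $(y,a) \notin R^*_{ac}$), hence is a forcer with respect to $(b,d)$.

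The main obstacle is in item (1), namely designing the interpolation gadget. A gadget using only $R$ together with pendants from the $P_4$ yields counts that depend only on the set $R(v_1)$ and on $v_2$, so it cannot separate $(y,a)$ from $(v,a)$ whenever some $v \in S \setminus \{y\}$ has $R(v) = \{a,c\}$. Breaking this degeneracy requires using the irredundancy of $H$ — the $H$-neighborhood of $y$ is distinct from that of every other vertex of $S$ — via sub-gadgets attached to $v_1$ whose counts are sensitive to the actual neighborhood of $v_1$ in $H$, while ensuring that all the resulting counts stay pairwise coprime in the sense demanded by Lemma~\ref{lem:primes}.
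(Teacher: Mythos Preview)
Your reductions for items (2)--(4) are correct, and the diagonal matching $M=\{(a,b),(c,d)\}$ is actually a neater route than the paper's: the paper composes the forcer from item~(1) with $R_{\{a,c\}\to\{b,d\}}$, which yields only a \emph{distinguisher} with respect to $(b,d)$, and then re-runs the interpolation step to upgrade it to a forcer; your $M$ produces a forcer in one shot.

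The genuine gap is in item~(1). You aim to realize $R':=R\setminus\{(y,a)\}$ and correctly observe that any gadget built from $R$ together with $P_4$-pendants sees the first interface vertex $v_1$ only through the set $R(v_1)\subseteq\{a,c\}$, so it cannot separate $(y,a)$ from $(v,a)$ whenever some other $v\in S$ also has $R(v)=\{a,c\}$. Your proposed remedy---sub-gadgets on $v_1$ exploiting irredundancy of $H$---is never actually constructed; you give no count computation and no coprimality verification for Lemma~\ref{lem:primes}. This is not a proof but a hope, and there is no evident reason it should work uniformly for arbitrary $H$ and $S$. The paper sidesteps the obstacle entirely by choosing a different target relation: it removes $(v,a)$ for \emph{every} $v$ in $S_{ac}:=\{v\in S:R(v)=\{a,c\}\}$, not just for $y$. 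The result is still an $(x,y,S)$-forcer (since $x\in S_a:=\{v:R(v)=\{a\}\}$ and $y\in S_{ac}$, every vertex of $S$ still has an image), and now a gadget that applies $R$ twice---to $(s,t)$ and to $(s,t')$---with a short $\{b,d\}/\{a,c\}$-listed path hanging off $t,t'$ gives counts $\{2,6,8\}$ on the new $R'$ and $\{0,5\}$ off it, so Lemma~\ref{lem:primes} applies directly. The idea you are missing is that a forcer need not be the distinguisher minus a \emph{single} tuple; removing the whole ambiguous block at once is both legitimate and what dissolves the degeneracy.
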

\begin{proof}
First note that, since $R$ is realizable by $H$, there is a pathwidth-preserving reduction from $\LHom{(V, E\cup\{R\})}$ to $\LHom{H}$. So it suffices to show that the relations in items~\ref{it:ac}-\ref{it:db} are realizable by $\calH'=(V, E\cup\{R\})$.
	
Recall that by Lemma~\ref{lem:basicP4}, the relations $\NEQ(\{a,c\})=\{(a,c),(c,a)\}$  and $\NEQ(\{b,d\})=\{(b,d),(d,b)\}$ are realizable by $P_4$ and thus by $H$.
Note that if $R$ is an $(x,y,S)$-distinguisher with respect to $(c,a)$, then $R \con \NEQ(\{a,c\})$ is an $(x,y,S)$-distinguisher  with respect to $(a,c)$. Furthermore, by Lemma~\ref{lem:simplecombinations}, this relation is realizable by $\calH'$.
So from now on we can assume that $R$ is an $(x,y,S)$-distinguisher with respect to $(a,c)$, realizable by $\calH'$.

Let us first focus on proving item~\ref{it:ac}.
\begin{clm} \label{clm:forcerac}
Let $R$ is an $(x,y,S)$-distinguisher with respect to $(a,c)$, realizable by $\calH'$.
Then there is an $(x,y,S)$-forcer $R'$ with respect to $(a,c)$, realizable by $\calH'$.
\end{clm}
\begin{claimproof}
Note that if $(y,a)\notin R$ then we can choose $R'=R$ and are done. So we can assume that $(y,a)\in R$.
	
Slightly overloading notation, we use $E$ and $R$ also as the relation symbols in $\sigma(\calH')$ corresponding to the relations $E$ and $R$, respectively.
We now define a gadget $\calJ=(U, \boldS)$ with two interface vertices $s,t$ that has the same signature as $\calH'$.
The vertices of $\calJ$ are $U=\{s,t,t', u_1, u_2, u_3\}$, the edge relation is $E^\calJ=E((t-u_1- t'- u_2- u_3))$, and $R^\calJ=\{(s,t), (s,t')\}$. As lists we set $L(u_1)=L(u_2)=\{b,d\}$ and $L(u_2)=\{a,c\}$. Intuitively, $\calJ$ is a path $(t-u_1- t'-u_2- u_3)$, where in addition $s,t$ and $s,t'$ are subject to the relation $R$. The gadget $\calJ$ is illustrated in the following figure:
	
\begin{center} 
\begin{tikzpicture}
	[scale=0.8, node distance = 1.4cm]
	\tikzstyle{vertex}=[fill=black, draw=black, circle, inner sep=2pt]

	\node[vertex, label={[label distance=.25cm]180:$s$}] (s) at (0  ,0)  {};
	\node[vertex, label={[label distance=.25cm]0:$t$}] (t) at (4  ,1.5) {};
	\node[vertex,label={[label distance=.25cm]270:$t'$}] (tt) at (4  ,-1.5)  {};
	\node[vertex, label={[label distance=.25cm]0:$u_1$}, label={[label distance=1cm]0:$\rightarrow \{b,d\}$}] (u1) at (4  ,0)  {};
	\node[vertex, label={[label distance=.25cm]270:$u_2$},  label={[label distance=.74cm]270:$\rightarrow \{b,d\}$}] (u2) at (6  ,-1.5)  {};
	\node[vertex, label={[label distance=.25cm]270:$u_3$}, label={[label distance=.75cm]270:$\rightarrow \{a,c\}$}] (u3) at (8  ,-1.5)  {};
			
	\draw (s) -- ++(t) node [midway,fill=white, rectangle, draw=black] {$R$};
	\draw (s) -- ++(tt) node [midway,fill=white, rectangle, draw=black] {$R$};
	\draw (t) -- (u1) --(tt) --(u2) --(u3);
\end{tikzpicture}
\end{center}
	
Consider the following subsets of $S$.
\begin{myitemize}
	\item $S_a=\{v\in S \mid (v,a) \in R, (v,c)\notin R\}$,
	\item $S_c=\{v\in S \mid (v,a) \notin R, (v,c)\in R\}$,
	\item $S_{ac}=\{v\in S \mid (v,a), (v,c)\in R\}$.
\end{myitemize}
Note that $R= (S_a\times \{a\}) \cup (S_c\times \{c\}) \cup (S_{ac}\times \{a,c\})$, $x\in S_a$, and $y\in S_{ac}$. We choose $R'=R\setminus\{(v,a) \mid v\in S_{ac}\}$, i.e., $R'= (S_a\times \{a\}) \cup (S_c\times \{c\}) \cup (S_{ac}\times \{c\})$. Note that $R'$ is an $(x,y,S)$-forcer with respect to $(a,c)$. Let us now show that $R'$ is realizable by $\calH'$.
For each tuple $(v,w)\in V^2$, we determine $\hom{(\calJ, (s,t))}{(\calH',(v,w))}$:
\begin{itemize}
	\item If $v\in S_a$, $\hom{(\calJ, (s,t))}{(\calH',(v,a))}=2$.
	\item For $v\in S_c$, $\hom{(\calJ, (s,t))}{(\calH',(v,c))}=6$.
	\item For $v\in S_{ac}$, $\hom{(\calJ, (s,t))}{(\calH',(v,a))}=5$.
	\item For $v\in S_{ac}$, $\hom{(\calJ, (s,t))}{(\calH',(v,c))}=8$.
	\item Otherwise, $\hom{(\calJ, (s,t))}{(\calH',(v,w))}=0$.
\end{itemize}
So, for each $\boldf\in R'$ and $\boldg\in V^2\setminus R'$,
\begin{enumerate}
	\item $\hom{(\calJ, (s,t))}{(\calH',\boldf)}\in \{2,6,8\}$.
	\item  $\hom{(\calJ, (s,t))}{(\calH',\boldg)}\in \{0,5\}$.
\end{enumerate}
Therefore, we can apply Lemma~\ref{lem:primes} to complete the proof. 
\end{claimproof}

Let $R'$ be the $(x,y,S)$-forcer with respect to $(a,c)$ given by Claim~\ref{clm:forcerac}.
To obtain item~\ref{it:ca}, notice that $R' \con \NEQ(\{a,c\})$ is an $(x,y,S)$-forcer with respect to $(c,a)$, and by Lemma~\ref{lem:simplecombinations} it is realizable by $\calH'$.

Now let us focus on \eqref{it:bd}. Note that $R' \con R_{\{a,c\} \to \{b,d\}}$ is an $(x,y,S)$-distinguisher with respect to $(b,d)$, which is realizable by $\calH'$ by Lemma~\ref{lem:simplecombinations}. Thus, applying Claim~\ref{clm:forcerac} with $a$ switched to $b$ and $c$ switched to $d$, we obtain that an $(x,y,S)$-forcer $R''$ with respect to $(b,d)$ is realizable by $\calH'$.

Finally, to show \eqref{it:db}, note that $R'' \con \NEQ(\{b,d\}$ is an $(x,y,S)$-forcer with respect to $(d,b)$, realizable by $\calH'$ (by Lemma~\ref{lem:simplecombinations}). This completes the proof.
\end{proof}

We show a strengthening of Lemma~\ref{lem:DistinguisherToForcer}: Given a realizable $(x,y,S)$-distinguisher with respect to some pair of vertices that are potentially far apart in $H$, we can obtain a realizable $(x,y,S)$-forcer with respect to some pair of vertices on an induced four-vertex path. Moreover, since $P_4$s form a connected structure in $H$ (recall Lemma~\ref{lem:p4connected}), we can even choose a $P_4$ in $H$ and obtain a forcer with respect to any one-sided pair from this very $P_4$.

\begin{lem}\label{lem:forceronP4}
Let $H=(V,E)$ be an irredundant connected bipartite graph.
Fix an induced 4-vertex path $(a-b-c-d)$ in $H$.
Let $S$ be a one-sided subset of $V$ with distinct $x,y\in S$. 
Suppose there is an $(x,y,S)$-distinguisher $R'$ that is realizable by $H$.
Then there is an $(x,y,S)$-forcer $R$ with respect to $(a,c)$ such that $R$ is realizable by $H$.
\end{lem}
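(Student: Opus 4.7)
The plan is to pass from the given distinguisher $R'$ to the desired forcer in three stages: (i) manipulate $R'$ to obtain a realizable $(x,y,S)$-distinguisher whose reference pair lies on an induced $P_4$ of $H$; (ii) apply Lemma~\ref{lem:DistinguisherToForcer} to that $P_4$ to obtain realizable forcers with respect to all orderings of both of its non-adjacent same-side pairs; and (iii) transfer these forcers along a walk in $\pfour{H}$ to the prescribed $P_4$ $(a-b-c-d)$, relying on the connectedness of $\pfour{H}$ (Lemma~\ref{lem:p4connected}).

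For stage (i), suppose $R'$ is an $(x,y,S)$-distinguisher with respect to a one-sided pair $(\alpha,\beta)$. If $\alpha$ and $\beta$ share a common neighbor $w$, then $(\alpha-w-\beta)$ is an induced $P_3$ (since $\alpha,\beta$ are one-sided and hence non-adjacent in $H$), and by Claim~\ref{clm:inP4} this $P_3$ extends to an induced $P_4$ containing both $\alpha$ and $\beta$. Otherwise $N(\alpha)\cap N(\beta)=\emptyset$, and I would pick $u\in N(\alpha)$, $v\in N(\beta)$ as the two neighbors of $\alpha$ and $\beta$ on a shortest $\alpha$-$\beta$ path, so that $\dist(u,v)=\dist(\alpha,\beta)-2$. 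Then $R_{\{\alpha,\beta\}\to\{u,v\}}$ reduces to just the two pairs $(\alpha,u)$ and $(\beta,v)$, and a direct check of the definition shows that $R'\con R_{\{\alpha,\beta\}\to\{u,v\}}$ is an $(x,y,S)$-distinguisher with respect to $(u,v)$; by Lemma~\ref{lem:simplecombinations} it is realizable. Iterating a constant number of times (bounded by the diameter of $H$) yields a realizable distinguisher whose reference pair $(\alpha^*,\beta^*)$ has a common neighbor and thus lies on an induced $P_4$ $P^*=(a^*-b^*-c^*-d^*)$. For stage (ii), by renaming $P^*$ we may assume $\{\alpha^*,\beta^*\}=\{a^*,c^*\}$, and Lemma~\ref{lem:DistinguisherToForcer} then supplies realizable $(x,y,S)$-forcers with respect to each of $(a^*,c^*)$, $(c^*,a^*)$, $(b^*,d^*)$, and $(d^*,b^*)$.

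For stage (iii), by Lemma~\ref{lem:p4connected} there is a walk $P^*=P_0,P_1,\dots,P_\ell=(a-b-c-d)$ in $\pfour{H}$, of length bounded by the (constant) number of $P_4$s in $H$. I would prove by induction on $i$ that for every ordering $(p,q)$ of a non-adjacent same-side pair of $P_i$ there is a realizable $(x,y,S)$-forcer with respect to $(p,q)$. The base case $i=0$ is the conclusion of stage (ii). For the inductive step, the definition of $E(\pfour{H})$ forces $P_i$ and $P_{i+1}$ to share two vertices in a single bipartition class of $H$, and these two vertices are precisely one of the non-adjacent same-side pairs of $P_{i+1}$. By induction, a realizable forcer with respect to this shared pair exists; viewing it as a distinguisher and applying Lemma~\ref{lem:DistinguisherToForcer} to $P_{i+1}$ (after possibly reversing $P_{i+1}$ so that the shared pair takes the role of $(a,c)$) produces realizable forcers with respect to all four orderings of the inner pairs of $P_{i+1}$. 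Taking $i=\ell$ and $(p,q)=(a,c)$ delivers the required $(x,y,S)$-forcer with respect to $(a,c)$.

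The hard part will be executing stage (iii) carefully: one must verify that every edge of $\pfour{H}$ shares a non-adjacent same-side pair of both endpoints (so that Lemma~\ref{lem:DistinguisherToForcer} can be invoked at each step with the shared pair placed in the role of $(a,c)$), and one must also check that the entire three-stage chain of pathwidth-preserving reductions has constant length, so that the pathwidth blow-up remains constant. The former follows directly from the definition of $E(\pfour{H})$, since in any $P_4$ the two vertices in a given bipartition class are exactly one of its non-adjacent same-side pairs; the latter follows because both the diameter of $H$ and the size of $\pfour{H}$ are constants determined by the fixed target graph $H$.
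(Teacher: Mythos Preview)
Your proposal is correct and follows essentially the same three-stage approach as the paper's proof: first reduce the reference pair to one lying on an induced $P_4$, then invoke Lemma~\ref{lem:DistinguisherToForcer}, and finally propagate along a walk in $\pfour{H}$ using Lemma~\ref{lem:p4connected}. The only minor difference is in stage~(i): the paper composes with relations $R_i$ that advance one endpoint by two along a fixed shortest path while keeping the other endpoint at $\beta$, whereas you move both endpoints inward by one at each step; both variants work and yield a distinguisher whose reference pair sits on an induced $P_4$ after a constant number of compositions.
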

\begin{proof}
The proof is divided into two parts.
First, we show that there is a realizable forcer with respect to a one-sided pair of vertices on \emph{some} induced $P_4$ in $H$.

\begin{clm} \label{clm:forceronP4}
There exist an induced $4$-vertex path $(a' - b' - c' -d' )$ in $H$ and an $(x,y,S)$-forcer with respect to $(a',c')$ that is realizable by $H$.
\end{clm}
\begin{claimproof}
Let $\{\alpha,\beta\}$ be a one-sided set such that $R'$ is an $(x,y,S)$-distinguisher with respect to $(\alpha,\beta)$.
If $\alpha$ and $\beta$ have a common neighbor then, since $H$ is irredundant and has at least four vertices, there is an induced $4$-vertex path in $H$ that is of the form $(\alpha -b' - \beta -d')$ or of the form $(d' - \alpha -b' - \beta)$. In either case, the statement follows from Lemma~\ref{lem:DistinguisherToForcer}.

Otherwise, there is a shortest path $P=(p_1 - \ldots - p_k)$ from $\alpha$ to $\beta$ ($p_1= \alpha$, $p_k= \beta$) with $k$ vertices. Since $\alpha$ and $\beta$ are in the same bipartition class we have $k\ge 5$. For each $i\in \{1,\ldots, k-4\}$, the relation
\[
R_i= R_{\{p_i, p_k\}\to \{p_{i+1}, p_{k-1}\}}\con R_{\{p_{i+1}, p_{k-1}\}\to \{p_{i+2}, p_{k}\}}
\]
is a $(p_i, p_k, \{p_i, p_k\})$-forcer with respect to $(p_{i+2}, p_k)$, and $R_i$ is realizable by $H$ according to Lemma~\ref{lem:simplecombinations}.
So, by Lemma~\ref{lem:simplecombinations}, the relation $R'\con R_1\con \ldots\con R_{k-4}$ is a realizable $(x,y,S)$-distinguisher with respect to $(p_{k-2},p_k)$, where $(p_{k-3} - p_{k-2} - p_{k-1} - p_k)$ is an induced path in $H$.
By Lemma~\ref{lem:DistinguisherToForcer}, there is also a realizable $(x,y,S)$-forcer with respect to $(p_{k-2},p_k)$.
\end{claimproof}

Let $(a'-b'-c'-d')$ be as in Claim~\ref{clm:forceronP4}.
Note that $H$ satisfies the assumptions of Lemma~\ref{lem:p4connected}, so there is a sequence $P^{(1)}, P^{(2)}, \ldots, P^{(s)}$ of subsets of $H$, such that (i) each $P^{(i)}$ induces a $P_4$ in $H$, (ii) $P^{(1)} = \{a',b',c',d'\}$ and $P^{(s)}=\{a,b,c,d\}$, and (iii) for each $i \leq s-1$, sets $P^{(i)}$ and $P^{(i+1)}$ share two vertices from the same bipartition class.

We prove the statement by induction on $s$.
If $s=1$, then either $(a-b-c-d) = (a'-b'-c'-d')$ or $(a-b-c-d) = (d'-c'-b'-a')$.
By Lemma~\ref{lem:DistinguisherToForcer}, in either case there is a realizable $(x,y,S)$-forcer with respect to $(a,c)$ and we are done.

So assume that $s \geq 2$ and there is an $(x,y,S)$-forcer $R''$ with respect to $(a'',c'')$, such that $R''$ is realizable by $H$,
where $H[P^{(s-1)}]=(a''-b''-c''-d'')$.
Let $(s,t) \in \{ (a'',c''), (b'',d'') \}$ be such that $s,t \in P^{s-1} \cap P^{s}$.
Applying Lemma~\ref{lem:DistinguisherToForcer} for the path $H[P^{s-1}]$, we obtain that an $(x,y,S)$-forcer with respect to $(s,t)$ is realizable by $H$. Since $s,t\in P^{s}$, where $s$ and $t$ are in the same bipartition class, we have $(s,t)\in \{(a,c), (c,a), (b,d), (d,b)\}$.
Now using Lemma~\ref{lem:DistinguisherToForcer} for the path $H[P^{s}]=(a-b-c-d)$, we conclude that an $(x,y,S)$-forcer with respect to $(a,c)$ is realizable by $H$.
\end{proof}

Recall from Lemma~\ref{lem:basicP4} that the structure of a $4$-vertex path is rich enough to encode basic binary relations.
In Lemma~\ref{lem:forceronP4} we showed how to obtain forcers with respect to some specified pair $(a,c)$ on a $4$-vertex path. In the next lemma we show that such a collection of forcers lets us realize more expressive relations.

\begin{lem} \label{lem:allrelationsSac}
Let $H=(V,E)$ be an irredundant connected bipartite graph.
Let $(a-b-c-d)$ be an induced $4$-vertex path in $H$, and let $S$ be a one-sided subset of $V$. 
If for every $x,y \in S$ there is an $(x,y,S)$-distinguisher realizable by $H$,
then, for every fixed $p \geq 1$ and $q \geq 0$, every relation $R \subseteq S^p \times \{a,c\}^q$ is realizable by $H$.
\end{lem}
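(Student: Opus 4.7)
The plan is to encode each $S$-valued interface vertex by a bit vector over $\{a,c\}$ using forcers, and then to enforce $R$ on these encodings via a single master relation over $\{a,c\}$ realized through the induced $P_4$. First, from the hypothesis combined with Lemma~\ref{lem:forceronP4}, for every ordered pair of distinct $x,y\in S$ I obtain a realizable $(x,y,S)$-forcer $F_{x,y}$ with respect to $(a,c)$. Since $|S|$ is a fixed constant, by the transitivity of realizability (applied along a chain of constant length) it suffices to work inside the structure $H$ augmented with all the $F_{x,y}$'s and then invoke the forcers' realizability to descend back to $H$ at the end.

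Setting $s:=|S|$, I build a gadget $\calJ$ with interface $(v_1,\ldots,v_p,z_1,\ldots,z_q)$, lists $L(v_i)=S$ and $L(z_j)=\{a,c\}$. For each $i\in[p]$ and each ordered pair of distinct $x,y\in S$, I introduce a fresh vertex $w_{i,x,y}$ with list $\{a,c\}$ and apply the constraint $F_{x,y}$ on $(v_i,w_{i,x,y})$. Finally, I attach one master constraint $R^{*}\subseteq\{a,c\}^{ps(s-1)+q}$ to all the $w$- and $z$-vertices, where $(W,\alpha)\in R^{*}$ iff there exists $u\in S^p$ with $(u,\alpha)\in R$ such that $(u_i,w_{i,x,y})\in F_{x,y}$ for every $(i,x,y)$. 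Since $R^{*}$ is an arbitrary relation over $\{a,c\}$, Lemma~\ref{lem:basicP4} realizes it in $H$ (the lists in the $P_4$-gadgets there restrict images to $\{a,b,c,d\}$, so the homomorphism counts to $H$ match those to the induced $P_4$ $(a-b-c-d)$ verbatim).

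The correctness hinges on a disjointness observation: for distinct $u,u'\in S^p$, the sets $\Phi(u):=\{W\mid (u_i,w_{i,x,y})\in F_{x,y}\text{ for all }i,x,y\}$ and $\Phi(u')$ are disjoint. Indeed, picking an index $i$ with $u_i\neq u_i'$ and setting $x:=u_i$, $y:=u_i'$, the forcer $F_{x,y}$ forces $w_{i,x,y}$ to be $a$ under $u$ (as $(x,a)\in F_{x,y}$ and $(x,c)\notin F_{x,y}$) and to be $c$ under $u'$ (as $(y,c)\in F_{x,y}$ and $(y,a)\notin F_{x,y}$). Consequently, for an interface assignment $(u,\alpha)$, the gadget extension count equals $|\{W\in\Phi(u):(W,\alpha)\in R^{*}\}|$: if $(u,\alpha)\in R$, every $W\in\Phi(u)$ witnesses $(W,\alpha)\in R^{*}$ with $u'=u$, and $\Phi(u)\neq\emptyset$ (since each $F_{x,y}(u_i)$ is nonempty by the third forcer property), so the count is positive; if $(u,\alpha)\notin R$, any witness $u'$ for $(W,\alpha)\in R^{*}$ would give $W\in\Phi(u)\cap\Phi(u')$, forcing $u'=u$ by disjointness and contradicting $(u,\alpha)\notin R$, so the count is $0$. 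Corollary~\ref{cor:simpleGadget} now delivers the realizability of $R$; since $p$, $q$, and $s$ are constants, the gadget has constant size and the chain of reductions has constant length, so the pathwidth blow-up is $\bigO(1)$. The main subtlety is precisely this disjointness property, which is exactly what the definition of a forcer is designed to ensure.
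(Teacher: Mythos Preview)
Your proof is correct and follows essentially the same approach as the paper: obtain forcers $F_{x,y}$ with respect to $(a,c)$ via Lemma~\ref{lem:forceronP4}, use them to encode each $S$-valued coordinate by a vector over $\{a,c\}^{s(s-1)}$ whose ``cells'' for distinct values are disjoint, and then enforce $R$ through a single relation over $\{a,c\}$ realized by Lemma~\ref{lem:basicP4}. The only cosmetic difference is that the paper first packages the encoding step into a separately realized ``indicator'' relation $I\subseteq S\times\{a,c\}^{s(s-1)}$ and then picks one representative $\mathsf{id}(x)\in I(x)$ per element to define a smaller master relation $R_I$, whereas you fold everything into one gadget and let $R^{*}$ contain \emph{all} compatible encodings; your disjointness argument for the $\Phi(u)$'s is exactly the paper's disjointness argument for the $I(x)$'s.
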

\begin{proof}
Let $|S|=s$ and enumerate $S$ as $\{x_1,x_2,\ldots,x_s\}$.
Let $(a-b-c-d)$ be a fixed $4$-vertex path in $H$. 
By the assumption of the lemma, for all distinct $x_i,x_j \in S$ there is some $(x_i,x_j,S)$-distinguisher realizable by $H$.
Thus, by Lemma~\ref{lem:forceronP4}, there is also an $(x_i,x_j,S)$-forcer $R_{i,j}$ with respect to $(a,c)$, realizable by $H$.

First, let us define an auxiliary relation $I \subseteq S \times \{a,c\}^{s(s-1)}$ and show that it is realizable by $H$.
Recall from Section~\ref{sec:preliminaries} that $I(x)=\{(v_1,\ldots, v_{s(s-1)}) \in \{a,c\}^{s(s-1)} \mid (x, v_1,\ldots, v_{s(s-1)}) \in I\}$.
We say that a relation $I$ is an \emph{indicator} if
\begin{itemize}
\item $I(x) \neq \emptyset$\, for all $x \in S$, and
\item $I(x) \cap I(x') = \emptyset$ for all distinct $x,x' \in S$.
\end{itemize}
Intuitively, each element of $I(x)$ uniquely represents the value $x$, and every $x \in S$ admits such a representation.

\begin{clm}\label{clm:indicator}
There is an indicator $I \subseteq S \times \{a,c\}^{s(s-1)}$ realizable by $H$.
\end{clm}
\begin{claimproof}
For brevity, denote $\mathsf{Ind} := \{(i,j) ~|~ i,j \in [s], i \neq j \}$.
As for each $(i,j) \in \mathsf{Ind}$, the forcer $R_{i,j}$ is realizable by $H$,
it is sufficient to show an indicator realizable by the structure $\calH' = (V, \bigcup_{(i,j) \in \mathsf{Ind}} \{R_{i,j}\} )$.

We define a structure $\calI = (U,\boldS)$ with the same signature as $\calH'$.
The vertex set of $\calI$ is $U = \{u\} \cup \bigcup_{(i,j) \in \mathsf{Ind}}\{u_{i,j}\}$.
We also define an interface $\boldu$ of $\calI$ by setting
\[\boldu = (u, u_{1,2}, \ldots, u_{1,s}, u_{2,1}, u_{2,3}, \ldots, u_{2,s}, \ldots, u_{s,1}, \ldots, u_{s,s-1}).\]
For all $(i,j) \in \mathsf{Ind}$, we set $R_{i,j}^{\calI} = \{ (u,u_{i,j}) \}$, i.e., the $(x_i, x_j, S)$-forcer $R_{i,j}$ is applied to the tuple $(u, u_{i,j})$.
Consider the set
\begin{align*}
I:=  \{ \boldf \in V^{s(s-1)+1} \mid \hom{(\calI, \boldu)}{(\calH', \boldf)} \neq 0\}
\end{align*}
and let $(f,f_{1,2},\ldots,f_{1,s}, f_{2,1}, f_{2,3}, \ldots, f_{2,s}, \ldots, f_{s,1}, \ldots, f_{s,s-1}) \in I$.

First, clearly $f \in S$ and for all $(i,j) \in \mathsf{Ind}$ we have $f_{i,j} \in \{a,c\}$.
Next, if $f = x_i$ for $i \in [s]$, then the properties of forcers imply that
\begin{itemize}
\item $f_{i,j} = a$ for all $j$, such that $(i,j) \in \mathsf{Ind}$, and
\item for all $i' \in [s] \setminus \{i\}$ there is $j$, such that $(i',j) \in \mathsf{Ind}$ and $f_{i',j} = c$.
\end{itemize} 
Therefore $I(x_i) \cap I(x_{i'}) = \emptyset$ for all distinct $i,i' \in [s]$.
Finally, by the definition of a forcer, we note that for all $i \in [s]$ and all $(i',j') \in \mathsf{Ind}$ we have that at least one of $(x_i,a)$ or $(x_i,c)$ is in $R_{i',j'}$, and thus $I(x_i) \neq \emptyset$.
Summing up, $I$ is an indicator and by Corollary~\ref{cor:simpleGadget} it is realizable by $H$.
Note that the exact definition of $I$ depends on the exact definitions of $R_{i,j}$ for $(i,j) \in \mathsf{Ind}$.
\end{claimproof}

Let $I$ be the indicator given by Claim~\ref{clm:indicator}; it is realizable by $H$.
For each $i \in [s]$, we fix some element $\mathsf{id}(x_i)$ from $I(x_i)$. 
We think of $\mathsf{id}(x_i)$ as a unique identifier of $x_i$. The properties of $I$ ensure these identifiers are pairwise distinct. Note that $\mathsf{id}(x_i) \in \{a,c\}^{s(s-1)}$.

Let $p \geq 1$ and $q \geq 0$ be integers and consider a relation $R \subseteq S^p \times \{a,c\}^q$.
Let $R_I \subseteq \{a,c\}^{p(s-1)s+q}$ be the relation defined as follows:
\[
R_I = \{  (\mathsf{id}(y_1),\mathsf{id}(y_2),\ldots,\mathsf{id}(y_p),b_1,b_2,\ldots,b_q) ~|~ (y_1,y_2,\ldots,y_p,b_1,b_2,\ldots,b_q) \in R \}.
\]
Intuitively, we may see $R_I$ as $R$ ``translated'' to the the ground set $\{a,c\}$, where the translation of each element of $S$ to a sequence over $\{a,c\}$ is provided by the choice of $\mathsf{id}(\cdot)$.
By Lemma~\ref{lem:basicP4}, the relation $R_I$ is realizable by $H$.
Thus, in order to prove that $R$ is realizable by $H$, it is sufficient to show that $R$ is realizable by a structure $\calH' = (V, \{ I, R_I \})$. Slightly abusing notation we use $\{ I, R_I \}$ also to denote the signature of $\calH'$.
Consider a structure $\calR = (U, \boldS)$ with signature $\{I, R_I \}$, defined as follows:
\begin{align*}
U = & \bigcup_{\ell \in [p]} \{y_\ell, u^{\ell}_1,\ldots, u^{\ell}_{s(s-1)} \} \cup \bigcup_{\ell \in [q]} \{b_\ell\},\\
I^{\calR} = & \bigcup_{\ell \in [q]} \{ (y_\ell, u^{\ell}_1,\ldots, u^{\ell}_{s(s-1)}) \},\\
R_I^{\calR} = & \{  (u^{1}_1,\ldots, u^{1}_{s(s-1)}, u^{2}_1,\ldots, u^{2}_{s(s-1)}, \ldots, u^{p}_1,\ldots, u^{p}_{s(s-1)}, b_1, \ldots, b_q) \}.
\end{align*}
Define $\boldy = (y_1,y_2,\ldots,y_p, b_1, \ldots, b_q)$.
It is straightforward to verify that $\hom{(\calR, \boldy)}{(\calH', \boldf)} \neq 0$ if and only if
$(\mathsf{id}(y_1),\mathsf{id}(y_2),\ldots,\mathsf{id}(y_p), b_1, \ldots, b_q ) \in R_I$, which in turn happens if and only if $(y_1,y_2,\ldots,y_p,b_1,\ldots,b_q) \in R$.
Thus by Corollary~\ref{cor:simpleGadget} the relation $R$ is realizable by $H$.
\end{proof}

Let us now introduce one more special type of relation.
Consider a bipartite graph $H=(V,E)$ and a subset $S\subseteq V$.
Let $(X,Y)$ be a partition of $S$. Moreover, let $a,c$ be two distinct vertices in $V$.
A relation $R \subseteq S \times \{a,c\}$ is an \emph{$(X,Y)$-partitioner with respect to $(a,c)$} if:
\begin{itemize}
\item for every $x \in X$ it holds that $R(x) = \{a\}$,
\item for every $y \in Y$ it holds that $R(y) = \{c\}$.
\end{itemize}

As each partitioner is a relation in $S \times \{a,c\}$, Lemma~\ref{lem:allrelationsSac} immediately yields the following.

\begin{cor} \label{cor:partitioners}
Let $H=(V,E)$ be an irredundant connected bipartite graph.
Let $(a-b-c-d)$ be an induced $4$-vertex path in $H$, and let $S$ be a one-sided subset of $V$. 
If for every $x,y \in S$ there is an $(x,y,S)$-distinguisher realizable by $H$,
then for every partition $(X,Y)$ of $S$, the $(X,Y)$-partitioner with respect to $(a,c)$ is realizable by $H$.
\end{cor}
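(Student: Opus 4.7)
The corollary is essentially an immediate specialization of Lemma~\ref{lem:allrelationsSac}. Fix any partition $(X,Y)$ of $S$ and define the $(X,Y)$-partitioner explicitly as
\[
R_{X,Y} \;=\; \{(x,a) \mid x \in X\} \,\cup\, \{(y,c) \mid y \in Y\}.
\]
My plan is to observe that $R_{X,Y}$ is a binary relation sitting inside $S \times \{a,c\}$, which is precisely $S^p \times \{a,c\}^q$ for $p=1$ and $q=1$. The hypotheses of the corollary match the hypotheses of Lemma~\ref{lem:allrelationsSac} verbatim: $H$ is an irredundant connected bipartite graph, $(a-b-c-d)$ is an induced $4$-vertex path in $H$, $S$ is a one-sided subset of $V(H)$, and for every $x,y \in S$ there is an $(x,y,S)$-distinguisher realizable by $H$.

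Therefore, a direct invocation of Lemma~\ref{lem:allrelationsSac} with $p = 1$ and $q = 1$ yields that the particular relation $R_{X,Y} \subseteq S \times \{a,c\}$ is realizable by $H$. Since this argument goes through for every partition $(X,Y)$ of $S$, the corollary follows. There is no real obstacle here; the work has already been done in Lemma~\ref{lem:allrelationsSac}, and the only point to verify is that an $(X,Y)$-partitioner is indeed a relation of the form covered by that lemma, which is immediate from the definition.
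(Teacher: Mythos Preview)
Your proposal is correct and matches the paper's own argument essentially verbatim: the paper simply notes that each partitioner is a relation in $S \times \{a,c\}$ and that Lemma~\ref{lem:allrelationsSac} (with $p=1$, $q=1$) therefore immediately yields the corollary.
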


\subsection{Proof of Lemma~\ref{lem:allrelations}}\label{sec:Distinguisher} \label{sec:allrelations}
In this section we finally prove Lemma~\ref{lem:allrelations}.
Let us first discuss the plan.
Lemma~\ref{lem:allrelations} will easily follow from Lemma~\ref{lem:allrelationsSac} (for $q=0$), provided that
for all distinct $x,y \in S$, there is an $(x,y,S)$-distinguisher realizable by $H$.
We prove this statement inductively, essentially deriving new forcers from forcers that are ``smaller'' with respect to some measure on $(x,y,S)$. On the way, we use Lemma~\ref{lem:forceronP4} to turn distinguishers into forcers, and we use Corollary~\ref{cor:partitioners} to turn forcers into partitioners.

\begin{lem}
\label{lem:Distinguisher}
Let $H=(V,E)$ be an irredundant connected bipartite graph with $\irr(H)\ge 2$.
Let $S$ be a one-sided subset of $V$, and let $x, y$ be distinct vertices in $S$.
Fix any induced $4$-vertex path $(a-b-c-d)$ in $H$. Then there is an $(x,y,S)$-forcer $R$ with respect to $(a,c)$ such that $R$ is realizable by $H$.
\end{lem}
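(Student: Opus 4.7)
The plan is to prove the lemma by strong induction on the pair $(|S|,\,\dist(\{x,y\},S\setminus\{x,y\}))$ ordered lexicographically, with the second coordinate treated as $+\infty$ when $S=\{x,y\}$. By Lemma~\ref{lem:forceronP4}, at each step it suffices to produce a realizable $(x,y,S)$-distinguisher with respect to some one-sided pair; the lemma then promotes it to an $(x,y,S)$-forcer with respect to the given $(a,c)$.

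\textbf{Base case} $S=\{x,y\}$. Here $\dist(x,y)$ is even and $\ge 2$ since $x,y$ are one-sided, so I proceed by a secondary induction on $\dist(x,y)$. If $\dist(x,y)=2$, fix a common neighbor $w$ and, by irredundancy (swapping $x\leftrightarrow y$ if needed), pick $w'\in N(y)\setminus N(x)$; by Lemma~\ref{lem:simplecombinations}, $R_{\{x,y\}\to\{w,w'\}}=\{(x,w),(y,w),(y,w')\}$ is realizable, and it is an $(x,y,\{x,y\})$-distinguisher with respect to $(w,w')$. If $\dist(x,y)=k\ge 4$, take a shortest path $(x=p_0,\dots,p_k=y)$; the secondary hypothesis produces a realizable $(p_1,p_{k-1},\{p_1,p_{k-1}\})$-forcer $R'$ with respect to $(a,c)$. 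Since $x\not\sim p_{k-1}$ and $y\not\sim p_1$, Lemma~\ref{lem:simplecombinations} makes $R_{\{x,y\}\to\{p_1,p_{k-1}\}}=\{(x,p_1),(y,p_{k-1})\}$ realizable, and a direct check shows that the composition $R_{\{x,y\}\to\{p_1,p_{k-1}\}}\con R'$ (also realizable by Lemma~\ref{lem:simplecombinations}) is an $(x,y,\{x,y\})$-distinguisher with respect to $(a,c)$.

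\textbf{Inductive step}, $|S|\ge 3$. Irredundancy lets me assume, after swapping $x\leftrightarrow y$ if needed, that $N(y)\setminus N(x)\ne\emptyset$. In \textbf{Case~1}, some $z\in S\setminus\{x,y\}$ has distance $2$ from $\{x,y\}$; WLOG $z$ and $y$ share a neighbor $w$. Construct $S'$ by including $w$, a neighbor of each element of $S\setminus\{y,z\}$ not already covered by $w$, and, when $w\in N(x)$, an extra $q\in N(y)\setminus N(x)$. A case check gives $|S'|\le|S|-1$, $S\subseteq N(S')$, and $(N(y)\setminus N(x))\cap S'\ne\emptyset$. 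By the inductive hypothesis combined with Corollary~\ref{cor:partitioners}, the $(N(x)\cap S',\,S'\setminus N(x))$-partitioner $R$ with respect to $(a,c)$ is realizable, and then $E\con R$ is a realizable $(x,y,S)$-distinguisher with respect to $(a,c)$, exactly as described in the overview.

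In \textbf{Case~2}, every $z\in S\setminus\{x,y\}$ is at distance $\ge 4$ from $\{x,y\}$. Let $P=(p_0-\dots-p_\ell)$ be a shortest path from $\{x,y\}$ to $S\setminus\{x,y\}$ with $\ell\ge 4$; I describe the representative case $p_0=x$, $p_\ell=z$ (the case $p_0=y$, and the dual case where only $N(x)\setminus N(y)$ is nonempty, are handled analogously). Set $x'=p_1$, $z'=p_{\ell-1}$, choose $y'\in N(y)\setminus N(x)$, and let $S':=\{x',y',z'\}\cup\{n_s:s\in S\setminus\{x,y,z\}\}$, where $n_s$ is one neighbor of each remaining $s$. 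Then $|S'|\le|S|$, $S\subseteq N(S')$, and $\dist(\{x',y'\},S'\setminus\{x',y'\})\le\dist(x',z')=\ell-2<\ell=\dist(\{x,y\},S\setminus\{x,y\})$, so by the inductive hypothesis there is a realizable $(x',y',S')$-forcer $R$ with respect to $(a,c)$. Each of $y',z',n_s$ has distance $\ge 2$ from $x$, whence $N(x)\cap S'=\{x'\}$; it follows that $E\con R$ sends $x$ uniquely to $a$, sends $y$ to $c$ via $y'\in N(y)$, and hits $\{a,c\}$ on every $s\in S\subseteq N(S')$. Thus $E\con R$ is a realizable $(x,y,S)$-distinguisher with respect to $(a,c)$, and Lemma~\ref{lem:forceronP4} finishes the induction. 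The main obstacle will be the uniform treatment of Case~2's subcases: in every configuration of $P$ and every choice of which of $N(x)\setminus N(y),\,N(y)\setminus N(x)$ is nonempty, one must verify both the strict decrease of the secondary distance parameter and the identity $N(x)\cap S'=\{x'\}$, the latter being what forces $x$ to a single value under $E\con R$.
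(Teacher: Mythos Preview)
Your proposal is correct and follows essentially the same approach as the paper: the same lexicographic induction on $(|S|,\dist(\{x,y\},S\setminus\{x,y\}))$, the same reduction to producing a distinguisher via Lemma~\ref{lem:forceronP4}, and the same two-case split in the inductive step (partitioner on a strictly smaller $S'$ when some $z$ is at distance~$2$, forcer on an $S'$ with strictly smaller secondary parameter otherwise). The only notable difference is that your base case carries an unnecessary secondary induction on $\dist(x,y)$ --- the paper simply picks any $p\in N(x)$ and $q\in N(y)\setminus N(x)$ and lets Lemma~\ref{lem:forceronP4} absorb the work of bringing a far-apart $(p,q)$ onto a $P_4$ --- and the paper builds a single uniform $S'=\{p,q\}\cup S'_0$ for both cases rather than two separate constructions.
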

\begin{proof}
Let $S$ and $S'$ be one-sided irredundant sets (not necessarily in the same part of the bipartition) such that $S$ contains two distinct vertices $x,y\in S$ and $S'$ contains two distinct vertices $x',y'\in S'$. We define an order by setting $(x',y',S') < (x,y,S)$ if one of the following holds:
\begin{myitemize}
\item $|S'|< |S|$, or
\item $|S'|=|S|$ and $\dist(\{x',y'\}, S'\setminus\{x',y'\})< \dist(\{x,y\}, S\setminus\{x,y\})$.
\end{myitemize} 
We prove the statement of the lemma by induction with respect to this order. Note that $(x,y,S)$ is a minimal element if $S=\{x,y\}$.
	
Suppose that the following holds:
\begin{equation}\label{eq:distinguishersuffices}
\text{There is an $(x,y,S)$-distinguisher or a $(y,x,S)$-distinguisher realizable by $H$.}
\end{equation}
In the first case we can apply Lemma~\ref{lem:forceronP4} to obtain an $(x,y,S)$-forcer with respect to $(a,c)$ that is realizable by $H$.
In the second case, note that a $(y,x,S)$-forcer with respect to $(a,c)$, obtained by the application of Lemma~\ref{lem:forceronP4}, 
is an $(x,y,S)$-forcer with respect to $(c,a)$. Thus, by Lemma~\ref{lem:DistinguisherToForcer}, we again  obtain an $(x,y,S)$-forcer with respect to $(a,c)$ that is realizable by $H$.
Therefore, in order to show the statement of the lemma, it suffices to show~\eqref{eq:distinguishersuffices}.
	
\medskip
	
We first consider the base case $S=\{x,y\}$. 
Since $H$ is irredundant we have $N(x)\neq N(y)$. Suppose there is a vertex $q\in N(y)\setminus N(x)$ (the other case is analogous). Since $H$ is connected there is some vertex $p\in N(x)$ (possibly $p\in N(x)\cap N(y)$). Since $S$ is one-sided $R_{\{x,y\}\to \{p,q\}}$ is an $(x,y,S)$-distinguisher with respect to $(p,q)$ (if $p\notin N(y)$ it is even a forcer). This is what we need according to~\eqref{eq:distinguishersuffices} and this completes the base case.

	For the inductive step, let $x,y,S$ be as given in the statement of the lemma. We can assume $S\neq \{x,y\}$. To shorten notation let $S_0=S\setminus\{x,y\}$.
	Let $P$ be a shortest path from $\{x,y\}$ to $S_0$. So $P$ is of the form $(p_1 - \ldots -p_k)$, where $p_1\in \{x,y\}$, $p_k\in S_0$, and $k\ge 3$ (since $S_0$ is assumed to be non-empty).
	Let $p=p_2$, then $p$ is a neighbor of at least one of $x$ or $y$. Since $N(x)\neq N(y)$ and $H$ is connected we can choose a vertex $q$, such that both $x$ and $y$ have a neighbor in $\{p,q\}$, and at least one of $p$ or $q$ has exactly one neighbor in $\{x,y\}$. 
	Note that, according to~\eqref{eq:distinguishersuffices}, it does not cause any issues to rename the vertices $x$ and $y$ because a realizable $(y,x,S)$-distinguisher with respect to $(a,c)$ also gives a realizable $(x,y,S)$-forcer with respect to $(a,c)$.
	So, without loss of generality (by renaming $x$ and $y$), we can assume that $p\in N(x)$ and $q\in N(y)\setminus N(x)$.
	
	Let $S'_0$ be a minimal-size set with the properties $p_{k-1}\in S'_0$ and $S_0\subseteq N(S'_0)$. Thus, $|S'_0|\le |S_0|$.
	Let $S'=\{p,q\}\cup S'_0$.
	Clearly, $S'$ is a one-sided set with $|S'|\le |S|$. 
	
	\begin{description}
		\item[Case~1] Suppose that $k=3$ and consequently $p=p_2=p_{k-1}$. Then $|S'|<|S|$. So, for any distinct $x', y'\in S'$, we have $(x', y', S')<(x,y,S)$, and consequently, by the induction hypothesis, there is an $(x', y',S')$-forcer with respect to $(a,c)$ that is realizable by $H$. From Corollary~\ref{cor:partitioners} we obtain that for every partition $(X,Y)$ of $S'$, the $(X,Y)$-partitioner with respect to $(a,c)$ is realizable by $H$. In particular, let $R'$ be the $(S'\cap N(x),S'\setminus N(x))$-partitioner. Then $R'$ is realizable by $H$, and $R=R_{S\to S'}\con R'$ is also realizable by $H$ (Lemma~\ref{lem:simplecombinations}). We now show that $R$ is an $(x,y,S)$-distinguisher with respect to $(a,c)$. We have $p\in R_{S\to S'}(x)= S'\cap N(x)$ and, for each $v\in S'\cap N(x)$, we have $R'(v)=\{a\}$. It follows that $R(x)=\{a\}$. Since $q\in R_{S\to S'}(y)$ but $q\notin N(x)$, we have $R'(q)=\{c\}$ and consequently $c\in R(y)$. Since $S_0\subseteq N(S'_0)$ every vertex in $S$ has a neighbor in $S'$. Therefore, for $u\in S$, at least one of $(u,a)$ or $(u,c)$ is in $R$. This shows that $R$ is an $(x,y,S)$-distinguisher with respect to $(a,c)$. By~\eqref{eq:distinguishersuffices}, we are done.
		\item[Case~2] Suppose $k>3$, i.e.,~$k\ge 5$ (since $\{x,y\}$ is one-sided). We know that $|S'_0|\le |S_0|$ and consequently $|S'|\le |S|$ --- but now these sets could have the same cardinality. However, by the choice of $P$, note that $\dist(\{p,q\}, S'\setminus\{p,q\})\le k-2 <k =\dist(\{x,y\}, S\setminus\{x,y\})$. So, we have $(p,q,S')<(x,y,S)$ and by the induction hypothesis there is a $(p,q,S')$-forcer $R'$ with respect to $(a,c)$.
		By the choice of $P$ it also follows that no vertex in $S_0$ has a common neighbor with $x$ or $y$, and therefore $x$ has no neighbor in $S'_0$. Thus, $p$ is the only neighbor of $x$ in $S'$. Consequently, $R=R_{S\to S'}\con R'$ is an $(x,y,S)$-distinguisher with respect to $(a,c)$, and this is what we need according to~\eqref{eq:distinguishersuffices}.
	\end{description}
	This completes the proof.
\end{proof}

We can now prove Lemma~\ref{lem:allrelations}, which we restate for convenience.
\allrelations*
\begin{proof}
	Let $S'$ be a maximal-size irredundant superset of $S$ in $V$. Note that every maximal-size irredundant set in $H$ contains a vertex from each class of vertices with identical neighborhood. Thus, $H'=H[S']$ is connected as $H$ is connected.
Furthermore, we have $\irr(H')=\irr(H)\ge 2$. Since $H'$ is an induced subgraph of $H$ it suffices to show that every relation in $S^p$ is realizable by $H'$.
	Since $\irr(H')\ge 2$, the graph $H'$ is not a complete bipartite graph and therefore it contains an induced four-vertex path $(a - b - c - d)$.
	By Lemma~\ref{lem:Distinguisher}, for every pair of distinct $x,y\in S$, there is an $(x,y,S)$-forcer $R$ with respect to $(a,c)$ such that $R$ is realizable by $H'$.
	The statement of the lemma follows from Lemma~\ref{lem:allrelationsSac} for $q=0$.
\end{proof}

\section{Counting list homomorphisms to general graphs $H$} \label{sec:general}
In this section we discuss how to lift the results from Section~\ref{sec:bipartite}, where we assumed $H$ to be bipartite, to the general case.

\subsection{Associated bipartite graphs}
For a graph $H=(V,E)$, by $H^*$ we denote its \emph{associated bipartite graph}, i.e.,
the graph with vertex set $\{v',v'' ~|~ v \in V\}$ and edge set $\{u'v'', u''v' ~|~ uv \in E\}$.
We set $V' := \{v' ~|~ v \in V\}$ and $V'' := \{v'' ~|~ v \in V\}$. So $(V', V'')$ is a bipartition of $H^*$.
Note that if $H = H_1 + H_2$, where $+$ denotes the disjoint sum, then $H^* = H_1^* + H_2^*$.

Recall that if $H$ is connected and nonbipartite, then $\irr(H)$ is the cardinality of the largest irredundant set in $H$.
Let us point out that associated bipartite graphs allow us to provide a uniform definition of $\irr(H)$, which does not need to distinguish bipartite and nonbipartite graphs.
\begin{obs}
For every graph $H$ it holds that $\irr(H) = \irr(H^*)$.
\end{obs}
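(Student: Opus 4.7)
The plan is to verify $\irr(H)=\irr(H^*)$ by first reducing to the connected case and then exploiting a precise correspondence between neighborhoods in $H$ and in $H^*$. Since $H_1+\dots+H_k$ yields $H^* = H_1^*+\dots+H_k^*$ and $\irr$ of a disjoint sum equals the maximum of $\irr$ over components, it suffices to show that for every connected $H$, every connected component of $H^*$ has the same $\irr$ as $H$.

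The key technical observation is that for every $v\in V(H)$ one has $N_{H^*}(v') = \{u'' \mid u\in N_H(v)\}$ and $N_{H^*}(v'') = \{u' \mid u\in N_H(v)\}$. Hence the bijection $v\mapsto v'$ sends a subset $S\subseteq V(H)$ to $S'\subseteq V'$ with the property that $N_{H^*}(u')=N_{H^*}(v')$ iff $N_H(u)=N_H(v)$; analogously for $v\mapsto v''$. So $S$ is irredundant in $H$ iff $S'$ is irredundant in $H^*$, and the same for $S''$. This is the place where one must handle loops carefully: if $v$ has a loop in $H$ then $v\in N_H(v)$ and the edge $v'v''$ is present in $H^*$, so $v''\in N_{H^*}(v')$, which is exactly what is needed for the correspondence to survive loops.

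There are two cases for a connected $H$. If $H$ is bipartite with parts $(A,B)$ (so in particular loop-free), then $H^*$ splits into two components $C_1=A'\cup B''$ and $C_2=A''\cup B'$, each isomorphic to $H$, with bipartition classes corresponding to $A$ and $B$. By the neighborhood correspondence, the maximum size of a one-sided irredundant set in $C_i$ equals $\max\{|S|: S\subseteq A\text{ or }S\subseteq B,\ S\text{ irredundant}\}=\irr(H)$. If instead $H$ is connected and has a loop or is nonbipartite, a standard argument (any closed odd walk in $H$ lifts to a walk in $H^*$ connecting $v'$ to $v''$) shows that $H^*$ is connected with bipartition $(V',V'')$. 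By the neighborhood correspondence the largest irredundant subset of $V'$ has size $\irr(H)$, and likewise for $V''$, so the maximum one-sided irredundant set in $H^*$ has size $\irr(H)$.

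The main (and only) obstacle is really the careful case analysis around loops and around whether $H$ is bipartite, to ensure that $H^*$ is connected in the nonbipartite/looped case and splits into two isomorphic copies of $H$ in the bipartite case. No interpolation, no gadget construction, and no algorithmic content enters the proof; the whole argument rests on the elementary neighborhood identity $N_{H^*}(v')=\{u'' \mid u\in N_H(v)\}$ and the component structure of bipartite double covers.
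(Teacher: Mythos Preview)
Your proof is correct and follows essentially the same approach as the paper's: both reduce to connected $H$, both use the neighborhood correspondence $N_{H^*}(v')=\{u''\mid u\in N_H(v)\}$ to transfer irredundancy between $S\subseteq V(H)$ and $S'\subseteq V'$ (and likewise $S''$), and both split into the bipartite case (where $H^*$ is two disjoint copies of $H$) versus the nonbipartite/looped case (where $H^*$ is connected via the existence of both even and odd walks between any two vertices). Your write-up is somewhat more explicit about the neighborhood identity and the handling of loops, but the argument is the same.
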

\begin{proof}
First, observe that $S \subseteq V(H)$ is irredundant if and only if $S' = \{v' ~|~ v \in S\}$ is irredundant in $H^*$ if and only if $S'' = \{v'' ~|~ v \in S\}$ is irredundant in $H^*$.

Furthermore, we observe that there is a correspondence between connected components of $H$ and connected components of $H^*$.
Let $H'$ be a connected component of $H$.
If $H'$ is bipartite, then $H'^*$ consists of two disjoint copies of $H'$ and so $\irr(H') = \irr(H'^*)$.
If $H'$ is nonbipartite, then $H'^*$ is connected (and bipartite). Indeed, this follows from the fact that for every two vertices $x,y$ of $H'$ there is an even $x$-$y$-walk and an odd $x$-$y$-walk in $H'$. Thus all vertices $x',x'',y',y''$ are in the same connected component of $H'$. 

Consequently, each bipartite component $H'$ of $H$ corresponds to two components of $H^*$, both isomorphic to $H'$,
and each nonbipartite component $H'$ of $H$ corresponds to one connected component of $H^*$, i.e., $(H')^*$.

Now the claim easily follows from the previous observations about $S$, $S'$ and $S''$.
\end{proof}

Note that $H^*$ is a biclique if and only if $H$ is a reflexive clique.
Thus, we observe that $\irr(H) \geq 2$ if and only if $H$ has a connected component that is not a biclique nor a reflexive clique.
This allows us to restate the complexity dichotomy for $\LHom{H}$, which was originally observed as a simple consequence of the non-list result of Dyer and Greenhill~\cite{DyerG00:RSA} by D{\'{\i}}az, Serna, and Thilikos~\cite{DBLP:conf/dimacs/DiazST01}, and also by Hell and Ne\v{s}et\v{r}il~\cite{DBLP:conf/dimacs/HellN01}.

\begin{thm}[\cite{DyerG00:RSA,DBLP:conf/dimacs/DiazST01,DBLP:conf/dimacs/HellN01}]
Let $H$ be a fixed graph.
If $\irr(H)=1$, then $\LHom{H}$ is polynomial-time solvable, and otherwise it is \sharpP-complete.
\end{thm}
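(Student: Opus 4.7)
The plan is to dispatch the two directions of the dichotomy separately: the polynomial side by combining the algorithm of Theorem~\ref{thm:mainthm}(1) with a structural analysis, and the $\sharpP$-hardness side by reducing to the classical Dyer--Greenhill theorem for the non-list counting problem.

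For the polynomial direction, I would first show that $\irr(H)=1$ is equivalent to every connected component $C$ of $H$ being either a biclique or a reflexive clique. The forward implication is the interesting one. For a connected component $C$ with $\irr(C)=1$, Definition~\ref{def:irr} forces all vertices of $C$ (or all vertices of each bipartition class, if $C$ is bipartite and loopless) to share a common neighborhood. If $C$ is bipartite and loopless, a short argument using connectivity pins down $C$ as the complete bipartite graph on its parts. If $C$ contains a loop, then the looped vertex lies in the common neighborhood $N$; propagating the equality $N(v)=N$ one checks that every vertex of $C$ belongs to $N$, is adjacent to every other vertex, and carries a loop, making $C$ a reflexive clique. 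The remaining logical possibility -- $C$ nonbipartite and loopless with $\irr(C)=1$ -- is vacuous: without loops no vertex can lie in the shared neighborhood $N$, forcing $N=\emptyset$ and then connectivity collapses $C$ to a single (bipartite) vertex, contradicting nonbipartiteness. With this classification in hand, the polynomial algorithm is immediate: apply Theorem~\ref{thm:mainthm}(1) with any trivial tree decomposition, giving running time $\irr(H)^{\tw}\cdot n^{\bigO(1)} = n^{\bigO(1)}$.

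For the $\sharpP$-hardness direction, assume $\irr(H)\ge 2$. By the contrapositive of the previous paragraph, $H$ contains a connected component $C$ that is neither a biclique nor a reflexive clique. I would reduce $\LHom{C}$ to $\LHom{H}$ verbatim: any instance $(G,L)$ of $\LHom{C}$ is also an instance of $\LHom{H}$ (each list being a subset of $V(C)\subseteq V(H)$), and since no list homomorphism can map outside $V(C)$ the two counts coincide. It therefore suffices to show that $\LHom{C}$ is $\sharpP$-hard for every connected $C$ that is neither a biclique nor a reflexive clique. For this I would invoke the theorem of Dyer and Greenhill~\cite{DyerG00:RSA}, which establishes $\sharpP$-hardness of the non-list problem $\Hom{C}$ for precisely this class of $C$; since $\Hom{C}$ is the special case of $\LHom{C}$ obtained by setting $L(v)=V(C)$ for every $v$, and the count is then the same, the hardness lifts immediately. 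Membership in $\sharpP$ is standard: a candidate mapping is verifiable in polynomial time and the number of list homomorphisms is bounded by $|V(H)|^{|V(G)|}$.

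The only genuinely nontrivial ingredient is the structural classification of connected graphs with $\irr(C)=1$, and in particular the careful handling of loops within Definition~\ref{def:irr}: the convention that a looped vertex is its own neighbor is exactly what lets one bootstrap ``some vertex has a loop'' into ``every vertex has a loop and every pair is adjacent.'' Everything else is either a direct appeal to Theorem~\ref{thm:mainthm}(1) or a citation of Dyer--Greenhill, which is precisely why the dichotomy is recorded here as a known corollary rather than being proved from scratch.
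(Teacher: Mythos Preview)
The paper does not give its own proof of this theorem; it is stated as a known result attributed to the cited references, with the remark that it is ``a simple consequence of the non-list result of Dyer and Greenhill.'' Your argument is therefore not competing with an in-paper proof but supplying one, and the overall plan is sound and matches what the paper sketches in that remark. The hardness direction via the trivial chain $\Hom{C}\le \LHom{C}\le \LHom{H}$ together with Dyer--Greenhill is exactly right, and your structural classification of connected $C$ with $\irr(C)=1$ (including the vacuity of the loopless nonbipartite case) is correct.

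There is, however, one concrete slip in the polynomial direction. You invoke Theorem~\ref{thm:mainthm}(1), but that theorem carries the hypothesis that $H$ has a component which is neither a biclique nor a reflexive clique --- precisely the \emph{negation} of $\irr(H)=1$. So as written you are appealing to a statement whose premise fails. The fix is immediate: cite Theorem~\ref{thm:algogeneral} instead, which is stated for \emph{every} graph $H$ without side condition and yields running time $\irr(H)^t\cdot n^{\bigO(1)}=n^{\bigO(1)}$ when $\irr(H)=1$. Alternatively, once the components are classified as bicliques and reflexive cliques, a direct elementary count (factoring over components of $G$ and $H$, and using that in such $H$ each vertex of $G$ can be colored independently subject only to its list and the bipartition) avoids invoking the paper's main algorithm altogether and is closer in spirit to how the cited references would have argued.
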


\subsection{Algorithm for general graphs $H$}

For an instance $(G,L)$ of $\LHom{H}$, we define its \emph{associated instance} $(G^*,L^*)$ of $\LHom{H^*}$, where for all $v\in V(G)$ 
\[
L^*(v') = \{x' ~|~ x \in L(v)\} \text{ and } L^*(v'') = \{x'' ~|~ x \in L(v)\}.
\]

We say that a homomorphism $f \from G^* \to H^*$ is \emph{clean} if, for every $v \in V(G)$ and $x \in V(H)$ it holds that $f(v') = x'$ if and only if $f(v'') = x''$.

\begin{lem} \label{lem:clean}
There is a bijection between homomorphisms from $(G,L)$ to $H$ and clean homomorphisms from $(G^*,L^*)$ to $H^*$.
\end{lem}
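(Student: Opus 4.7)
The plan is to construct an explicit bijection by lifting: given a list homomorphism $h\from (G,L)\to H$, define $f_h\from G^*\to H^*$ by setting $f_h(v')=h(v)'$ and $f_h(v'')=h(v)''$ for every $v\in V(G)$. Conversely, given a clean homomorphism $f\from (G^*,L^*)\to H^*$, for each $v\in V(G)$ there is a unique $x\in V(H)$ with $f(v')=x'$ and $f(v'')=x''$ (existence from $f(v')\in V'$ and cleanness, uniqueness from the fact that the map $x\mapsto x'$ is injective); define $h_f(v)$ to be this $x$. I will then verify that $h\mapsto f_h$ and $f\mapsto h_f$ are well-defined maps into the respective target sets and that they are mutual inverses.

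The first main step is to check that $f_h$ is a clean homomorphism respecting $L^*$. Cleanness holds by construction. To see $f_h$ respects $L^*$, note that if $h(v)\in L(v)$ then $f_h(v')=h(v)'\in L^*(v')$ and $f_h(v'')=h(v)''\in L^*(v'')$ by the definition of $L^*$. For edge preservation, consider any edge of $G^*$; by definition of $G^*$ it has the form $u'v''$ or $u''v'$ for some $uv\in E(G)$. Since $h$ is a homomorphism, $h(u)h(v)\in E(H)$, and then by the definition of $H^*$ both $h(u)'h(v)''$ and $h(u)''h(v)'$ are in $E(H^*)$, giving the required edges in the image of $f_h$.

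The second main step is to check that $h_f$ is a list homomorphism from $(G,L)$ to $H$. Respect of lists: if $f$ is clean and $f(v')=x'$, then $x'\in L^*(v')=\{y'\mid y\in L(v)\}$, so $x\in L(v)$, i.e., $h_f(v)\in L(v)$. Edge preservation: for $uv\in E(G)$ we have $u'v''\in E(G^*)$, so $f(u')f(v'')=h_f(u)'h_f(v)''$ is an edge of $H^*$, which by the definition of $H^*$ means $h_f(u)h_f(v)\in E(H)$.

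Finally, I will verify that the two maps are inverses: starting from $h$, we get $h_{f_h}(v)=h(v)$ since $f_h(v')=h(v)'$, and starting from a clean $f$, we get $f_{h_f}(v')=h_f(v)'=f(v')$ and similarly $f_{h_f}(v'')=f(v'')$. No step here involves any genuine obstacle; the only point requiring a little care is ensuring that cleanness is both used (to define $h_f$ unambiguously) and produced (by the construction of $f_h$), which justifies why the bijection is between list homomorphisms and \emph{clean} homomorphisms rather than all homomorphisms from $(G^*,L^*)$ to $H^*$.
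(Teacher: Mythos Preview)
Your proof is correct and follows essentially the same approach as the paper: the paper defines the same forward map $\sigma(f)(v')=f(v)'$, $\sigma(f)(v'')=f(v)''$ and simply declares it ``straightforward to verify'' that $\sigma(f)$ is a clean homomorphism and that $\sigma$ is a bijection, whereas you spell out the inverse map and all the verifications explicitly.
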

\begin{proof}
Consider a homomorphism $f \from (G,L) \to H$.
Let $\sigma(f)$ be a mapping $f^* \from V(G^*) \to V(H^*)$ defined as follows.
Let $v \in V(G)$ and suppose $f(v) = x$.
We set $f^*(v') = x'$ and $f^*(v'')=x''$.
It is straightforward to verify that $f^*$ is a clean homomorphism from $(G^*,L^*)$ to $H^*$ and that $\sigma$ is a bijection.
\end{proof}

Using Lemma~\ref{lem:clean} we can show the algorithmic statement of Theorem~\ref{thm:mainthm}. 
Note that it follows from the subsequent slightly stronger result.

\begin{thm}\label{thm:algogeneral}
For each graph $H$, the $\LHom{H}$ problem on $n$-vertex instances given along with a tree decomposition of width at most $t$
can be solved in time $\irr(H)^t \cdot n^{\bigO(1)}$.
\end{thm}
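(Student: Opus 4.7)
The plan is to bootstrap from the bipartite algorithm of Theorem~\ref{thm:bipartitealgo} and handle nonbipartite components of $H$ by a direct weighted dynamic programming algorithm, after factoring out disconnectedness. Let $G_1,\ldots,G_p$ be the connected components of $G$ and $H_1,\ldots,H_q$ those of $H$; then $\hom{(G,L)}{H}=\prod_i\hom{(G_i,L)}{H}$, and since each $G_i$ is connected its image lies in a single component of $H$, giving $\hom{(G_i,L)}{H}=\sum_j\hom{(G_i,L)}{H_j}$. A tree decomposition of $G$ of width $t$ restricts to one of width at most $t$ on each $G_i$, so it suffices to compute each $\hom{(G_i,L)}{H_j}$ in time $\irr(H_j)^t\cdot n^{\bigO(1)}\le \irr(H)^t\cdot n^{\bigO(1)}$.

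If $H_j$ is bipartite (and hence loopless) I would simply invoke Theorem~\ref{thm:bipartitealgo}. Otherwise $H_j$ has a loop or is nonbipartite, so by Definition~\ref{def:irr} the value $\irr(H_j)$ equals the maximum size of any irredundant subset of $V(H_j)$, with no one-sided restriction. In that case I would preprocess $H_j$ by the equivalence $u\sim u'$ iff $N_{H_j}(u)=N_{H_j}(u')$ (accounting for loops, so that $v\in N_{H_j}(v)$ when $v$ has a loop), pick one representative $r_u$ from each equivalence class, and let $R$ be the set of representatives. By construction $R$ is irredundant, and since any irredundant set contains at most one vertex per class, $|R|=\irr(H_j)$. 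For each $v\in V(G_i)$, define $L'(v):=\{r_u:u\in L(v)\}\subseteq R$ and weights $\wei(v,r):=|\{u\in L(v):r_u=r\}|$. Since $\sim$-equivalent vertices of $H_j$ are interchangeable in any homomorphism, one verifies
\[
\hom{(G_i,L)}{H_j}=\sum_{f}\prod_{v\in V(G_i)}\wei(v,f(v)),
\]
where $f$ ranges over list homomorphisms from $(G_i,L')$ to $H_j[R]$. Each list has size at most $\irr(H_j)$, so the standard bottom-up weighted dynamic programming on the tree decomposition (exactly as in the final step of the proof of Theorem~\ref{thm:bipartitealgo}) computes this sum in time $\irr(H_j)^t\cdot n^{\bigO(1)}$.

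The main subtlety, and the reason for dispatching on whether $H_j$ is bipartite, is that the global neighborhood-quotient trick cannot be applied uniformly: for a bipartite $H_j$ the representative set $R$ may contain up to $\irr(H_j)$ vertices in each bipartition class, so $|R|$ can be as large as $2\,\irr(H_j)$ and naive dynamic programming on $H_j[R]$ would cost $(2\,\irr(H_j))^t$, not $\irr(H_j)^t$. This is precisely the extra case split based on which bipartition class of $H_j$ receives which side of the bipartition of $G_i$ that Theorem~\ref{thm:bipartitealgo} performs internally; the present proof delegates to that theorem as a black box in the bipartite case and only performs the quotient trick directly when no one-sided restriction is in force. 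Summing the runtimes over all pairs $(i,j)$ yields the claimed $\irr(H)^t\cdot n^{\bigO(1)}$ bound.
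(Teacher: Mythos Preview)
Your proof is correct, but it takes a genuinely different route from the paper. The paper argues uniformly via the associated bipartite graph $H^*$: it passes from $(G,L)$ to $(G^*,L^*)$, invokes Lemma~\ref{lem:clean} to identify homomorphisms $(G,L)\to H$ with \emph{clean} homomorphisms $(G^*,L^*)\to H^*$, and then observes that although the induced tree decomposition of $G^*$ has width $2t$, cleanness forces the image of $v''$ once the image of $v'$ is fixed, so each bag admits at most $\irr(H^*)^t=\irr(H)^t$ relevant colorings in the weighted dynamic programming.

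Your argument instead dispatches on the components of $H$: for bipartite $H_j$ you black-box Theorem~\ref{thm:bipartitealgo}, and for nonbipartite $H_j$ you quotient by the neighborhood-equivalence relation directly, using that in this case the number of classes equals $\irr(H_j)$ with no one-sided restriction. This is more elementary in that it avoids the $H^*$ machinery and Lemma~\ref{lem:clean} entirely; the price is the explicit case split and the observation (which you make correctly) that the quotient trick alone would only yield base $2\,\irr(H_j)$ in the bipartite case, so that case really does need the extra bipartition-matching step already baked into Theorem~\ref{thm:bipartitealgo}. The paper's approach, by contrast, handles both cases in one stroke at the cost of the clean-homomorphism bijection and the doubled bag size.
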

\begin{proof}
Consider an instance $(G,L)$ of $\LHom{H}$.
By Lemma~\ref{lem:clean}, it suffices to count the number of clean homomorphisms from $(G^*,L^*)$ to $H^*$.

Let $\calT$ be a tree decomposition of $G$ with width at most $t$. We modify it as follows: in every bag of $\calT$ we replace every vertex $v \in V(G)$ with two vertices $v',v'' \in V(G^*)$. It is straightforward to verify that this way we obtain a tree decomposition of $G^*$ with width at most $2t$; let us call this decomposition $\calT^*$.

First, just like in the proof of Theorem~\ref{thm:bipartitealgo}, we reduce the problem to its equivalent weighted version with each list of size at most $\irr(H^*) = \irr(H)$.
Consider one bag of $\calT^*$ and recall that we are only interested in counting the number of clean homomorphisms from $G^*$ to $H^*$.
Therefore, even though the size of the bag is at most $2t$, there are at most $\irr(H)^t$ colorings that could possibly be extended to a clean homomorphism from $G^*$ to $H^*$.

Thus, by an argument analogous to the one in the proof of Theorem~\ref{thm:bipartitealgo}, we obtain the desired running time.
\end{proof}

\subsection{Hardness for general graphs $H$}

Recall the definition of consistent instances from the beginning of Section~\ref{sec:bipartite}.
The following lemma is a crucial tool used in our hardness reduction.

\begin{lem} \label{lem:biptogeneral}
For a consistent instance $(G,L')$ of $\LHom{H^*}$, define $L \from V(G) \to 2^{V(H)}$ as follows: $L(v) := \{ x ~|~ \{x',x''\} \cap L'(v) \neq \emptyset \}$.
There is a bijection between homomorphisms from $(G,L')$ to $H^*$ and homomorphisms from $(G,L)$ to $H$.
\end{lem}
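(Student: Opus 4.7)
The plan is to exhibit a pair of explicit mutually inverse maps between the two sets of homomorphisms, using the bipartition structure imposed by consistency.

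Since $(G,L')$ is consistent, $G$ is connected and bipartite with bipartition $(X,Y)$, and by the consistency condition we may assume without loss of generality that $\bigcup_{v\in X} L'(v) \subseteq V'$ and $\bigcup_{v\in Y} L'(v) \subseteq V''$ (the opposite assignment is handled symmetrically). With this setup, every element of $L'(v)$ for $v\in X$ is of the form $x'$ for some $x\in V(H)$, and every element of $L'(v)$ for $v\in Y$ is of the form $x''$ for some $x\in V(H)$.

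First I would define the forward map $\sigma$: given a homomorphism $g\colon (G,L')\to H^*$, set $f(v)=x$ whenever $g(v)\in\{x',x''\}$. Then $f(v)\in L(v)$ by the very definition of $L$. For an edge $uv\in E(G)$ with $u\in X$ and $v\in Y$, we have $g(u)=x'$ and $g(v)=y''$ with $x'y''\in E(H^*)$, which by the definition of $H^*$ means $xy\in E(H)$, so $f(u)f(v)\in E(H)$. Conversely, given a homomorphism $f\colon (G,L)\to H$, define $g$ by $g(v)=f(v)'$ for $v\in X$ and $g(v)=f(v)''$ for $v\in Y$. List-respect follows from the side-restriction above: for $v\in X$, the condition $f(v)\in L(v)$ means $\{f(v)',f(v)''\}\cap L'(v)\neq\emptyset$, and since $L'(v)\subseteq V'$ this forces $f(v)'\in L'(v)$ (and analogously for $v\in Y$). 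Edge-preservation is immediate from the definition of $E(H^*)$: if $uv\in E(G)$ with $u\in X$, $v\in Y$, and $f(u)f(v)\in E(H)$, then $f(u)'f(v)''\in E(H^*)$.

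Finally, one checks that the two constructions are mutual inverses: starting from $g$, the associated $f$ is obtained by stripping primes, and reattaching the correct number of primes according to the bipartition class recovers $g$ exactly, because consistency guarantees that $g(v)$ already lives on the side prescribed by $v$'s class. The only conceptual point requiring care is making sure we use the correct side assignment; aside from this, the verifications are routine, so no real obstacle arises.
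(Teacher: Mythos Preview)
Your proof is correct and follows essentially the same approach as the paper: define $\sigma$ by stripping primes, and argue it is a bijection using consistency. The paper's own proof is terser---it only defines $\sigma$ and asserts bijectivity by noting that consistency ensures no list $L'(v)$ contains both $x'$ and $x''$---whereas you spell out the inverse map explicitly and verify the edge- and list-preservation details, but the underlying idea is identical.
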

\begin{proof}
For a homomorphism $f \from (G,L') \to H^*$, define $\sigma(f) \from V(G) \to V(H)$ as follows.
If $f(v) \in \{x',x''\}$, then $\sigma(f)(v) = x$.
It is straightforward to verify that $\sigma(f)$ is a homomorphism from $(G,L)$ to $H$.
Furthermore, $\sigma$ is a bijection -- as the instance $(G,L')$ is consistent, for no $v \in V(G)$ and $x \in V(H)$ it holds that $x',x'' \in L'(v)$.
\end{proof}

Now we are ready to prove the complexity statement in Theorem~\ref{thm:mainthm}.
Again, we will actually show a slightly stronger result, where we consider \emph{pathwidth} as the parameter.

\begin{thm}\label{thm:hardgeneral}
Let $H$ be a graph with $\irr(H)\ge 2$.
Assuming the \#SETH, there is no $\epsilon >0$, such that $\LHom{H}$ on connected bipartite $n$-vertex instances $G$ can be solved in time $(\irr(H)-\epsilon)^{\pw(G)} \cdot n^{\bigO(1)}$, even if $\pw(G)=\tw(G)$ and $G$ is given along with an optimal path decomposition.
\end{thm}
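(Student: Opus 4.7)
The plan is to reduce the general case to the bipartite case already handled in Theorem~\ref{thm:bipartitehardness}, using Lemma~\ref{lem:biptogeneral} to transfer the lower bound from $\LHom{H^*}$ to $\LHom{H}$ without changing the underlying graph. Since $\irr(H^*) = \irr(H) \ge 2$, the bipartite graph $H^*$ satisfies the hypothesis of Theorem~\ref{thm:bipartitehardness}, which (under \#SETH) rules out algorithms running in time $(\irr(H^*) - \epsilon)^t \cdot n^{\bigO(1)}$ on consistent $n$-vertex instances $(G, L')$ of $\LHom{H^*}$ presented with a path decomposition of width~$t$.

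Given such a consistent instance $(G, L')$, I would apply Lemma~\ref{lem:biptogeneral} to obtain the instance $(G, L)$ of $\LHom{H}$ with $L(v) := \{x \mid \{x', x''\} \cap L'(v) \neq \emptyset\}$. The key observation is that the graph $G$, every path (and tree) decomposition of $G$, its pathwidth $\pw(G)$, its treewidth $\tw(G)$, as well as its connectedness and bipartiteness, are all unchanged by this transformation; only the lists are rewritten. By Lemma~\ref{lem:biptogeneral}, the numbers of homomorphisms coincide. Hence any algorithm solving $\LHom{H}$ in time $(\irr(H) - \epsilon)^{\pw(G)} \cdot n^{\bigO(1)}$ on the resulting instances would solve $\LHom{H^*}$ on the original consistent instances within the same time bound, which combined with $\irr(H^*) = \irr(H)$ contradicts Theorem~\ref{thm:bipartitehardness}.

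The two remaining conditions in the theorem statement, namely that $G$ is connected and bipartite, are built into the notion of a consistent instance. The main technical issue — which I expect to be the only real obstacle — is ensuring that the hard instances produced through the chain of reductions underlying Theorem~\ref{thm:bipartitehardness} actually satisfy $\pw(G) = \tw(G)$ and are supplied with an \emph{optimal} path decomposition, rather than merely one of width~$t$. I would address this by auditing the reduction chain: Theorem~\ref{thm:qcsp} arranges variables into groups $V_1,\dots,V_\gamma$ with constraints that can be placed along a path, giving a natural path decomposition whose width matches the treewidth of the resulting Gaifman graph; the subsequent realization of each relation via a constant-size gadget (from Lemma~\ref{lem:allrelations} and Lemma~\ref{lem:simplecombinations}) raises pathwidth and treewidth by the same additive constant, preserving the equality $\pw(G) = \tw(G)$ and allowing the reduction to output an explicit path decomposition of this optimal width. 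Since all of this is routine pathwidth bookkeeping and the substantive combinatorics has already been done in Section~\ref{sec:bipartite}, the conclusion of Theorem~\ref{thm:hardgeneral} follows.
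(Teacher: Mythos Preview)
Your reduction from $\LHom{H^*}$ via Lemma~\ref{lem:biptogeneral} is exactly the paper's approach, and your observation that the underlying graph, its decompositions, connectedness and bipartiteness are unchanged is correct. The problem is your treatment of the condition $\pw(G)=\tw(G)$ with an \emph{optimal} path decomposition.

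Your proposed ``audit'' of the reduction chain does not deliver what you claim. In the proof of Theorem~\ref{thm:bipartitehardness}, the \#\CSP\ instance $\Psi$ is obtained from an arbitrary \#$d$-\textsc{SAT} formula; the constraints of $\Psi$ correspond to the clauses and can involve variables from arbitrary groups $X_{i_1},\dots,X_{i_{d'}}$. There is no path structure in this Gaifman graph, and nothing forces its treewidth to equal its pathwidth. All that proof establishes is the \emph{upper bound} $\pw(\Psi)\le |U|$ (put everything in one bag); the actual pathwidth and treewidth could each be anything up to that bound and need not coincide. The subsequent gadget replacements only add a constant to this upper bound; they do not pin down the true values of $\pw$ or $\tw$, let alone make them equal. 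So after the whole chain you have a consistent instance $(G',L')$ with a path decomposition of width~$t$, but you cannot conclude that $\pw(G')=\tw(G')$, nor that your decomposition is optimal.

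The paper fixes this not by auditing but by \emph{forcing} the parameters. After translating $(G',L')$ to $(G',L)$ via Lemma~\ref{lem:biptogeneral}, it picks a vertex $v$ in the last bag and glues on a biclique $K_{t,t}$ through $v$. Since $\tw(K_{t,t})=\pw(K_{t,t})=t$ and $K_{t,t}$ sits inside the new graph $G$, we get $\tw(G)\ge t$ and $\pw(G)\ge t$; on the other hand, appending an optimal path decomposition of $K_{t,t}$ to the width-$t$ decomposition of $G'$ gives a path decomposition of $G$ of width~$t$, so $\pw(G)=\tw(G)=t$ exactly and the decomposition is optimal. One then has to account for the new vertices in the homomorphism count: the paper does this by fixing, for each $a\in L(v)$, a single compatible colouring of the biclique (lists $\{a\}$ on one side, $\{a'\}$ on the other for some neighbour $a'$ of $a$) and summing the resulting counts over $a$. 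This is the missing ingredient in your proposal.
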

\begin{proof}
Note that the associated graph $H^*$ contains a connected component other than a biclique and therefore $\irr(H^*)\ge 2$.
We reduce from $\LHom{H^*}$. Let $(G',L')$ be an instance of $\LHom{H^*}$, where $G'$ has $n$ vertices and is given along with a path decomposition $\calP$ with width $t$. Recall from the proof of Theorem~\ref{thm:bipartitealgo} that we can assume that the instance $(G',L')$ is consistent.

Let $(G',L)$ be an instance of $\LHom{H}$ constructed as in Lemma~\ref{lem:biptogeneral}, note that the lists $L$ are computed in polynomial time.
By Lemma~\ref{lem:biptogeneral}, we know that $\hom{(G',L')}{H^*} = \hom{(G',L)}{H}$.
Now we modify $G'$ to make sure that the treewidth and the pathwidth of our instance is \emph{exactly} $t$.

Let $v$ be an arbitrary vertex of $G'$ that appears in the last bag of $\calP$.
Let $G$ be the graph obtained from $G'$ by adding $2t-1$ new vertices, which, together with $v$, induce a biclique $K_{t,t}$ and are nonadjacent to any other vertices of $G'$.
Let $A$ and $B$ be the bipartition classes of this biclique, where $v \in A$.

The number of vertices of $G$ is $n+2t-1 \leq 3n$.
Furthermore, as $\tw(G') \leq \pw(G') \leq t$ and $\tw(K_{t,t})=\pw(K_{t,t})=t$, we conclude that $\tw(G) = \pw(G) = t$.
Finally, it is easy to obtain an optimal path decomposition of $G$ by appending to $\calP$ an optimal path decomposition of $K_{t,t}$, where $v$ appears in the first bag.

For each $a \in L(v)$, let $a'$ be an arbitrary neighbor of $a$ in $H$.
Let $L_a : V(G) \to 2^{V(H)}$ be the list function defined as follows:
\[
L_a(u) = \begin{cases}
\{a\}		& \text{ if } u \in A,\\
\{a'\}		& \text{ if } u \in B,\\
L(u)	& \text{ otherwise.}\\
\end{cases}
\]

Since lists $L_a$ force a unique coloring of vertices in $A \cup B$, we observe that $\hom{(G,L_a)}{H}$ is equal the number of homomorphisms from $(G',L)$ to $H$ that map $v$ to $a$.
Consequently, we obtain
\[\hom{(G',L')}{H^*} = \hom{(G',L)}{H} = \sum_{a \in L(v)}\hom{(G,L_a)}{H}.\]

Thus, if for each $a \in L(v)$ we could compute $\hom{(G,L_a)}{H}$ in time $(\irr(H)-\epsilon)^{\pw(G)} \cdot (n+2t-1)^{\bigO(1)}$,
then we could also compute $\hom{(G',L')}{H^*}$ in time
\[(\irr(H)-\epsilon)^{\pw(G)} \cdot (n+2t-1)^{\bigO(1)}=(\irr(H^*)-\epsilon)^{t} \cdot n^{\bigO(1)}.\]
Recall that $\irr(H^*)\ge 2$.
By Theorem~\ref{thm:bipartitehardness}, the existence of such an algorithm for $\LHom{H^*}$ contradicts the \#SETH.
\end{proof}

\subsection{From $\LHom{H}$ to $\Hom{H}$: special cases}
In this section we show two corollaries of our main result that concern \emph{non-list} problems.
First, let us show that Corollary~\ref{cor:hardnessp4} implies the following lower bound.

\begin{cor}\label{cor:bipartiteis}
There is no algorithm that counts all independents sets in $n$-vertex bipartite graphs given with a path decomposition of width $t$ in time $(2-\epsilon)^t \cdot n^{\bigO(1)}$ for any $\epsilon >0$, unless the \#SETH fails.
\end{cor}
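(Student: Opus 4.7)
The plan is to reduce from $\LHom{P_4}$ (hard by Corollary~\ref{cor:hardnessp4}) via an essentially trivial identity reduction on the graph, exploiting a direct bijection between list homomorphisms to $P_4$ with ``full'' bipartite-side lists and independent sets in the input graph.

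Concretely, let $(G, L)$ be a consistent instance of $\LHom{P_4}$ where $G$ is connected bipartite with bipartition $(X, Y)$, $L(x) = \{a, c\}$ for every $x \in X$, and $L(y) = \{b, d\}$ for every $y \in Y$; recall $P_4 = (a - b - c - d)$. I would set up the bijection: an independent set $T \subseteq V(G)$ corresponds to the list homomorphism $f$ given by $f(x) = a$ if $x \in T$ and $f(x) = c$ otherwise (for $x \in X$), and $f(y) = d$ if $y \in T$ and $f(y) = b$ otherwise (for $y \in Y$). The only pair in $\{a, c\} \times \{b, d\}$ that is not an edge of $P_4$ is $(a, d)$, and this forbidden pair corresponds to both endpoints of an edge of $G$ lying in $T$---exactly the independent-set constraint. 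Hence $\hom{(G,L)}{P_4} = \#\mathrm{IS}(G)$, and the reduction is trivially pathwidth-preserving since the graph is unchanged.

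To finish, I would argue that the hardness of Corollary~\ref{cor:hardnessp4} already applies to such full-list consistent instances. This is established by inspecting the chain of reductions: Corollary~\ref{cor:hardnessp4} is obtained from Theorem~\ref{thm:bipartitehardness} with $H = P_4$, whose reduction from $\#\CSP(2, r)$ to $\LHom{P_4}$ assigns list $S = \{a, c\}$ to each CSP variable and replaces each constraint by a gadget from Lemma~\ref{lem:basicP4}. Every gadget in Lemma~\ref{lem:basicP4} (the $\NEQ$-gadget, the $\OR_q$-gadget, and the inductive gadget built from $R_i'$ and $\NEQ$) introduces only vertices with lists $\{a, c\}$ or $\{b, d\}$, so the final $\LHom{P_4}$ instance carries only full bipartite-side lists. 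Thus an algorithm counting independent sets in $n$-vertex bipartite graphs of pathwidth $t$ in time $(2-\epsilon)^t \cdot n^{\bigO(1)}$ would immediately yield such an algorithm for these hard $\LHom{P_4}$ instances, contradicting Corollary~\ref{cor:hardnessp4} and hence the \#SETH.

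The main obstacle is verifying the ``only full bipartite-side lists are used'' property by walking through the gadget constructions. As a more self-contained alternative, I could instead realize the singleton unary relations $\{a\}, \{b\}, \{c\}, \{d\}$ through Lemma~\ref{lem:primes}: for instance, attaching a single new pendant vertex with list $\{b, d\}$ to a vertex with list $\{a, c\}$ yields $1$ homomorphism extension when the anchor maps to $a$ (only $b$ is available) and $2$ extensions when it maps to $c$ (both $b$ and $d$ work), which satisfies the coprimality hypothesis of Lemma~\ref{lem:primes}. A constant number of such applications turns any consistent $\LHom{P_4}$ instance into a full-list one with only an $\bigO(1)$ increase in pathwidth, after which the bijection above completes the reduction without any need to inspect the earlier proofs.
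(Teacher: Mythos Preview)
Your bijection between full-list consistent $\LHom{P_4}$ instances and independent sets is exactly the one the paper uses, and both of your routes to the full-list setting are correct. The paper, however, takes a different and more elementary path: starting from an \emph{arbitrary} consistent instance $(G,L')$, it simply preprocesses away the singleton lists. A vertex with list $\{b\}$ or $\{c\}$ can be deleted outright (since $b$ and $c$ are adjacent to the entire opposite side of $P_4$, so they impose no constraint on neighbors), while a vertex with list $\{a\}$ (resp.\ $\{d\}$) forces all neighbors to $b$ (resp.\ $c$), so one removes $d$ (resp.\ $a$) from the neighbors' lists and then deletes the vertex. Exhaustive application leaves a \emph{subgraph} $\widetilde{G}\subseteq G$ with every surviving list equal to $\{a,c\}$ or $\{b,d\}$ and with $\hom{(\widetilde{G},\widetilde{L})}{P_4}=\hom{(G,L')}{P_4}$; since $\widetilde{G}$ is a subgraph, pathwidth does not increase at all.

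Compared to this, your main approach is correct but couples the argument to the internals of Lemma~\ref{lem:basicP4} and the interpolation machinery (one must track that every gadget interface carries list $\{a,c\}$ and every internal vertex carries $\{a,c\}$ or $\{b,d\}$ through a chain of Turing reductions). Your alternative is also correct in spirit but needs a small patch: the single-pendant gadget you describe realizes $\{a\}$ (and by the $a\leftrightarrow d$ symmetry $\{d\}$), but it does \emph{not} realize $\{c\}$, since then the count at $a$ is $1$, violating condition~\ref{prop:no1} of Lemma~\ref{lem:primes}. A three-vertex path $x$--$y$--$z$ with lists $\{a,c\},\{b,d\},\{a,c\}$ and interface $x$ gives counts $2$ at $a$ and $3$ at $c$, which works for $\{c\}$ (and symmetrically for $\{b\}$); alternatively you could just delete the $\{b\}$- and $\{c\}$-listed vertices as the paper does. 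The paper's preprocessing buys self-containment and zero pathwidth increase; your approaches buy a cleaner conceptual picture (the hard instances are already full-list, or can be made so by a constant number of realizations) at the cost of either inspecting upstream proofs or designing a few extra gadgets.
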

\begin{proof}
We give a pathwidth-preserving reduction from $\LHom{P_4}$.
Let $(G,L)$ be an instance of $\LHom{P_4}$, where $P_4 = (a-b-c-d)$. Note that we can safely assume that $G$ is bipartite, as otherwise the answer is clearly 0.
Furthermore, similarly as in the proof of Theorem~\ref{thm:bipartitealgo}, we can assume that $G$ is connected.
Let $(X,Y)$ be a bipartition of $G$.

Let $L'$ (resp. $L''$) be the list function defined by setting $L'(x) = L(x) \cap \{a,c\}$ and $L'(y) = L(y) \cap \{b,d\}$ (resp. $L''(x) = L(x) \cap \{b,d\}$ and $L''(y) = L(y) \cap \{a,c\}$) for every $x \in X$ and $y \in Y$.
Observe that in any homomorphism $f$ from $G$ to $P_4$, either $f(X) \subseteq \{a,c\}$ and $f(Y) \subseteq \{b,d\}$,
or $f(X) \subseteq \{b,d\}$ and $f(Y) \subseteq \{a,c\}$. Thus we have
\[
\hom{(G,L)}{P_4}  = \hom{(G,L')}{P_4} + \hom{(G,L'')}{P_4}.
\]

Let us focus on computing $\hom{(G,L')}{P_4}$, as computing $\hom{(G,L'')}{P_4}$ is symmetric.
We observe that if for some vertex $v \in V(G)$ we have $L'(v) = \{a\}$ (resp. $L'(v)=\{d\}$), then we can safely remove $d$ (resp. $a$)
from the lists of all neighbors of $v$, and then delete the vertex $v$.
On the other hand, if $L'(v) =\{b\}$ or $L'(v) =\{c\}$, then we can safely remove $v$ from the graph.
Let $(\widetilde{G}, \widetilde{L})$ the the instance obtained by exhaustive application of the rules above and note that 
\[
\hom{(\widetilde{G},\widetilde{L})}{P_4}  = \hom{(G,L')}{P_4}.
\]
If the list of some vertex is empty, then $\hom{(\widetilde{G},\widetilde{L})}{P_4}=0$ and we are done.
Otherwise, every list in $\widetilde{L}$ is either $\{a,c\}$ (for vertices in $X$) or $\{b,d\}$ (for vertices in $Y$).
Consequently, $\hom{(\widetilde{G},\widetilde{L})}{P_4}$ is precisely the number of independent sets in $\widetilde{G}$ (the independent set is formed by the vertices mapped to $a$ and $d$). 

Recall that $\widetilde{G}$ is a subgraph of $G$. Thus, given a path decomposition of $G$ with width at most $t$,
we can easily obtain a path decomposition of $\widetilde{G}$ with width at most $t$.
Now the lower bound follows directly from Corollary~\ref{cor:hardnessp4}.
\end{proof}

Next, let us focus on the case that $H = K_q$ for $q \geq 3$, i.e., counting proper $q$-colorings of a given graph.
As this problem is a special case of $\LHom{K_q}$, the algorithmic statement follows immediately from Theorem~\ref{thm:algogeneral}.
However, Theorem~\ref{thm:hardgeneral} proves hardness only for counting \emph{list} $q$-colorings.
Let us show a reduction from the problem of counting list $q$-colorings to the problem of counting $q$-colorings.
For simplicity, we will only prove hardness parameterized by the treewidth of the instance graph, but using the same approach as in Theorem~\ref{thm:hardgeneral} we can obtain an analogous result parameterized by the pathwidth. 
Let us point out that the lower bound for counting proper $q$-colorings, conditioned on the SETH, follows from the result for the decision variant of the problem by Lokshtanov, Marx, Saurabh~\cite{DBLP:journals/talg/LokshtanovMS18}.

\begin{cor}\label{cor:coloring}
Let $q \geq 3$. On $n$-vertex instances with treewidth $\tw$, the number of proper $q$-colorings cannot be counted in time $(q-\epsilon)^\tw \cdot n^{\bigO(1)}$, for any $\epsilon >0$, even if a tree decomposition of width $\tw$ is given as part of the input, unless the \#SETH fails.
\end{cor}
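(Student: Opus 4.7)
The plan is to reduce from $\LHom{K_q}$ to the problem of counting proper $q$-colorings by a treewidth-preserving construction, so that Theorem~\ref{thm:hardgeneral} applied to $H = K_q$ (noting that $\irr(K_q) = q \ge 3 \ge 2$ and that $K_q$ has no bipartite component issues of the kind used in the bipartite reduction) transfers the lower bound from the list version to the non-list version. Given a list-coloring instance $(G,L)$ with $L \colon V(G) \to 2^{[q]}$, together with a tree decomposition of width $t$, I will build a graph $G'$ and a tree decomposition of $G'$ of width $t + q$ such that the number of proper $q$-colorings of $G'$ equals $q!$ times $\hom{(G,L)}{K_q}$.

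\textbf{Construction and correctness.} The construction is the standard list-enforcing gadget: take a fresh clique $U = \{u_1,\dots,u_q\}$ isomorphic to $K_q$, and for every $v \in V(G)$ and every $i \in [q]\setminus L(v)$ add the edge $vu_i$. In any proper $q$-coloring $f$ of $G'$, the vertices of $U$ receive $q$ distinct colors, determining a permutation $\pi \in S_q$ via $f(u_i)=\pi(i)$. The added edges then force $f(v)\notin\{\pi(i) : i\notin L(v)\}$, i.e.\ $f(v)\in\pi(L(v))$. For each fixed $\pi$, the number of extensions of $f|_U$ to $V(G)$ equals the number of list homomorphisms from $(G,\pi\circ L)$ to $K_q$, which by renaming colors equals $\hom{(G,L)}{K_q}$. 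Summing over the $q!$ permutations gives the identity
\[
\#\{\text{proper $q$-colorings of $G'$}\} \;=\; q!\cdot \hom{(G,L)}{K_q},
\]
and dividing by the constant $q!$ recovers the list count.

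\textbf{Treewidth preservation and conclusion.} Starting from the given tree decomposition of $G$ of width $t$, I obtain a tree decomposition of $G'$ by adding the constant-size set $U$ to every bag; this covers all edges of $K_q$ (every bag contains $U$) as well as all edges $vu_i$ (every bag containing $v$ now contains $U$). The resulting width is $t + q = t + \bigO(1)$. Thus an algorithm counting proper $q$-colorings in time $(q-\epsilon)^{\tw} \cdot n^{\bigO(1)}$ would, on $G'$ equipped with the above decomposition, count list homomorphisms $(G,L) \to K_q$ in time $(q-\epsilon)^{t+q}\cdot (n+q)^{\bigO(1)} = (q-\epsilon)^{t}\cdot n^{\bigO(1)}$, since $q$ is a fixed constant. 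This contradicts the hardness statement for $\LHom{K_q}$ provided by Theorem~\ref{thm:hardgeneral} (where one can ignore the extra bipartiteness condition, since a lower bound restricted to bipartite inputs of $\LHom{K_q}$ is a fortiori a lower bound for arbitrary inputs).

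\textbf{Expected difficulties.} There is no real obstacle here: the combinatorial correctness of the list-enforcing clique is a textbook symmetry argument, and the treewidth bound is immediate. The only small point to be careful about is that the reduction must output a \emph{tree decomposition} of $G'$ together with $G'$ so that the hypothetical algorithm can consume it, and that the $q!$ factor and the additive $q$ in the width are absorbed into the $n^{\bigO(1)}$ and the base of the exponent respectively, which is legitimate precisely because $q$ is fixed in the statement.
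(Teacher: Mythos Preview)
Your approach is the same as the paper's: add a fresh $q$-clique $U=\{u_1,\dots,u_q\}$, join $v$ to $u_i$ whenever $i\notin L(v)$, pick up the factor $q!$, and enlarge every bag by $U$ to get a decomposition of width $t+q$.

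There is, however, one technical step that the paper carries out and you omit. The hypothetical algorithm in the corollary runs in time $(q-\epsilon)^{\tw}\cdot n^{\bigO(1)}$ where $\tw$ is the \emph{actual} treewidth of the input, and it is only promised to work when handed a tree decomposition of width~$\tw$ (i.e.\ an optimal one). Your decomposition of $G'$ has width $t+q$, but nothing forces $\tw(G')=t+q$; for instance, if all lists equal $[q]$ then $G'$ is just $G$ together with a disjoint $K_q$, and $\tw(G')=\max(t,q-1)$. In such a case you are not entitled to feed your (suboptimal) decomposition to the hypothetical algorithm, and you have no efficient way to produce an optimal one.

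The paper resolves this by further attaching a biclique $K_{t+q,t+q}$ through a single vertex of $G'$, which forces the treewidth of the resulting graph $G''$ to be exactly $t+q$ and makes the width-$(t+q)$ decomposition optimal. The number of $q$-colorings of $G''$ differs from that of $G'$ by a computable multiplicative factor (the number of proper $q$-colorings of $K_{t+q,t+q}$ with one vertex's color fixed), which is absorbed just like your $q!$. With this padding step added, your argument is complete and matches the paper's proof.
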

\begin{proof}
Let $G = (V,E)$ be an $n$-vertex graph and let $L \from V \to 2^{[q]}$ be a list function.
Suppose that the treewidth of $G$ is $\tw$ and $G$ is given along with a tree decomposition $\calT$ of width $\tw$.

Let $G'$ be a graph obtained from $G$ as follows. First, we introduce a $q$-vertex clique $K$ with vertices $\{x_1,x_2,\ldots,x_q\}$.
Next, for each $v \in V$ and each $i \in [q]$, we make $v$ adjacent to $x_i$ if and only if $i \notin L(v)$.

It is straightforward to observe that $(G,L)$ is a yes-instance of list $q$-coloring if and only if $G'$ is a yes-instance of $q$-coloring.
Furthermore, each proper list coloring of $(G,L)$ corresponds to exactly $q!$ proper colorings of $G'$, one for each proper coloring of vertices of $K$.
Thus $\hom{(G,L)}{K_q} = \frac{1}{q!} \cdot \hom{G'}{K_q}$.

The number of vertices of $G'$ is $n+q$. Now let us deal with the treewidth.
We can easily modify the tree decomposition $\calT$ of $G$ into a tree decomposition $\calT'$ of $G'$ by including all vertices of $K$ in every bag.
This proves that the treewidth of $G'$ is at most $\tw + q$. Let us further modify the instance, so that the treewidth is exactly $\tw + q$. We use a trick similar to the one in the proof of Theorem~\ref{thm:hardgeneral}.

Let $v$ be an arbitrary vertex of $G'$ and let $G''$ be obtained from $G'$ by introducing $2(\tw + q)-1$ new vertices, which, together with $v$, form a biclique $K_{\tw+q,\tw+q}$. Recall that the treewidth of $K_{\tw+q,\tw+q}$ is exactly $\tw+q$, and the treewidth of $G'$ is also $\tw + q$. On the other hand we can turn the tree decomposition $\calT'$ of $G'$ into a tree decomposition $\calT''$ of $G''$ as follows.
Let $\widetilde \calT$ be an optimal tree decomposition of the biclique $K_{\tw+q,\tw+q}$.
We choose any bag of $\calT'$ containing $v$ and make it adjacent to any bag of $\widetilde \calT$ containing $v$. This way we obtain a tree decomposition of $G''$ with width $\tw + q$.
The number of vertices of $G''$ is $n + q + 2(\tw + q)-1 = \bigO(n)$.

Now, let $f(q,\tw)$ be the number of proper $q$-colorings of $K_{\tw+q,\tw+q}$ with the color of one vertex fixed.
We observe that $\hom{G'}{K_q} = \frac{1}{f(q,\tw)} \cdot \hom{G''}{K_q}$.

Thus, if we could count the number of proper $q$-colorings of $G''$ in time $(q - \epsilon)^{\tw + q} n^{\bigO(1)}$, we could count the number of list $q$-colorings of $(G,L)$ in time  $(q - \epsilon)^{\tw + q} n^{\bigO(1)} = (q - \epsilon)^{\tw} n^{\bigO(1)}$. By Theorem~\ref{thm:algogeneral}, this would contradict the \#SETH.
\end{proof}

\section{Conclusion}
Let us conclude the paper with the discussion of a potential extension of our results to the non-list variant of the $\LHom{H}$ problem, i.e., the problem of counting homomorphisms to a fixed graph $H$. Denote this problem by $\Hom{H}$.
The complexity dichotomy for $\Hom{H}$ was provided by Dyer and Greenhill~\cite{DyerG00:RSA} and it is exactly the same as for the list variant: the problem is polynomial-time solvable if every component of $H$ is either a reflexive clique or an irreflexive biclique, and otherwise it is $\#$P-complete.

While the algorithmic statement of Theorem~\ref{thm:mainthm} clearly carries over to  $\Hom{H}$ (as  $\Hom{H}$ is a restriction of $\LHom{H}$ where all lists are equal to $V(H)$), our hardness proof heavily exploits non-trivial lists.
The simple tricks we used in Corollaries~\ref{cor:bipartiteis} and~\ref{cor:coloring} to reduce the list variant of coloring to the non-list variant cannot be easily generalized to arbitrary graphs $H$.

Let us point out that the fine-grained complexity of the decision variant of $\Hom{H}$, parameterized by the treewidth of the instance graph is not fully understood~\cite{DBLP:journals/siamcomp/OkrasaR21}.

Typically, the hardness proofs concerning the complexity of (non-list) graph homomorphism problems involve some tools from universal algebra and algebraic graph theory~\cite{DBLP:journals/jct/HellN90,DBLP:journals/tcs/Bulatov05,Siggers}. For graphs $H_1,H_2,\ldots,H_p$, we define their \emph{direct product} $H_1 \times \cdots \times H_p$ as follows:
\begin{align*}
V(H_1 \times \cdots \times H_p) = & V(H_1) \times V(H_2) \times \cdots \times V(H_p),\\
E(H_1 \times \cdots \times H_p) = & \{ (x_1,\ldots,x_p)(y_1,\ldots,y_p) ~|~ x_iy_i \in E(H_i) \text{ for all } i \in [p] \}.
\end{align*}

The following observation is straightforward.

\begin{obs}
Let $G$ and $H = H_1 \times \cdots \times H_p$ be two graphs.
A function $f \from V(G) \to V(H)$ is a homomorphism if and only if for every $i \in [p]$, the function $\Pi_i(f)$ is a homomorphism from $G$ to $H_i$.
\end{obs}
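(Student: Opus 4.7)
The statement is essentially an unfolding of the definition of the direct product, so the proof will be a short direct verification rather than requiring any substantive new idea. The plan is to write $f(v)=(f_1(v),\ldots,f_p(v))$, where $f_i := \Pi_i(f) \colon V(G) \to V(H_i)$, and then show that the homomorphism condition for $f$ with respect to $H$ decomposes coordinatewise into the homomorphism conditions for the $f_i$ with respect to the $H_i$.

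Concretely, I would fix an arbitrary edge $uv \in E(G)$ and apply the definition of $E(H_1 \times \cdots \times H_p)$: we have $f(u)f(v) \in E(H)$ if and only if $f_i(u)f_i(v) \in E(H_i)$ for every $i \in [p]$. Quantifying over all edges of $G$, the statement ``$f$ is a homomorphism from $G$ to $H$'' is equivalent to ``for every $uv \in E(G)$ and every $i \in [p]$, $f_i(u)f_i(v) \in E(H_i)$'', which, by swapping the order of the quantifiers, is equivalent to ``for every $i \in [p]$, $f_i$ is a homomorphism from $G$ to $H_i$''. This yields both directions of the ``if and only if'' simultaneously.

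There is no real obstacle here, as the claim is a routine categorical fact about direct products in the category of graphs. The only minor care needed is to confirm that the projection $\Pi_i$ of a function $f\colon V(G)\to V(H_1)\times\cdots\times V(H_p)$ is well-defined as a function $V(G)\to V(H_i)$, which is immediate from the definition $\Pi_i(f)(v) := \Pi_i(f(v))$. Hence the observation will follow by a single chain of equivalences based on the definitions of homomorphism and of the direct product.
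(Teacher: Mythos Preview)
Your proposal is correct and is exactly the intended argument; the paper itself does not spell out a proof at all, calling the observation ``straightforward'' and leaving it at that. Your coordinatewise unfolding of the definition of the direct product is the canonical way to verify this fact, so there is nothing to add.
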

As an immediate consequence, we obtain 
\[
\hom{G}{H_1 \times \cdots \times H_p} = \prod_{i =1}^p \hom{G}{H_i}.
\]
In other words, if $H$ can be obtained as a direct products of some factors $H_1,H_2,\ldots,H_p$, we can reduce solving $\Hom{H}$ to solving $\Hom{H_i}$ for all $i$ (with the same instance graph).
As $\irr(H_i)$ can be much smaller than $\irr(H)$, we observe that for some graphs $H$, the $\Hom{H}$ problem can be solved faster than the $\LHom{H}$ problem.
In particular, while it was the truth for $\LHom{H}$, the parameter $\irr(H)$ is not always the correct base of the exponential factor appearing in the complexity of an optimal algorithm solving $\Hom{H}$.

\bibliography{CountLHomTW}
\end{document}